\tikzstyle{background}=[rectangle,fill=gray!10, inner sep=0.1cm, rounded corners=0mm]
\tikzstyle{nloc}=[draw, text badly centered, rectangle, rounded corners, minimum size=2em,inner sep=0.5em]
\tikzstyle{background}=[rectangle,fill=gray!10, inner sep=0.1cm, rounded corners=0mm]
\tikzstyle{loc}=[draw,rectangle,minimum size=1.4em,inner sep=0em]
\tikzset{
    gluon/.style={decorate,draw=black,
        decoration={coil,amplitude=1pt, segment length=5pt}} 
}
\tikzset{
    gluonew/.style={decorate,draw=black,
        decoration={coil,amplitude=1pt, segment length=2pt}} 
}
\tikzset{
    gluon1/.style={decorate,draw=black,
        decoration={coil,amplitude=3pt, segment length=3pt}} 
}
\theoremstyle{plain}
\newtheorem{claim}[thm]{Claim}
\newtheorem{definition}[thm]{Definition}
\newtheorem{lemma}[thm]{Lemma}
\newtheorem{theorem}[thm]{Theorem}
\newtheorem{corollary}[thm]{Corollary}
\newcommand{\AKMV}{\ensuremath{\mathcal{A}^{K,M}_{\textsf{valid}}}\xspace}
\newcommand{\AKMS}{\ensuremath{\mathcal{A}^{K,M}_{\Sys}}\xspace}
\newcommand{\TCW}{\ensuremath{\mathsf{TCW}}\xspace}
\newcommand{\TCWs}{\ensuremath{\mathsf{TCWs}}\xspace}
\newcommand{\STCW}{\ensuremath{\mathsf{STCW}}\xspace}
\newcommand{\STCWs}{\ensuremath{\mathsf{STCWs}}\xspace}
\newcommand{\KTCW}{\STCW^{K}\xspace}
\newcommand{\KMTCW}{\STCW^{K,M}\xspace}
\newcommand{\Sys}{\ensuremath{\mathcal{S}}\xspace}
\newcommand{\Lang}{\ensuremath{\mathcal{L}}\xspace}
\newcommand{\op}{\ensuremath{\mathsf{op}}\xspace}
\newcommand{\nop}{\ensuremath{\mathsf{nop}}\xspace}
\newcommand{\push}{\ensuremath{\downarrow}\xspace}
\newcommand{\pop}{\ensuremath{\uparrow}\xspace}
\newcommand{\EP}{\ensuremath{\mathsf{EP}}\xspace}
\newcommand{\atomicSTT}{\ensuremath{\mathsf{atomicSTT}}\xspace}
\newcommand{\STT}{\ensuremath{\mathsf{STT}}\xspace}
\newcommand{\STTs}{\ensuremath{\mathsf{STTs}}\xspace}
\newcommand{\kSTT}{$K$-\STT\xspace}
\newcommand{\kSTTs}{$K$-\STTs\xspace}
\newcommand{\stt}{\tau}
\newcommand{\add}[3]{\mathop{\mathsf{Add}_{#1,#2}^{#3}}}
\newcommand{\forget}[1]{\mathop{\mathsf{Forget}_#1}}
\newcommand{\stcw}{\ensuremath{\mathcal{V}}\xspace}
\newcommand{\Events}{V}
\newcommand{\labelroot}{\mathsf{labroot}}
\newcommand{\image}{\mathsf{Im}}
\newcommand{\rename}[2]{\mathop{\mathsf{Rename}_{#1,#2}}}
\newcommand{\sttunion}{\oplus}
\newcommand{\N}{\mathbb{N}}
\newcommand{\sem}[1]{\llbracket #1 \rrbracket}
\newcommand{\matchrel}{\curvearrowright}
\newcommand{\procrel}{\rightarrow}
\newcommand{\hole}{\dashrightarrow}
\newcommand{\width}{\ensuremath{\mathsf{width}}\xspace}
\newcommand{\cA}{{\mathcal{A}}} 
 \newcommand{\cL}{{\mathcal {L}}}
\newcommand{\cI}{{\mathcal{I}}} 
\newcommand{\cO}{{\mathcal {O}}}
\newcommand{\true}{\textsf{tt}} 
\newcommand{\false}{\textsf{ff}} 
\newcommand{\dom}{\ensuremath{\textsf{{\sffamily dom}}}}
\newcommand{\stc}{\stcw=(V,\procrel\cup{\hole},(\matchrel^I)_{I\in\mathcal{I}},\lambda)}
\newcommand{\MPDA}{\ensuremath{\mathsf{dtMPDA}}\xspace}
\begin{document}

\title{Analyzing Timed Systems Using Tree Automata}

\author{S. Akshay}
\address{Dept. of CSE, IIT Bombay, Powai, Mumbai 400076, India}
\email{akshayss@cse.iitb.ac.in}

\author{Paul Gastin}
\address{LSV, ENS-Cachan, CNRS, Universit\'e Paris-Saclay, 94235 Cachan, France}
\email{paul.gastin@lsv.ens-cachan.fr}      
  
\author{Shankara Narayanan Krishna}
\address{Dept. of CSE, IIT Bombay, Powai, Mumbai 400076, India}
\email{krishnas@cse.iitb.ac.in}

\thanks{The authors gratefully acknowledge support from DST-CEFIPRA projects AVeRTS and EQuaVe, UMI-ReLaX and DST-INSPIRE faculty award [IFA12-MA-17].}

\keywords{Timed automata, tree automata, pushdown systems, tree-width}
\subjclass{F.1.1 Models of Computation}

\begin{abstract}
  Timed systems, such as timed automata, are usually analyzed using their
  operational semantics on timed words.  The classical region abstraction for
  timed automata reduces them to (untimed) finite state automata with the same
  time-abstract properties, such as state reachability.  We propose a new
  technique to analyze such timed systems using finite tree automata instead of
  finite word automata.  The main idea is to consider timed behaviors as graphs
  with matching edges capturing timing constraints.  When a family of graphs has
  bounded tree-width, they can be interpreted in trees and MSO-definable 
  properties of such graphs can be checked using tree automata.
  The technique is quite general and applies to many timed systems.
  In this paper, as an example, we develop the technique on timed pushdown
  systems, which have recently received considerable attention.  Further, we
  also demonstrate how we can use it on timed automata and timed multi-stack
  pushdown systems (with boundedness restrictions).
\end{abstract}

\maketitle

\section{Introduction}
\label{sec-intro}
The advent of timed automata \cite{AD94} marked the beginning of an era in  the
verification of real-time systems.  Today, timed automata form one of the well accepted real-time modelling formalisms, using real-valued variables called clocks to capture time constraints.  The decidability of the emptiness problem for timed automata is achieved using the notion of region abstraction.  This gives a sound and finite abstraction of an infinite state system, and has paved the way for state-of-the-art tools like UPPAAL~\cite{uppaal}, which have successfully been used in the verification of several complex timed systems. In recent times \cite{lics12,CL15,hscc15} there has been a lot of interest in the theory of verification of more complex timed systems enriched with features such as concurrency, communication between components and recursion with single or multiple threads.  In most of these approaches, decidability has been obtained by cleverly extending the fundamental idea of region or zone abstractions.

In this paper, we give a technique for analyzing timed systems in general, inspired from a completely different approach based on graphs and tree automata. This approach has been exploited for analyzing various types of \emph{untimed systems}, e.g., \cite{MP11,cyriac-phd}. The basic template of this approach has three steps: (1) capture the behaviors of the system as graphs, (2) show that the class of graphs that are actual behaviors of the system is MSO-definable, and (3) show that this class of graphs has bounded tree-width (or clique-width or split-width), or restrict the analysis to such bounded behaviors. Then, non-emptiness of the given system boils down to the satisfiability of an MSO sentence on graphs of bounded tree-width, which is decidable by Courcelle's theorem.  
But, by providing a direct construction of the tree automaton, it is possible to obtain a good complexity for the decision procedure.

We lift this technique to deal with timed systems by abstracting timed word behaviors of timed systems as graphs consisting of untimed words with additional time-constraint edges, called words with timing constraints (TCWs). The main complication here is that a TCW describes an abstract run of the timed system, where the constraints are recorded but not checked.  The TCW corresponds to an actual concrete run if and only if it is \emph{realizable}, i.e., we can find time-stamps realizing the TCW.  Thus, we are interested in the class of graphs which are \emph{realizable} TCWs. 

For this class of graphs, the above template tells us that we need to show  (i) these graphs have a bounded tree-width and (ii) the property of being a realizable TCW is MSO-definable. Then by Courcelle's theorem we obtain a tree automaton accepting this. However, as mentioned earlier, the MSO to tree-automaton approach does not give a good complexity in terms of size of the tree automaton. To obtain an optimal complexity, instead of going via Courcelle's theorem, we directly build a tree automaton.
Using tree decompositions of graph behaviours having bounded split/tree-width and constructing tree automata proved to be a very successful technique for the analysis of \emph{untimed} infinite state systems \cite{MP11,CGK12,cyriac-phd,CG14}.  This paper opens up this powerful technique for analysis of \emph{timed} systems. 

Thus our contributions are the following. We start by showing that behaviors of timed systems can be written as words with \emph{simple} timing constraints (\STCWs), i.e., words where each position has at most one timing constraint (incoming or outgoing) attached to it. This is done by breaking each transition of the timed system into a sequence of ``micro-transitions'', so that at each micro-transition, only one timing constraint is attached.

 \begin{wrapfigure}[10]{c}[.5\columnsep]{4.2cm}
  \includegraphics[scale=.8,page=1]{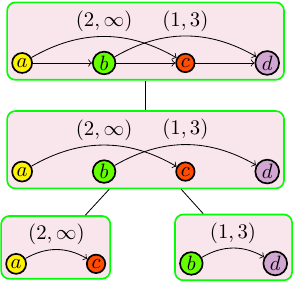}
\end{wrapfigure}
Next, we show that \STCWs that arise as behaviours of certain classes of timed systems (e.g., timed automata or timed pushdown systems) are graphs of bounded \emph{special} tree-width.  Special tree-width is a graph complexity measure, 
arising out of a special tree decomposition of a graph, as  
introduced by Courcelle in~\cite{courcelle10}. To establish the bound, 
we play a so-called \emph{split-game}, which gives a bound on what is called the \emph{split-width}, a notion that was introduced for graph behaviours of untimed systems and which has proven to be very useful for untimed systems~\cite{CGK12,cyriac-phd,CG14}. Establishing a relationship between the split-width and special tree-width, we obtain a bound on special tree-width (which also implies a bound on general tree-width as shown in~\cite{courcelle10}).  
As a result of this bound, we infer that our graphs admit  
binary tree decompositions  as depicted in the adjoining figure. Each node of the tree decomposition depicts a partial behaviour
 of the system in a  bounded manner.  By combining these behaviours as we go up the tree, we obtain a full behaviour of the system.

Our final and most technically challenging step is to construct tree automata
that works on such tree decompositions and accepts only those whose roots are
labeled by \emph{realizable} \STCWs generated by the timed system $\Sys$.  Thus,
checking non-emptiness of the timed system reduces to checking non-emptiness for
this tree automaton (which is PTIME in the size of the tree automaton).  The
construction of this tree automaton is done in two phases.  First, given the
bound on special tree-width of the graph behaviours of $\Sys$, we construct a
tree automaton that accepts all such trees whose roots are labeled by realizable
\STCWs with respect to the maximal constant given by the system, and whose nodes
are all bounded.  For this, we need to check that the graphs at the root are
indeed words with valid timing constraints and there exists a time-stamping that
realizes the \STCW. If we could maintain the partially constructed \STCW in a
state of the tree automaton along with a guess of the time-stamp at each vertex,
we could easily check this.  However, the tree automaton has only finitely many
states, so while processing the tree bottom-up, we need a finite abstraction of
the \STCW which remembers only finitely many time values.  We show that this is
indeed possible by coming up with an abstraction where it is sufficient to
remember the modulo $M$ values of time stamps, where $M$ is one more than the
maximal constant that appears in the timed system.

In the second phase, we refine this tree automaton to obtain another tree automaton that only accepts those trees that are generated by the system. Yet again, the difficulty is to ensure correct matching of (i) clock constraints between points   where a clock is reset and a constraint is checked, (ii) push and pop transitions by keeping   only finite amount of information in the states of the tree automaton.    Once both of these are done, the final tree automaton satisfies all our constraints.

To illustrate the technique, we have reproved the decidability of non-emptiness
of timed automata and timed pushdown automata (TPDA), by showing that both these
models have a split-width ($|X|+4$ and $4|X|+6$) that is linear in the number of
clocks, $|X|$, of the given timed system.  This bound directly tells us the
amount of information that we need to maintain in the construction of the tree
automata.  For TPDA we obtain an \textsc{ExpTime} algorithm, matching the known
lower-bound for the emptiness problem of TPDA~\cite{lics12}.  For timed automata,
since the split-trees are word-like (at each binary node, one subtree is small)
we may use word automata instead of tree automata, reducing the complexity from
\textsc{ExpTime} to \textsc{PSpace}, again matching the lower-bound~\cite{AD94}.
Interestingly, if one considers TPDA with no explicit clocks, but the stack is
timed, then the split-width is a constant, 2.  In this case, we have a
polynomial time procedure to decide emptiness, assuming a unary encoding of
constants in the system.  To further demonstrate the power of our technique, we
derive a new decidability result for non-emptiness of timed multi-stack pushdown
automata under bounded rounds, by showing that the split-width of this model is
again linear in the number of clocks, stacks and rounds.  Exploring decidable
subclasses of untimed multi-stack pushdown systems is a very active research
area \cite{A12,CHL12,latin10,TMP07,TNP14}, and our technique can extend these
to handle time.

It should be noticed that the tree automaton used to check emptiness of the
timed system is essentially the intersection of two tree automata.  The tree
automaton for validity/realizability (Section~\ref{sec:AKMV}), by far the most
involved construction, is independent of the timed system under study.  The
second tree automaton depends on the system and is rather easy to construct
(Section~\ref{sec:TA-sys}).  Hence, to apply the technique to other systems, one
only needs to prove the bound on split-width and to show that their runs can be
captured by tree automata.  This is a major difference compared to many existing techniques for timed systems which are highly system dependent. For instance, for the well-established models of TPDA, that we considered above, in~\cite{BER94,lics12} it is shown that the basic problem of checking emptiness is decidable (and $\mathsf{EXPTIME}$-complete) by re-adapting the technique of region abstraction each time (and possibly untiming the stack) to obtain an untimed pushdown automaton.  Finally, an orthogonal approach to deal with timed systems was developed in~\cite{CL15}, where the authors show the decidability of the non-emptiness problem for a class of timed pushdown automata by reasoning about sets with timed-atoms.

An extended abstract of this paper was presented in~\cite{concur16}. There are however, significant differences from that version as we detail now. First, the technique used to prove the main theorem of building a tree automaton to check realizability is completely different. In~\cite{concur16}, we showed an automaton which checks for non-existence of negative weight cycles (which implies the existence of a realizable time-stamping). This required a rather complicated proof to show that the constants can be bounded. In fact, we first build an infinite state tree automaton and then show that we can get a finite state abstraction for it. In contrast, our proof in this article directly builds a tree automaton that checks for existence of time-stamps realizing a run. This allows us to improve the complexity and also gives a less involved proof. 
Second, we complete the proof details and compute the complexity for tree automata for multistack pushdown systems with bounded round restriction, which was announced in \cite{concur16}. All sketches from the earlier version have been replaced and enhanced by rigorous proofs in this article. Further, several supporting examples and intuitive explanations have been added throughout to aid in understanding.


\section{Graphs for behaviors of timed systems}
\label{sec-prelims}
We fix an alphabet $\Sigma$ and use $\Sigma_\varepsilon$ to denote $\Sigma\cup \{\varepsilon\}$ where $\varepsilon$ is the silent action.  For a non-negative integer $M$, we also fix $\cI(M)$, to be a finite set of closed intervals whose endpoints are integers between $0$ and $M$, and which contains the special interval $[0,0]$. When $M$ is irrelevant or clear from the context, we just write $\cI$ instead of $\cI(M)$. Further, for an interval $I$, we will sometimes use $I.up$ to denote its upper/right end-point and $I.lo$ to denote its lower/left end-point. For a set $S$, we use ${\leq}\subseteq {S\times S}$ to denote a partial or total order on $S$.  For any $x,y\in S$, we write $x<y$ if $x\leq y$ and $x\neq y$, and $x\lessdot y$ if $x<y$ and there does not exist $z\in S$ such that $x<z<y$.

\subsection{Preliminaries: Timed words and timed (pushdown) automata}
An \emph{$\varepsilon$-timed word} is a sequence $w=(a_1,t_1)\ldots(a_n,t_n)$
with $a_1\ldots a_n\in\Sigma_\varepsilon^+$ and $(t_i)_{1\leq i\leq n}$ is a
non-decreasing sequence of real time values.  If $a_i\neq\varepsilon$ for all $1\leq i\leq n$, then $w$ is a \emph{timed word}.  The projection on $\Sigma$ of an $\varepsilon$-timed word is the timed word obtained by removing $\varepsilon$-labelled positions.  We define the two basic system models that we consider in this article.

Dense-timed pushdown automata (TPDA), introduced in \cite{lics12}, are an extension of timed automata~\cite{AD94}, and operate on a finite set of real-valued clocks and a stack which holds symbols with their ages.  The age of a symbol in the stack represents time elapsed since it was pushed on to the stack. Formally, a TPDA $\Sys$ is a tuple $(S, s_{0}, \Sigma, \Gamma, \Delta, X, F)$ where $S$ is a finite set of states, $s_{0} \in S$ is the initial state, $\Sigma$, $\Gamma$, are respectively a finite set of input, stack symbols, $\Delta$ is a finite set of transitions, $X$ is a finite set of real-valued variables called clocks, $F\subseteq S$ are final states.  A transition $t \in \Delta$ is a tuple $(s, \gamma, a, \op, R, s')$ where $s, s' \in S$, $a\in \Sigma$, $\gamma$ is a finite conjunction of atomic formulae of the kind $x\in I$ for $x \in X$ and $I \in \mathcal{I}$, $R \subseteq X$ are the clocks reset, $\op$ is one of the following stack operations:
\begin{enumerate}
  \item \nop does not change the contents of the stack,
  \item $\push_c$ where $c \in \Gamma$ is a push operation that adds $c$ on top of the stack, with age 0.
  \item $\pop^I_c$ where $c \in \Gamma$ is a stack symbol and $I \in \mathcal{I}$ is an interval, is a pop operation 
  that removes the top most symbol of the stack if it is a $c$ with age in the interval $I$. 
\end{enumerate}
Timed automata (TA) can be seen as TPDA using \nop operations only.  This
definition of TPDA is equivalent to the one in~\cite{lics12}, but allows
checking conjunctive constraints and stack operations together.  In \cite{CL15},
it is shown that TPDA of~\cite{lics12} are expressively equivalent to timed
automata with an untimed stack.  Nevertheless, our technique is oblivious to
whether the stack is timed or not, hence we focus on the syntactically more
succinct model of TPDA with timed stacks and get good complexity bounds.

The operational semantics of TPDA and TA can be given in terms of timed words and we refer to~\cite{lics12} for the formal definition. Instead, we are interested in an alternate yet equivalent semantics for TPDA using graphs with timing constraints, that we define next.

\subsection{Abstractions of timed behaviors as graphs}
\label{tcwords}
\begin{definition}
  A \emph{word with timing constraints} (\TCW) over $\Sigma,\cI$ is a structure $\stcw=(V,\procrel,(\matchrel^I)_{I\in\mathcal{I}},\lambda)$
where $V$ is a finite set of positions (also refered to as vertices or points), $\lambda\colon V\to \Sigma_\varepsilon$ labels each position,  the reflexive transitive closure ${\leq}={\procrel}^*$ is a total order on $V$ and ${\procrel}={\lessdot}$ is the successor relation,  ${\matchrel^I}\subseteq{<}={\procrel}^+$ gives the pairs of positions carrying a timing constraint associated with the interval $I$.
\end{definition} 
For any position $i \in V$, the \emph{indegree} (resp.\ \emph{outdegree}) of $i$
is the number of positions $j$ such that $(j,i) \in \matchrel$ (resp.\ $(i,j)
\in \matchrel$). A \TCW is \emph{simple} (denoted \STCW) if each position has at most one timing constraint
(incoming or outgoing) attached to it, i.e., for all $i\in V$,
\emph{indegree($i$) + outdegree($i$)} $\leq$ 1.  A
\TCW is depicted below with positions $1,2, \dots, 5$ labelled over $\{a,b\}$.
\emph{indegree}(4)=1, \emph{outdegree}(1)=1 and \emph{indegree}(3)=0.  The curved edges decorated with intervals connect the positions related by $\matchrel$,
while straight edges are the successor relation $\procrel$.  Note that this \TCW is
\emph{simple}.

\hfil
\includegraphics[scale=.6,page=2]{tikz-pics}
\hfil

Consider a \TCW $\stcw=(V,\procrel,(\matchrel^I)_{I\in\mathcal{I}},\lambda)$ with
$V=\{1,\ldots,n\}$.  A timed word $w$ is a \emph{realization} of $\stcw$ if it is the projection on $\Sigma$ of an $\varepsilon$-timed word
$w'=(\lambda(1),t_1)\ldots(\lambda(n),t_n)$ such that $t_j-t_i\in I$
for all $(i,j)\in{\matchrel^I}$. In other words, a \TCW is realizable 
if there exists a timed word $w$ which is a realization of $\stcw$. For example, the timed word $(a,0.9)(b,2.1)(a,2.1)(b,3.9)(b,5)$ is a realization of the \TCW depicted  above, while $(a,1.2)(b,2.1)$ $(a,2.1)(b,3.9)(b,5)$ is not. 

\subsection{Semantics of TPDA (and TA) as simple \TCWs}\label{sec:tpda-sem}
We define the semantics in terms of simple \TCWs.  An \STCW
$\stcw=(V,\procrel,(\matchrel^I)_{I\in\mathcal{I}},\lambda)$
 is said to be generated or
accepted by a TPDA $\Sys$ if there is an accepting abstract run
$\rho=(s_0,\gamma_1,a_1,\op_1,R_1,s_1)$ $(s_1,\gamma_2,a_2,\op_2,R_2,s_2)\cdots$
$(s_{n-1},\gamma_n,a_n,\op_n,R_n,s_n)$ of $\Sys$ such that, $s_n\in F$ and
\begin{itemize}
  \item the sequence of push-pop operations is well-nested: in each prefix
  $\op_1\cdots\op_k$ with $1\leq k\leq n$, number of pops is at most number of pushes, and in the full sequence $\op_1\cdots\op_n$, they are equal.
  
  \item We have $V=V_0\uplus V_1\uplus \cdots \uplus V_n$ with $V_i\times 
  V_j\subseteq{\procrel}^+$ for $0\leq i<j\leq n$.
  Each transition $\delta_i=(s_{i-1},\gamma_i,a_i,\op_i,R_i,s_i)$ gives rise to
  a sequence of consecutive points $V_i$ in the \STCW. The transition $\delta_i$
  is simulated by a sequence of ``micro-transitions'' as depicted below (left)
  and it represents an \STCW shown below (right).  Incoming red edges
  check guards from $\gamma_i$ (wrt different clocks) while outgoing green edges
  depict resets from $R_i$ that will be checked later.  
  Further, the outgoing edge on the central node labeled $a_i$ represents a push
  operation on stack.  Finally, the blue $[0,0]$ labeled edge denotes a $0$-delay constraint as explained below.

  \noindent\smallskip 
  \includegraphics[width=85mm,page=2]{gpicture-pics}
  \hfill
  \includegraphics[scale=.8,page=4]{tikz-pics}
 
  \noindent\smallskip 
  where $\gamma_i=\gamma_i^1\wedge\cdots\wedge\gamma_i^{h_i}$ are conjunctions of atomic clock constraints and
  $R_i=\{x_1,\ldots,x_m\}$ are resets.  The first and last micro-transitions, corresponding
  to the reset of a new clock $\zeta$ and checking of constraint $\zeta=0$ ensure
  that all micro-transitions in the sequence occur simultaneously.  
  We have a point in $V_i$ for each micro-transition (excluding the
  $\varepsilon$-micro-transitions between $\delta_i^{x_j}$).  Hence, $V_i$
  consists of a sequence
 $ \ell_i \procrel \ell_i^{1}  \procrel \cdots \procrel \ell_i^{h_i}
  \procrel p_i \procrel 
  r_i^{1} \procrel \cdots \procrel r_i^{g_i} \procrel r_i $
  where $g_i$ is the number of timing constraints corresponding to clocks 
  reset during transition $i$ and checked afterwards. 
  Thus, the reset-loop on a clock is fired as many times (0 or more) as a
  constraint is checked on this clock until its next reset.  This ensures that
  the \STCW remains \emph{simple}.
  Similarly, $h_i$ 
 is the number of timing constraints checked in $\gamma_i$. 
  We have $\lambda(p_i)=a_i$ and all other points are labelled $\varepsilon$.
  The set $V_0$ encodes the initial resets of clocks that will be checked 
  before being reset.  So we let $R_0=X$ and $V_0$ is 
 $ \ell_0 \procrel
  r_0^{1} \procrel \dots \procrel r_0^{g_0} \procrel r_0 \,.
 $ 

  \item for each $I\in \cI$, the relation for timing constraints can be partitioned as ${\matchrel^I}={\matchrel^{s\in I}}\uplus\biguplus_{x\in X\cup\{\zeta\}}{\matchrel^{x\in I}}$ where
  \begin{itemize}
    \item ${\matchrel}^{\zeta\in I}=\{(\ell_i,r_i)\mid 0\leq i\leq n\}$ and $I=[0,0]$. 
  
    \item We have $p_i\matchrel^{s\in I} p_j$ if $\op_i={\push}_b$ is a push and
    $\op_j={\pop}_b^{I}$ is the matching pop (same number of pushes and pops in 
    $\op_{i+1}\cdots\op_{j-1}$).
    \item for each $0\leq i<j\leq n$ such that the $t$-th conjunct of 
    $\gamma_j$ is $x\in I$ and $x\in R_i$ and $x\notin R_k$ for $i<k<j$, we 
    have $r_i^s\matchrel^{x\in I} \ell_j^t$ for some $1\leq s\leq g_i$. 
    Therefore, every point $\ell_i^t$ with $1\leq
    t\leq h_i$ is the target of a timing constraint.
    Moreover, every reset point $r_i^s$ for $1\leq s\leq g_i$ should be the 
    source of a timing constraint. That is, denoting $\matchrel^x=\bigcup_{I'\in \cI} \matchrel^{x\in I'}$, we must have $r_i^s\in\dom({\matchrel}^x)$ for some $x\in 
    R_i$.  Also, for each $i$, the reset points $r_i^{1},\ldots,r_i^{g_i}$ are
    grouped by clocks (as suggested by the sequence of 
    micro-transitions simulating $\delta_i$): if $1\leq s<u<t\leq g_i$ and 
    $r_i^s,r_i^t\in\dom({\matchrel}^x)$ for some $x\in R_i$ then 
    $r_i^u\in\dom({\matchrel}^x)$.
        Finally, for each clock, we require that the timing constraints are
    well-nested: for all $u\matchrel^x v$ and $u'\matchrel^x v'$, 
    with $u, u' \in V_i$, 
    if $u<u'$ then $u'<v'<v$.
  \end{itemize}
\end{itemize}
We denote by $\STCW(\Sys)$ the set of simple \TCWs generated by $\Sys$ and define the language of $\Sys$ as the set of timed words that are \emph{realizations} of \STCWs generated by $\Sys$, i.e., $\cL(\Sys)=\{w\mid w\text{ is a realization of } \stcw\in \STCW(\Sys)\}$.  Indeed, this is equivalent to defining the language as the set of timed words accepted by $\Sys$, according to a usual operational semantics \cite{lics12}.  

The $\STCW$ semantics of timed automata (TA) can be obtained from the above
discussion by just ignoring the stack components (using \nop operations only).
To illustrate these ideas, a simple example of a timed automaton and an $\STCW$ that is generated by it is shown in Figure~\ref{ex-ta}.

\begin{figure}[h]
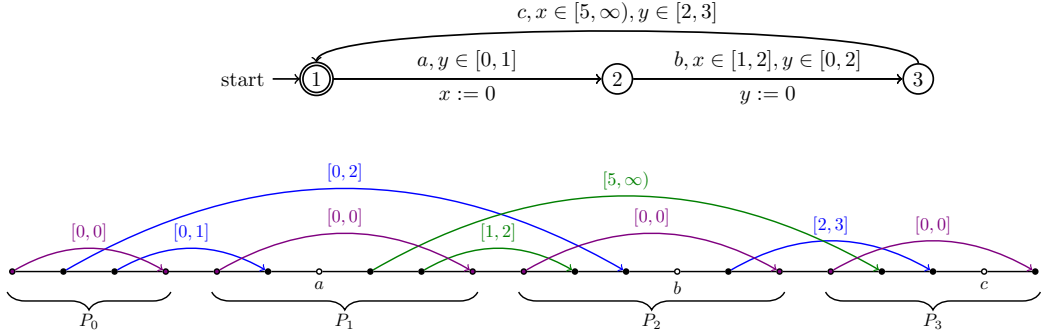

  \hfil
  \includegraphics[scale=.8,page=5]{tikz-pics}

\vspace*{7mm}
  
  \hfil
  \includegraphics[scale=1,page=6]{tikz-pics}
  
  \caption{A timed automaton (top) with 2 clocks $x,y$.  An $\STCW$ generated by
  an accepting run of the TA is depicted just below.  The blue edges represent
  matching relations induced by clock $y$, while the green represent those
  induced by clock $x$.  The violet edges are the $[0,0]$ timing constraints for
  the extra clock $\zeta$ ensuring that all points in some $V_i$ representing
  transition $i$ occur precisely at the same time.  Black lines are process
  edges.}
  \label{ex-ta}
\end{figure}

\noindent We now identify some important properties satisfied by \STCWs
generated from a TPDA. Let $\stcw=(V,\procrel,(\matchrel^I)_{I\in\mathcal{I}},\lambda)$
be a
\STCW.
We say that \stcw is \emph{well timed} w.r.t.\ a set of clocks $Y$ and a stack
$s$ if for each interval $I\in \cI$, the $\matchrel^I$ relation can be partitioned as
${\matchrel^I}={\matchrel^{s\in I}}\uplus\biguplus_{x\in Y}{\matchrel^{x\in I}}$ where
\begin{enumerate}[label=$(\mathsf{T}_{\arabic*})$,ref=$\mathsf{T}_{\arabic*}$]
  \item\label{item:T1}
  the stack relation $\matchrel^s=\bigcup_{I\in \cI} \matchrel^{s\in I}$ corresponds to the matching push-pop events, 
  hence it is well-nested: for all $i\matchrel^s j$ and $i'\matchrel^s j'$, if 
  $i<i'<j$ then $i'<j'<j$.
  \item\label{item:T2} For each clock $x\in Y$, the relation $\matchrel^x=\bigcup_{I\in \cI} \matchrel^{x\in I}$ corresponds to the timing constraints for clock $x$ and is well-nested: for all $i\matchrel^x j$ and $i'\matchrel^x j'$, if $i<i'$ are in the same  \emph{$x$-reset block} (i.e., a maximal consecutive sequence
  $i_1\lessdot\cdots\lessdot i_n$ of positions in the domain of $\matchrel^x$),
  and $i<i'<j$, then $i'<j'<j$.  Each guard should be matched with the closest
  reset block on its left: for all $i\matchrel^x j$ and $i'\matchrel^x j'$, if
  $i<i'$ are not in the same $x$-reset block then $j<i'$  (see Figure~\ref{fig:well-timed}).
\end{enumerate}

\begin{figure}[tbp]
  \centering
  \includegraphics[width=8.5cm,page=1]{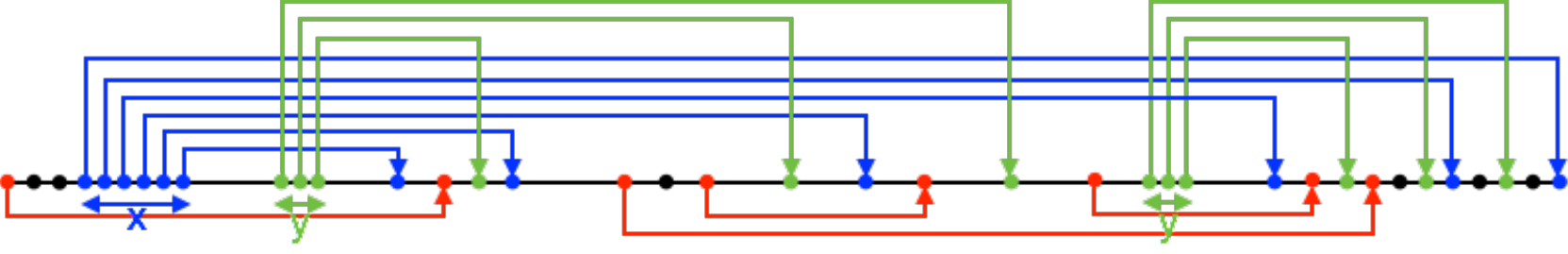}
  \caption{Well-timed: relations $\matchrel^s$ in red, $\matchrel^x$ in blue 
  and  $\matchrel^y$ in green.}
  \label{fig:well-timed}
\end{figure}

It is then easy to check that \STCWs defined by a TPDA with set of clocks $X$ are well-timed for the set of clocks $Y=X\cup\{\zeta\}$, i.e., satisfy the properties above: (note that we obtain the same for TA by just ignoring the stack edges, i.e., (\ref{item:T1}) above)

\begin{lemma}\label{app:well-timed}
Simple \TCWs defined by a TPDA with set of clocks $X$ are well-timed wrt.\ set 
of clock $Y=X\cup\{\zeta\}$, i.e., satisfy properties (T1) and (T2).
\end{lemma}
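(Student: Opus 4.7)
The plan is to observe that the partition of the timing relation required by the definition of well-timed is already present in the TPDA semantics, and then verify (T1) and (T2) as direct consequences of the constraints spelled out in Section~\ref{sec:tpda-sem}. Specifically, the semantics of a TPDA already gives us
$$
{\matchrel^I}={\matchrel^{s\in I}}\uplus\biguplus_{x\in X\cup\{\zeta\}}{\matchrel^{x\in I}},
$$
so we simply take this as the partition witnessing the well-timedness property, and proceed to check the two nesting conditions.

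For property (\ref{item:T1}), I would invoke the first bullet of the TPDA semantics which requires the sequence $\op_1\cdots\op_n$ to be well-nested in the usual sense (every prefix has at least as many pushes as pops, with equality on the full sequence). Since $p_i\matchrel^s p_j$ holds iff $\op_i={\push}_b$ and $\op_j={\pop}^I_b$ are the matching push-pop, the standard bracket-matching argument immediately yields that $\matchrel^s$ is well-nested as required.

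For property (\ref{item:T2}), I would treat $\zeta$ and clocks $x\in X$ separately. For $\zeta$, each $V_i$ contributes exactly one $\zeta$-edge, namely $\ell_i\matchrel^\zeta r_i$, so every $\zeta$-reset block is a singleton $\{\ell_i\}$: the within-block condition is vacuous, and for $i<i'$ the cross-block condition $r_i<\ell_{i'}$ follows from $V_i\times V_{i'}\subseteq{\procrel}^+$. For $x\in X$, the grouping clause in the semantics (``the reset points $r_i^1,\ldots,r_i^{g_i}$ are grouped by clocks'') ensures that each $x$-reset block is contained in a single $V_i$, and moreover that the domain of $\matchrel^x$ restricted to $V_i$ is a consecutive subsequence. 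The within-block well-nestedness is then exactly the last item of the semantics (``for all $u\matchrel^x v$ and $u'\matchrel^x v'$ with $u,u'\in V_i$, if $u<u'$ then $u'<v'<v$''). The between-block condition follows from the matching rule: a guard at $\ell_j^t$ with constraint $x\in I$ is matched to the most recent reset, i.e.\ to some $r_i^s$ with $x\in R_i$ and $x\notin R_k$ for $i<k<j$. Hence if $i<i'$ are in different $x$-reset blocks, there must be a reset of $x$ strictly between them, which forces the target $j$ of the edge from $i$ to satisfy $j<i'$.

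The entire argument is a matter of unpacking definitions; the only subtle point, which I would highlight, is the identification of the $x$-reset blocks required by (T2) with the clock-grouped reset-point sequences of a single $V_i$ given by the semantics. Once this correspondence is made explicit, the nesting and matching conditions of (T2) follow directly from the corresponding clauses of Section~\ref{sec:tpda-sem} with no additional calculation.
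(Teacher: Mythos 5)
Your proposal is correct and follows essentially the same route as the paper: both proofs simply unpack the clauses of the \STCW semantics in Section~\ref{sec:tpda-sem}, getting (\ref{item:T1}) from the well-nesting of the push-pop sequence and (\ref{item:T2}) from the clock-grouping, closest-reset matching, and per-transition well-nesting conditions. Your version is more detailed than the paper's (in particular the explicit identification of $x$-reset blocks with the clock-grouped runs inside a single $V_i$, and the singleton $\zeta$-blocks), but there is no difference in method.
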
 
\begin{proof}
The first condition \eqref{item:T1} is satisfied by $\STCW(\Sys)$ by definition.  For \eqref{item:T2}, let $i\matchrel^x j$ and $i'\matchrel^x j'$ for some clock $x \in X$.  If $i, i'$ are points in the same $x$-reset block for some $x \in X$, then by construction of $\STCW(\Sys)$, if $i<i'$ then $i' < j' < j$ which gives well nesting.  Similarly, if  $i<i'$ are points in different $x$-reset blocks, then by definition of  $\STCW(\Sys)$, we have $j<i'$. Also, it is clear that the new clock $\zeta$  satisfies \eqref{item:T2}.
\end{proof}


\section{Bounding the width of graph behaviors of timed systems}
\label{sec:stw-tpda}
In this section, we check if the graphs (STCWs) introduced in the previous section have a bounded tree-width. As a first step towards that, we introduce \emph{special tree terms} (\STTs) from Courcelle~\cite{courcelle10} and their semantics as labeled graphs. 

\subsection{Preliminaries: special tree terms and tree-width}
It is known \cite{courcelle10} that special tree terms using at most $K$ colors (\kSTTs) define graphs of
``special'' tree-width at most $K-1$. Formally, a $(\Sigma,\Gamma)$-labeled graph is a tuple $G=(V_G,(E_\gamma)_{\gamma\in\Gamma},\lambda)$ where $\lambda\colon V_G\to\Sigma$ is the vertex labeling and $E_\gamma \subseteq V_G^2$ is the set of edges for each label $\gamma\in\Gamma$. Special tree terms form an algebra to define labeled graphs. The syntax of \kSTTs over $(\Sigma,\Gamma)$ is given by
\[
\stt ::= (i,a) \mid \add{i}{j}{\gamma} \stt \mid \forget{i} \stt \mid
\rename{i}{j} \stt \mid \stt \sttunion \stt\]

where $a \in \Sigma$, $\gamma \in \Gamma$ and $i,j\in[K]=\{1,\ldots,K\}$ are
colors.   The semantics of each \kSTT is a colored graph $\sem\stt=(G_\stt,\chi_\stt)$ where $G_\stt$ is a $(\Sigma,\Gamma)$-labeled graph and $\chi_\stt\colon [K]\to V_G$ is a partial injective function assigning a vertex of $G_\stt$ to some colors.

\begin{itemize}
  \item $\sem{(i,a)}$ consists of a single $a$-labeled vertex with color $i$.

  \item $\add{i}{j}{\gamma}$ adds a $\gamma$-labeled edge to the vertices
  colored $i$ and $j$ (if such vertices exist).
  
  Formally, if $\sem{\stt}=(V_G,(E_\gamma)_{\gamma\in\Gamma},\lambda,\chi)$ then
  $\sem{\add{i}{j}{\alpha}\stt}=(V_G,(E'_\gamma)_{\gamma\in\Gamma},\lambda,\chi)$ 
  with $E'_\gamma=E_\gamma$ if $\gamma\neq\alpha$ and
  $E'_\alpha= \begin{cases}
    E_\alpha & \text{if } \{i,j\}\not\subseteq\dom(\chi) \\
    E_\alpha\cup\{(\chi(i),\chi(j))\} & \text{otherwise.}
  \end{cases}$
  
  \item $\forget{i}$ removes color $i$ from the domain of the color map.
  
  Formally, if $\sem{\stt}=(V_G,(E_\gamma)_{\gamma\in\Gamma},\lambda,\chi)$ then
  $\sem{\forget{i}\stt}=(V_G,(E_\gamma)_{\gamma\in\Gamma},\lambda,\chi')$ 
  with $\dom(\chi')=\dom(\chi)\setminus\{i\}$ and $\chi'(j)=\chi(j)$ for all 
  $j\in\dom(\chi')$.
  
    \item $\rename{i}{j}$ exchanges the colors $i$ and $j$.
  
  Formally, if $\sem{\stt}=(V_G,(E_\gamma)_{\gamma\in\Gamma},\lambda,\chi)$ then
  $\sem{\rename{i}{j}\stt}=(V_G,(E_\gamma)_{\gamma\in\Gamma},\lambda,\chi')$ 
  with $\chi'(\ell)=\chi(\ell)$ if $\ell\in\dom(\chi)\setminus\{i,j\}$, 
  $\chi'(i)=\chi(j)$ if $j\in\dom(\chi)$ and $\chi'(j)=\chi(i)$ if 
  $i\in\dom(\chi)$.
  
  \item Finally, $\sttunion$ constructs the disjoint union of the two graphs
  provided they use different colors.  This operation is undefined otherwise.
  
  Formally, if $\sem{\stt_i}=(G_i,\chi_i)$ for $i=1,2$ and 
  $\dom(\chi_1)\cap\dom(\chi_2)=\emptyset$ then 
  $\sem{\stt_1\sttunion\stt_2}=(G_1\uplus G_2,\chi_1\uplus\chi_2)$.
  Otherwise, $\stt_1\sttunion\stt_2$ is not a valid \STT.
\end{itemize}

The \emph{special tree-width} of a graph $G$ is defined as the least $K$ such that $G=G_\stt$ for some $(K+1)$-\STT $\stt$. See~\cite{courcelle10} for more details and its relation to tree-width. For \TCWs, we have successor edges and
$\matchrel^I$-edges carrying timing constraints, so we take
$\Gamma=\{\procrel\}\cup\{(x,y)\mid x\in\N, y\in\overline{\N}\}$ with
$\overline{\N}=\N\cup\{\infty\}$. In this paper, we will actually make use of \STTs with the following restricted syntax, which are sufficient and make our proofs simpler:
\begin{align*}
  \atomicSTT &::= (1,a) \mid \add{1}{2}{x,y}((1,a)\oplus(2,b)) \\
  \stt &::= \atomicSTT \mid \add{i}{j}{\procrel} \stt \mid \forget{i} \stt \mid
  \rename{i}{j} \stt \mid \stt \sttunion \stt 
\end{align*}
with $a,b\in\Sigma_\varepsilon$, $0\leq x<M$, $0\leq y<M$ or $y=+\infty$ for some $M \in \mathbb{N}$ and
$i,j\in[2K]=\{1,\ldots,2K\}$.  The terms defined by this grammar are called $(K,M)$-\STTs. Here, timing constraints are added directly between leaves in atomic \STTs which are then combined using disjoint unions and adding successor edges. 
For instance, consider  the 4-\STT given below 
$$\stt= \forget{3} \add{1}{3}{\procrel} \forget{2} \add{2}{4}{\procrel} \add{3}{2}{\procrel}
(\add{1}{2}{2,\infty}((1,a)\sttunion(2,c))  \sttunion
\add{3}{4}{1,3}((3,b)\sttunion(4,d)))$$

\noindent where $\add{i}{j}{\gamma}((i,\alpha)\sttunion(j,\beta))$,
$i,j\in\mathbb{N}, \alpha,\beta\in \Sigma_\varepsilon$ stands for $\rename{1}{i}
\rename{2}{j} \add{1}{2}{\gamma}((1,\alpha)\sttunion(2,\beta)).$
The term $\stt$ is depicted as a binary tree on the left of
Figure~\ref{fig:stcw} and its semantics $\sem{\stt}$ is the \STCW depicted at
the root of the tree in right of Figure~\ref{fig:stcw}, where only endpoints
labelled $a$ and $d$ are colored, as the other two colors were ``forgotten'' by
$\stt$.  The entire ``split-tree'' is depicted in the right of
Figure~\ref{fig:stcw} as will be explained in the next sub-section.  Abusing
notation, we will also use $\sem{\stt}$ for the graph $G_\tau$ ignoring the
coloring $\chi_t$.

\begin{figure}[h]
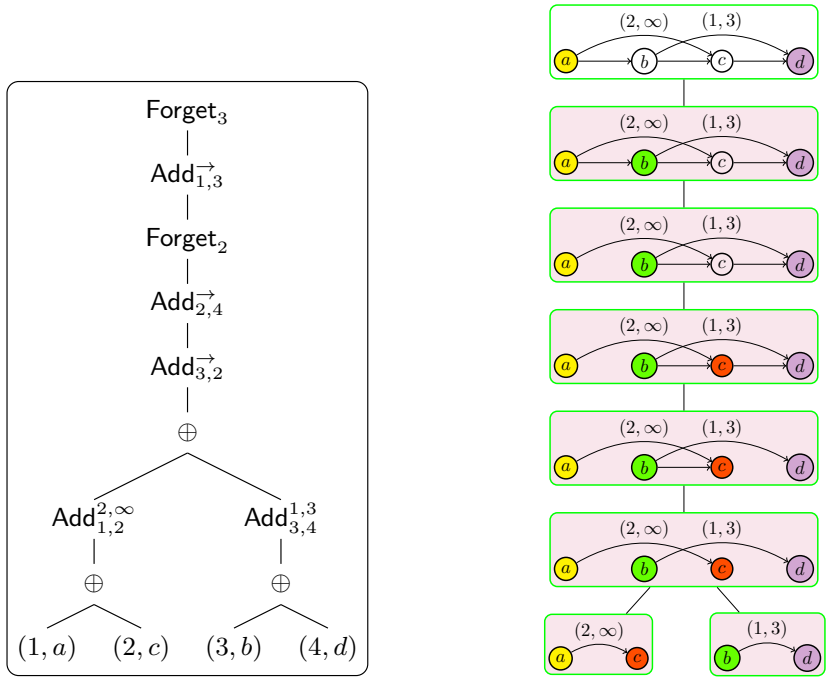

  \begin{center}
    \includegraphics[scale=1.1,page=15]{tikz-pics}
    \hfil
    \includegraphics[scale=0.7,page=16]{tikz-pics}
  \end{center}
  \caption{An \STT and a simple $\TCW$ with its split-tree}
  \label{fig:stcw}
\end{figure}

\subsection{Split-TCWs and split-game}
\label{sec:split-game}
There are many ways to show that a graph (in our case, a simple \TCW) has a bounded special tree-width. We find it convenient to prove that a simple \TCW has bounded special tree-width by playing a split-game, whose game positions are simple \TCWs in which some successor edges have been cut, i.e., are missing. This approach has been used for graph behaviors generated by untimed pushdown systems in~\cite{CGK12}, but here we wish to apply it to reason about graph behaviors of timed systems. Formally, a \emph{split-\TCW} is a structure $(V,\procrel\cup{\hole},(\matchrel^I)_{I\in\mathcal{I}},\lambda)$ where $\procrel$ and
 $\hole$ are the present and absent successor edges (also called \emph{holes}), respectively, such that ${\procrel}\cap{\hole}=\emptyset$ and 
 $(V,\procrel\cup{\hole},(\matchrel^I)_{I\in\mathcal{I}},\lambda)$
  is a \TCW. Notice that, for a split-\TCW, ${\lessdot}={\procrel}\cup{\hole}$ and ${<}={\lessdot}^+$.  A \emph{block} or \emph{factor} of a split-\TCW is a maximal set of points of $V$ connected by $\procrel$.
We denote by $\EP(\stcw)\subseteq V$ the set of left and right endpoints of blocks of \stcw.  A left endpoint $e$ is one for which there is no $f$ with $f\procrel e$.  Right endpoints are defined similarly.  Points in
$V\setminus\EP(\stcw)$ are called internal. If $i$ is any point, $\mathsf{Block}(i)$
 denotes the unique block which contains $i$. 
 The number of blocks is the \emph{width} of $\stcw$: $\width(\stcw)=1+|{\hole}|$.  \TCWs may be identified with split-\TCWs of width 1, i.e., with ${\hole}=\emptyset$.  
A split-\TCW is \emph{atomic} if it consists
of a single point ($|V|=1$) or a single timing constraint with a hole
($V=\{p_1,p_2\}$, $p_1\hole p_2$, $p_1\matchrel^I p_2$). In what follows, we will use the notation $\matchrel$ for $\bigcup_{I\in \cI} \matchrel^I$ when convenient.

\emph{The split-game} is a two player turn based game
$\mathcal{G}=(V_\exists\uplus V_\forall,E)$ where Eve's set of game positions
$V_\exists$ consists of all connected (wrt.\ ${\procrel}\cup{\matchrel}$)
split-\TCWs and Adam's set of game positions $V_\forall$ consists of all disconnected (wrt.\ ${\procrel}\cup{\matchrel}$) split-\TCWs.  The edges $E$ of $\mathcal{G}$ reflect the moves of the players.
Eve's moves consist of splitting a block in two, i.e., removing one successor $(\procrel)$
edge in the graph.  Adam's moves amount to choosing a connected component (wrt.\ ${\procrel}\cup{\matchrel}$) of the
split-\TCW. Atomic split-\TCWs are terminal positions in the game: neither Eve
nor Adam can move from an atomic split-\TCW.
A play on a split-\TCW $\stcw$ is a path in $\mathcal{G}$ starting from
$\stcw$ and leading to an atomic split-\TCW. The cost of the play is
the maximum width of any split-\TCW encountered in the path.  Eve's objective is
to minimize the cost, while Adam's objective is to maximize it.

A strategy for Eve from a split-\TCW $\stcw$ can be described with a
\emph{split-tree} $T$ which is a binary tree labeled with split-\TCWs 
satisfying:
\begin{enumerate}
\item The root is labeled by $\stcw=\labelroot(T)$.
\item Leaves are labeled by atomic split-\TCWs.
\item Eve's move: Each unary node is labeled with some connected (wrt.\
  ${\procrel}\cup{\matchrel}$  ) split-\TCW $\stcw$ and its child is labeled with
some $\stcw'$ obtained by splitting a block of $\stcw$ in two, i.e., by
removing one successor $(\procrel)$ edge.  Thus, $\width(\stcw') = 1 + \width(\stcw)$.
\item Adam's move: Each binary node is labeled with some disconnected (wrt.\ ${\procrel}\cup{\matchrel}$)
split-\TCW $\stcw = \stcw_1 \uplus \stcw_2$ where $\stcw_1$ and $\stcw_2$ are the
labels of its children.  Note that $\width(\stcw)=\width(\stcw_1)+\width(\stcw_2)$.
\end{enumerate}

The \emph{width} of a split-tree $T$, denoted $\width(T)$, is the maximum width
of the split-\TCWs labeling the nodes of $T$.  In other words, the cost of the 
strategy encoded by $T$ is $\width(T)$ and this is the maximal cost of the plays starting from \stcw and following this strategy. A $K$-split-tree is a split-tree of width at most $K$.  

The \emph{split-width} of a simple (split-)\TCW $\stcw$ is the minimal cost of Eve's (positional) strategies starting from \stcw. In other words it is the minimum width of any split-tree for the simple \TCW. Notice that Eve has a strategy to decompose a \TCW $\stcw$ into \emph{atomic} split-\TCWs if and only if $\stcw$ is \emph{simple}, i.e, at most one timing constraint is attached to each point.  
An example of a split-tree is given in Figure~\ref{fig:stcw} (right).
Observe that the width of the split-tree is 4.  Hence the split-width of the
simple \TCW labeling the root is at most four.

Let $\KTCW$ (resp. $\KMTCW$) denote the set of simple \TCWs with split-width bounded by $K$ (resp. and using constants at most $M$) over the fixed alphabet $\Sigma$.  The crucial link between special tree-width and split-width is given by the following lemma.

\begin{lemma}\label{lem:split-tcw-stt}
  A (split) \STCW of split-width at most $K$ has special tree-width at
  most $2K-1$. 
\end{lemma}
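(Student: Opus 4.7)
The plan is to build a $(2K)$-STT bottom-up from a width-$K$ split-tree for $\stcw$; since a $(2K)$-STT witnesses special tree-width at most $2K-1$ by definition, this yields the bound immediately. The key invariant is that at every node of the split-tree, the labeled split-STCW has at most $K$ blocks and hence at most $2K$ block endpoints, so all endpoints can be kept simultaneously alive as colored vertices using colors drawn from $[2K]$.

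Concretely, I would prove by structural induction on the split-tree the following strengthened claim: for every node labeled by a split-STCW $\stcw$ and every injective assignment of colors from $[2K]$ to the endpoints in $\EP(\stcw)$, there exists a $(2K)$-STT $\stt$ such that $\sem{\stt}$ is the labeled graph $\stcw$ (with hole edges ignored) together with exactly this coloring. At an atomic leaf I start from $(1,a)$ or $\add{1}{2}{x,y}((1,a)\sttunion(2,b))$ and apply at most two $\rename{i}{j}$ operations to match the required colors. At a binary node where $\stcw = \stcw_1 \uplus \stcw_2$, I restrict the coloring to each side to get colorings with disjoint ranges, invoke the induction hypothesis to obtain $\stt_1, \stt_2$, and set $\stt = \stt_1 \sttunion \stt_2$; this disjoint union is well-defined precisely because the restrictions have disjoint color domains.

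The interesting case is Eve's unary move: the child split-STCW $\stcw'$ is obtained from $\stcw$ by removing one successor edge $r \procrel \ell$, so $\EP(\stcw') \supseteq \EP(\stcw)$ with up to two newly exposed endpoints $r, \ell$. I extend the coloring of $\EP(\stcw)$ to $\EP(\stcw')$ by assigning fresh colors from $[2K]$ to the new endpoints, which is feasible since $|\EP(\stcw')| \leq 2K$ by the width bound. The induction hypothesis yields $\stt'$, and I take $\stt = \forget{c_\ell}\forget{c_r}\add{c_r}{c_\ell}{\procrel}\stt'$, where $c_r, c_\ell$ are the colors of $r, \ell$ in $\stt'$, omitting $\forget{c_r}$ whenever $r$ still belongs to $\EP(\stcw)$ (and similarly for $\ell$). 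The main subtlety is precisely this bookkeeping when one of the two newly formed blocks is a singleton: then the ``new'' endpoint coincides with an endpoint already present in $\EP(\stcw)$, its color is not fresh, and it must not be forgotten. Applying the claim at the root of the split-tree produces the desired $(2K)$-STT, completing the proof.
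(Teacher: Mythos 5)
Your proof is correct and follows essentially the same route as the paper's: induction on the split-tree, keeping colors only on the at most $2K$ block endpoints, adding successor edges at unary nodes and forgetting colors of vertices that cease to be endpoints. The only cosmetic difference is that you absorb the color-clash issue at binary nodes into a strengthened induction hypothesis quantifying over all injective colorings, whereas the paper fixes one coloring per subtree and performs explicit $\rename{i}{j}$ operations at the $\sttunion$ node; both are equally valid.
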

Intuitively, we only need to keep colors for end-points of blocks.  Hence, each
block of an \STCW $\stcw$ needs at most two colors and if the width of $\stcw$
is at most $K$ then we need at most $2K$ colors. From this it can be
shown that a strategy of Eve of cost at most $K$ can be encoded by a $2K$-\STT,
which gives a special tree-width of at most $2K-1$.

\begin{proof} 
Let $\stc$ be a split-\TCW. Recall that we denote by $\EP(\stcw)$ the subset of
events that are endpoints of blocks of \stcw.  A left endpoint is an event
$e\in\Events$ such that there are no $f$ with $f\procrel e$.  We define
similarly right endpoints.  Note that an event may be both a left and right
endpoint.  The number of endpoints is at most twice the number of blocks:
$|\EP(\stcw)|\leq2\cdot\width(\stcw)$.

  We associate with every split-tree $T$ of width at most $K$ a $2K$-\STT
  $\overline{T}$ such that $\sem{\overline{T}}=(\stcw,\chi)$ where
  $\stcw=\labelroot(T)$ is the label of the root of $T$ and the range of $\chi$
  is the set of endpoints of $\stcw$: $\image(\chi)=\EP(\stcw)$.  Notice that
  $\dom(\chi)\subseteq[2K]$ since $\overline{T}$ is a $2K$-\STT. The
  construction is by induction on $T$.

  Assume that $\labelroot(T)$ is atomic. Then it is either an internal event labeled 
  $a\in\Sigma$, and we let $\overline{T}=(1,a)$.
  Or, it is a pair of events $e\matchrel^I f$ with a timing constraint
  $I=[c,d]$ and we let
  $\overline{T}=\add{1}{2}{c,d}((1,\lambda(e))\sttunion(2,\lambda(f)))$.
  
  If the root of $T$ is a binary node and the left and right subtrees are $T_1$
  and $T_2$ then $\labelroot(T)=\labelroot(T_1)\uplus\labelroot(T_2)$.  By
  induction, for $i=1,2$ the \STT $\overline{T}_i$ is already defined and we have
  $\sem{\overline{T}_i}=(\labelroot(T_i),\chi_i)$.  We first rename colors that
  are active in both \STTs.  To this end, we choose an injective map
  $f\colon\dom(\chi_1)\cap\dom(\chi_2)\to[2K]\setminus(\dom(\chi_1)\cup\dom(\chi_2))$.
  This is possible since $|\dom(\chi_i)|=|\image(\chi_i)|=|\EP(\labelroot(T_i))|$.
  Hence, $|\dom(\chi_1)|+|\dom(\chi_2)|=|\EP(\labelroot(T))|\leq 2K$.
  
  Assuming that $\dom(f)=\{i_1,\ldots,i_m\}$, we define
  $$
  \overline{T}=\overline{T}_1\sttunion
  \rename{i_1}{f(i_1)} \cdots \rename{i_m}{f(i_m)} \overline{T}_2 \,.
  $$
  
  Finally, assume that the root of $T$ is a unary node with subtree $T'$.  Then,
  $\labelroot(T')$ is obtained from $\labelroot(T)$ by splitting one block, i.e.,
  removing one word edge, say $e\procrel f$.  We deduce that $e$ and $f$ are
  endpoints of $\labelroot(T')$, respectively right and left endpoints.  By
  induction, the \STT $\overline{T'}$ is already defined.  We have
  $\sem{\overline{T'}}=(\labelroot(T'),\chi')$ and $e,f\in\image(\chi')$.  So let
  $i,j$ be such that $\chi'(i)=e$ and $\chi'(j)=f$.  We add the process edge with
  $\tau=\add{i}{j}\procrel \overline{T'}$.  Then we forget color $i$ if $e$ is no more
  an endpoint, and we forget $j$ if $f$ is no more an endpoint:
  \begin{align*}
    \tau' &=
    \begin{cases}
      \tau & \text{if $e$ is still an endpoint,}  \\
      \forget{i} \tau & \text{otherwise} 
    \end{cases}
    &
    \overline{T} &=
    \begin{cases}
      \tau' & \text{if $f$ is still an endpoint,}  \\
      \forget{j} \tau' & \text{otherwise.} 
    \end{cases}\tag*{\raisebox{-.65em}\qEd}
  \end{align*}
  \def\popQED{}
\end{proof}

\subsection{Split-width for timed systems} 
Viewing terms as trees, our goal in the next section will be to construct tree automata to recognize sets of $(K,M)$-\STTs, and thus capture the ($K$ split-width) bounded behaviors of a given system.  A possible way to show that these capture \emph{all} behaviors of the given system, is to show that we can find a $K$ such that all the (graph) behaviors of the given system have a $K$-bounded split-width. 

We do this now for a TPDA and also mention how to modify the proof for a timed automaton. In Section \ref{sec:mpda}, we also show how it extends to multi-pushdown systems. In fact, we prove a slightly more general result, by showing that all well-timed split-\STCWs defined in Section~\ref{sec:tpda-sem} for the set of clocks $Y=X\cup\{\zeta\}$ have bounded split-width (lifting the definition of well-timed to split-\STCWs).  From Lemma~\ref{app:well-timed}, it follows that the \STCWs defined by a TPDA with set of clocks $X$ are well-timed for the set of clocks $Y=X\cup\{\zeta\}$ and hence we obtain a bound on the split-width as required. Let $\STCW(\Sys)$ represent the \STCWs that are 
defined by a TPDA \Sys.

\begin{lemma}\label{TPDA-bound}
  The split-width of a \STCW which is well-timed w.r.t.\ a set of clock $Y$ is
  bounded by $4|Y|+2$.
\end{lemma}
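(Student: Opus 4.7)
The plan is to exhibit an explicit strategy for Eve in the split-game that, starting from any well-timed \STCW with clock set $Y$, keeps the width of every reachable split-\STCW bounded by $4|Y|+2$. Once such a strategy is constructed, the bound on split-width follows immediately from the definition.

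The strategy will be described recursively, with a stronger inductive invariant: at every game position, Eve maintains a current connected ``region'' whose boundary consists of at most $2 + 4|Y|$ endpoints, organized as follows. Two endpoints delimit the region being decomposed (its overall left and right process-boundary). For each clock $x \in Y$, at most one active \emph{$x$-reset block} (a maximal consecutive run in $\dom(\matchrel^x)$ that still has unmatched guards in the region) contributes two endpoints, and at most one active \emph{$x$-guard block} (the consecutive run of guard targets sharing that reset block) contributes two more endpoints. That at most one such reset–guard pair is simultaneously active per clock is a direct consequence of property (T2): the ``closest reset block'' condition forces consecutive reset and guard blocks to appear in matched lock-step order, so only the currently-crossing pair is relevant. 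At each turn Eve picks a process edge inside her region to cut so as to peel off either the leftmost innermost stack match (using (T1) well-nesting) or the leftmost innermost clock reset–guard pair (using (T2) within a reset block). If the cut disconnects the split-\STCW, Adam selects one component; otherwise, Eve continues cutting. Each cut increases the boundary by a bounded number of endpoints that can be charged against the active block of an already-tracked clock or the region itself, so the invariant is preserved, and recursion on the (shrinking) size of the region terminates at atomic split-\STCWs.

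The main obstacle will be verifying that this strategy genuinely never exceeds the $4|Y|+2$ budget transiently. The delicate points are: (i) the interleaving of matchings across clocks, for which the $4|Y|$ budget is exactly what is needed since different clocks' active reset/guard blocks may coexist but only one pair per clock at a time; (ii) the stack matchings, whose well-nesting (T1) ensures that only two stack endpoints are ever exposed at the region boundary, which are absorbed into the overall two region-endpoints; and (iii) the $\zeta$-clock's $[0,0]$ constraints that tightly couple the first and last micro-transitions of each transition block $V_i$, which force Eve to first decompose the interior of $V_i$ before cutting its $\zeta$-match — this remains within budget because $\zeta$ is one of the clocks in $Y$ and its 4 endpoints are already allocated. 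A careful case analysis on which matching Eve selects at each step, together with the usual argument that innermost-first peeling never raises nesting crossings, then completes the induction and yields the claimed bound of $4|Y|+2$.
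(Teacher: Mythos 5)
Your overall plan (an explicit Eve strategy with a boundary invariant) has the right shape, but two of its load-bearing claims are false as stated, and they are exactly the points where the real difficulty of this lemma lies. First, for a clock $x$ the guard endpoints that point back to a common $x$-reset block do \emph{not} form a consecutive block: they sit in different transitions, separated by arbitrary other events; property (T2) only makes the \emph{resets} contiguous. Consequently your ``leftmost innermost'' peeling order, which detaches the pair (last reset of the block, first guard), punches a fresh hole at the guard position in the middle of the region each time, and these guard-side holes accumulate without bound. The correct order is right-to-left: always remove the constraint whose \emph{guard} is the current last point, so that no hole is ever created on the guard side and the only hole per clock is the one that widens inside the (contiguous) reset block. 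Second, and more seriously, your treatment of the stack is the naive peeling that provably fails: pushes, unlike resets, are not arranged in contiguous blocks, so detaching innermost push--pop pairs one at a time leaves a hole at each push position, and the number of such holes is unbounded; well-nesting (T1) does not confine these new endpoints to the region boundary. The step you are missing is a genuine \emph{divide}: when the last event is a pop matched to a push $i$, well-nestedness guarantees that no stack edge crosses $i$, so Eve cuts just before $i$ and additionally detaches, for each clock, the left factor of the unique reset block whose constraints cross $i$ (at most two cuts per clock). This disconnects the word into two components, each again satisfying the invariant, at a transient cost of at most $2|Y|+1$ extra splits.

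That divide step is also where the constant comes from, and your accounting does not reach it. The stable invariant is ``at most $2|Y|+1$ blocks'' (one main part containing at most one reset hole per clock, plus at most one detached initial reset block per clock); the bound $4|Y|+2$ is the transient peak $(2|Y|+1)+(2|Y|+1)$ attained during the divide, after which Adam's choice of a component restores the invariant. Your budget of $2+4|Y|$ \emph{endpoints} corresponds (at two endpoints per block) to only about $2|Y|+1$ blocks, i.e.\ to the invariant and not to the peak, and the phrase ``each cut increases the boundary by a bounded number of endpoints that can be charged\ldots'' never verifies that the width, which by definition counts blocks, stays below $4|Y|+2$ during the disconnection. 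To repair the proof you should (i) process from the right, (ii) replace stack peeling by the divide move, and (iii) state and prove the two-part invariant together with the claim that disconnection costs at most $2|Y|+1$ additional splits.
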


\begin{proof}
This lemma is proved by playing the split-width game between \emph{Adam} and
\emph{Eve}.  \emph{Eve} should have a strategy to disconnect the word without
introducing more than $4|Y|+2$ blocks.  The strategy of \emph{Eve} uses three
operations processesing the word from right to left.

\vspace*{-1ex}
\subparagraph{\bf{Removing an internal point}.} If the last/right-most event on the
word (say event $j$) is not the target of a $\matchrel$ relation, then she will
split the $\procrel$-edge before the last point, i.e., the edge between point
$j$ and its predecessor.

\vspace*{-1ex}
\subparagraph{\bf{Removing a clock constraint}}
\label{app:rem-clk}
Assume that we have a timing constraint $i\matchrel^x j$ where $j$ is the last
point of the split-\TCW.  Then, by \eqref{item:T2} we deduce that $i$ is
\emph{the first} point of the \emph{last reset block} for clock $x$. 
\emph{Eve} splits three $\procrel$-edges to detach the matching pair 
$i\matchrel^x j$: these three edges are those connected to $i$ and $j$.  Since
the matching pair $i\matchrel^x j$ is atomic, \emph{Adam} should continue the 
game from the
remaining split-\TCW $\stcw'$.  Notice that we have now a hole instead of
position $i$.  We call this a reset-hole for clock $x$.  For instance, starting
from the split-\TCW of Figure~\ref{fig:well-timed}, and removing the last timing
constraint of clock $x$, we get the split-\TCW on top of
Figure~\ref{fig:reset-hole}.  Notice the reset hole at the beginning of the $x$
reset block.

During the inductive process, we may have at most one such reset hole for each 
clock $x\in Y$. Note that the first time we remove the last point which is a timing constraint for clock $x$,
we create a hole in the last reset block of $x$ which contains a sequence of reset points for $x$,  by removing two edges. 
This hole is created by removing the  leftmost point in the reset block. As we keep removing points from the right which are timing constraints 
for $x$, this hole widens in the reset block, by removing each time, just one edge in the reset block.   
Continuing the example, if we detach the last internal point and then the 
timing constraint of clock $y$ we get the split-\TCW in the middle of 
Figure~\ref{fig:reset-hole}. Now, we have one reset hole for clock $x$ and one 
reset hole for clock $y$.

We continue by removing from the right, one internal point, one timing
constraint for clock $x$, one timing constraint for clock $y$, and another
internal point, we get the split-\TCW at the bottom of
Figure~\ref{fig:reset-hole}.  Notice that we still have a \emph{single} reset
hole for each clock $x,y$.

\begin{figure}[tbp]
  \centering
  \includegraphics[width=8.5cm,page=2]{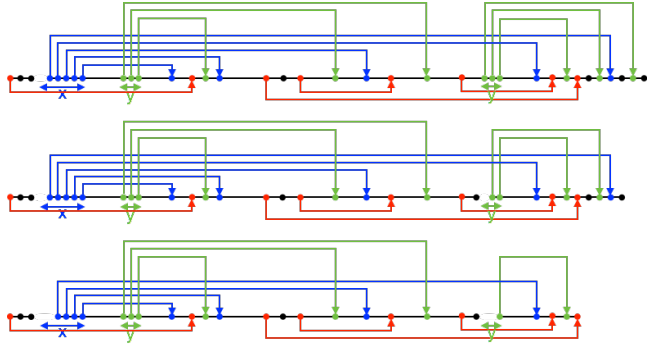}
  \caption{Removing timing constraints. At most one reset hole per clock. Edges below are stack edges. Clock edges are labeled with $x,y$.}
  \label{fig:reset-hole}
\end{figure}

\vspace*{-1ex}
\subparagraph{\bf{Removing a push-pop edge}}
\label{app:push-pop-rem}
Assume now that the last event is a pop: we have $i\matchrel^s j$ where $j$ is
the last point of the split-\TCW.
If there is already a hole immediately before the push event $i$ or if $i$ is the first 
point of the split-\TCW,
then we split after $i$ and before $j$ to detach the atomic matching pair 
$i\matchrel^s j$. This decomposes the here-to-fore connected split-\TCW into two disconnected components (wrt.\ ${\procrel}\cup{\matchrel}$), one containing the atomic matching pair and the remaining split-\TCW, which has a single connected component (again wrt.\ ${\procrel}\cup{\matchrel}$). Adam should choose this remaining (and connected wrt.\ ${\procrel}\cup{\matchrel}$) split-\TCW and the game continues. 

Note that if $i$ is not the first point of the split-\TCW and there is no hole 
before $i$, we cannot
proceed as we did in the case of clock constraints, since this would create a push-hole and the
pushes are not arranged in blocks as the resets.  Hence, removing push-pop edges
as we removed timing constraints would create an unbounded number of holes.
Instead, we split the \TCW just before the matching push event $i$.  Since
push-pop edges are well-nested \eqref{item:T1} and since $j$ is the last point
of the split-\TCW, there are no push-pop edges crossing position $i$:
$i'\matchrel^s j'$ and $i'<i$ implies $j'<i$.  Hence, only clock constraints may
cross position $i$.

Consider some clock $x$ having timing constraints crossing position $i$.  All
these timing constraints come from the last reset block $B_x$ of clock $x$ which
is before position $i$.  Moreover, these resets form the left part of the reset
block $B_x$.  We detach this left part with two splits, one before the reset
block $B_x$ and one after the last reset of block $B_x$ whose timing constraint
crosses position $i$.  We proceed similarly for each clock of $Y$.  Recall that we
also split the \TCW just before the push event $i$. As a result, the \TCW is 
not connected anymore.  Notice that we have used at most $2|Y|+1$ new splits 
to disconnect the \TCW.
For instance, from the bottom split-\TCW of Figure~\ref{fig:reset-hole},
applying the procedure above, we obtain the split-\TCW on top of
Figure~\ref{fig:push-pop-split} which has two connected components. Notice that 
to detach the left part of the reset block of clock $x$ we only used one 
split since there was already a reset hole at the beginning of this block.
The two connected components are depicted separately at the bottom of
Figure~\ref{fig:push-pop-split}.

\begin{figure}[tbp]
  \centering
  \includegraphics[width=8.5cm,page=3]{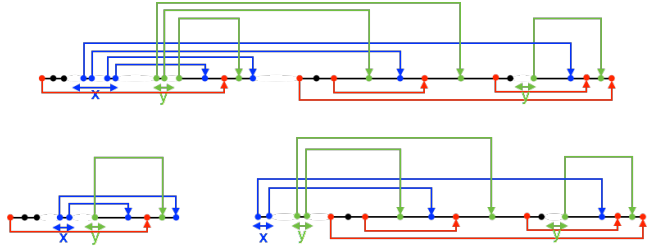}
  \caption{Splitting the \TCW at a position with no crossing stack edge.}
  \label{fig:push-pop-split}
\end{figure}

\subparagraph{\bf{Invariant}.}
\label{app:lem:tpda}
The split-\TCW at the bottom right of Figure~\ref{fig:push-pop-split} is 
representative of the split-\TCW which may occur during the split-game 
using the strategy of \emph{Eve} described above. These split-\TCW 
satisfy the following invariant.

\begin{enumerate}[label=$(\mathsf{I}_{\arabic*})$,ref=$\mathsf{I}_{\arabic*}$]
  \item\label{item:I1}
  The split-\TCW starts with at most one reset block for each clock in $Y$.
  For instance, the split-\TCW of Figure~\ref{fig:invariant-split} starts with two  \emph{reset blocks}, one for clock $x$ and one for clock $y$ (see the two hanging reset blocks on the left, one of $x$ and one of $y$). Indeed, there could be no reset blocks to begin with as well, as depicted in the split-\TCW of Figure~\ref{fig:well-timed}. 
  
  \item\label{item:I2}
  Apart from these reset blocks, the split-\TCW may have reset holes, at most 
  one for each clock in $Y$. 
  For instance, the split-\TCW of Figure~\ref{fig:invariant-split} has two 
  \emph{reset holes}, one for clock $z$ and one for clock $y$.
  
  A reset hole for clock $x$ is followed by the last reset block of clock $x$, if any. Hence, for all timing constraints  $i\matchrel^x j$ such that $j$ is on the right of the hole, the reset event $i$   is in the reset block that starts just after the hole.
\end{enumerate}

\begin{figure}[tbp]
  \centering
  \includegraphics[width=8.5cm,page=4]{Figures}
  \caption{A split-\TCW satisfying (\ref{item:T1}--\ref{item:T2}) and (\ref{item:I1}--\ref{item:I2}).}
  \label{fig:invariant-split}
\end{figure}

\begin{figure}[tbp]
  \centering
  \includegraphics[width=10.5cm,page=5]{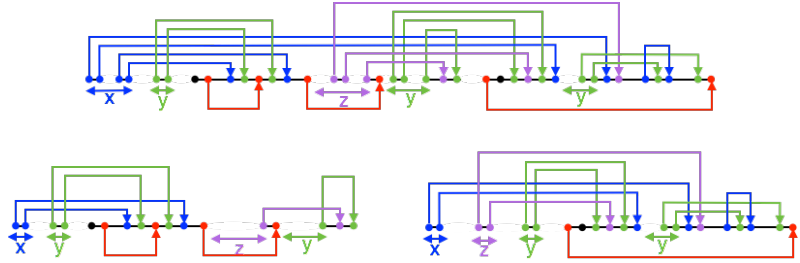}
  \caption{Splitting the \TCW at a position with no crossing stack edge.}
  \label{fig:invariant-split-2}
\end{figure}

\begin{claim}\label{lem:tpda}
  A split-\TCW satisfying (\ref{item:I1}--\ref{item:I2}) has at most $2|Y|+1$
  blocks.  It is disconnected by Eve's strategy using
  at most $2|Y|+1$ new splits, and the resulting connected components satisfy
  (\ref{item:I1}--\ref{item:I2}).  Therefore, its split-width is at most
  $4|Y|+2$.
\end{claim}

\begin{proof}
Let us check that starting from a split-\TCW satisfying
(\ref{item:I1}--\ref{item:I2}), we can apply \emph{Eve's} strategy and the
resulting connected components also satisfy (\ref{item:I1}--\ref{item:I2}).
This is trivial when the last point is internal in which case \emph{Eve}
makes one split before this last point. 

In the second case, the last event checks a timing constraint for some clock
$x$: we have $i\matchrel^x j$ and $j$ is the last event.  If there is already a
reset hole for clock $x$ in the split-\TCW, then by \eqref{item:I2} and
\eqref{item:T2}, the reset event $i$ must be just after the reset hole for clock
$x$. So with at most two new splits \emph{Eve} detaches the atomic edge 
$i\matchrel^x j$ and the resulting split-\TCW satisfies the invariants.
If there is no reset hole for clock $x$ then we consider the last reset block
$B_x$ for clock $x$.  By \eqref{item:T2}, $i$ must be the first event of this
block.  Either $B_x$ is one of the first reset blocks of the split-\TCW
\eqref{item:I1} and \emph{Eve} detaches with at most two splits the atomic
edge $i\matchrel^x j$.  Or \emph{Eve} detaches this atomic edge with at
most three splits, creating a reset hole for clock $x$ in the resulting
split-\TCW. In both cases, the resulting split-\TCW satisfies the invariants.

The third case is when the last event is a pop event: $i\matchrel^s j$ and $j$
is the last event.  Then there are two subcases: either, there is a hole immediately before event $i$ then \emph{Eve} detaches with two splits the atomic edge $i\matchrel^s j$ and the resulting split-\TCW satisfies the invariants.  Or we cut the split-\TCW before position $i$. Notice that no push-pop edges cross $i$: if $i'\matchrel^s j'$ and $i'<i$ then $j'<i$.
As above, for each clock $x$ having timing constraints crossing position $i$, we
consider the last reset block $B_x$ for clock $x$ which is before position $i$.
The resets of the timing constraints for clock $x$ crossing position $i$ form a
left factor of the reset block $B_x$.  We detach this left factor with at most
two splits.  We proceed similarly for each clock of $Y$.  The resulting
split-\TCW is not connected anymore and we have used at most $2|Y|+1$ more
splits.  For instance, if we split the \TCW of Figure~\ref{fig:invariant-split}
just before the last push following the procedure described above, we get the
split-\TCW on top of Figure~\ref{fig:invariant-split-2}.  This split-\TCW is not
connected and its left and right connected components are drawn below.

To see that the invariants are maintained by the connected components, 
let us inspect the splitting of block $B_x$. First, $B_x$ could be one of the beginning reset blocks \eqref{item:I1}. 
This is the case for clock $x$ in Figures~\ref{fig:invariant-split} and~\ref{fig:invariant-split-2}.  
In which case \emph{Eve} use only one
split to divide $B_x$ in $B_x^1$ and $B_x^2$.  The left factor $B_x^1$
corresponds to the edges crossing position $i$ and will form one of the reset
block \eqref{item:I1} of the right connected component.  On the other hand, the
suffix $B_x^2$ stays a reset block of the left connected component.
Second, $B_x$ could follow a reset hole for clock $x$. 
This is the case for clock $z$ in Figures~\ref{fig:invariant-split} and~\ref{fig:invariant-split-2}.  
In which case again 
\emph{Eve} only needs one split to detach the left factor $B_x^1$ which 
becomes a reset block \eqref{item:I1} of the right component. The 
reset hole before $B_x$ stays in the left component.
Finally, assume that $B_x$ is neither a beginning reset block \eqref{item:I1},
nor follows a reset hole for clock $x$.
This is the case for clock $y$ in Figures~\ref{fig:invariant-split} and~\ref{fig:invariant-split-2}.  
Then \emph{Eve} detaches the left 
factor $B_x^1$ which becomes a reset block \eqref{item:I1} of the right component and creates a reset hole in the left component.
\end{proof}

This claim along with the strategy ends the proof of the lemma.
\end{proof}

Now, if the \STCW is from a timed automaton then, $\matchrel^s$ is empty and
Eve's strategy only has the first two cases above.  Thus, we obtain a bound of
$|Y|+3$ on split-width for timed automata.
Thus, we have our second main contribution of this paper and the main theorem of this section, namely, Theorem~\ref{thm:sw-tpda}~\eqref{item:TPDA-sw}.

\begin{theorem}\label{thm:sw-tpda}
  Given a timed system \Sys using a set of clocks $X$, all words in its  \STCW
  language have split-width bounded by $K$, i.e., $\STCW(\Sys)\subseteq\KTCW$,
  where
  \begin{enumerate}
    \item\label{item:TA-sw} $K=|X|+4$ if \Sys is a timed automaton,
    \item\label{item:TPDA-sw} $K=4|X|+6$ if \Sys is a timed pushdown automaton,
  \end{enumerate}
\end{theorem}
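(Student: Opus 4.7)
The plan is to derive both bounds as immediate consequences of Lemma~\ref{app:well-timed} together with Lemma~\ref{TPDA-bound} (and a mild refinement of the latter in the case of timed automata).

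For part~\eqref{item:TPDA-sw}, I would argue as follows. By Lemma~\ref{app:well-timed}, every $\stcw\in\STCW(\Sys)$ is well-timed with respect to the extended clock set $Y = X\cup\{\zeta\}$. Lemma~\ref{TPDA-bound} then applies directly, yielding a split-width bound of $4|Y|+2 = 4(|X|+1)+2 = 4|X|+6$. So part~\eqref{item:TPDA-sw} is immediate from the two lemmas already established.

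For part~\eqref{item:TA-sw}, I would specialize Eve's strategy in the proof of Lemma~\ref{TPDA-bound} to the case where $\matchrel^s = \emptyset$. Since a timed automaton has no stack, the third case of Eve's strategy (``removing a push-pop edge'') is never invoked, and only the first two cases (removing an internal point, or removing a clock constraint) remain. Consequently the invariant~\eqref{item:I1}, which was introduced to track the beginning reset blocks created by the push-pop case, is not needed, and Eve only needs to maintain invariant~\eqref{item:I2}: at most one reset hole per clock. Between Adam's moves, the split-\TCW therefore consists of a single word cut by at most $|Y|$ reset holes, giving at most $|Y|+1$ blocks.

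The main computation is a case analysis of Eve's move to show the transient width never exceeds $|Y|+3$. If the last event is internal, Eve uses $1$ split, so the width rises to at most $|Y|+2$. If the last event is a clock-$x$ constraint and a reset hole for $x$ already exists, Eve uses $2$ splits, giving at most $|Y|+3$. If no reset hole for $x$ exists, then at most $|Y|-1$ other reset holes are present, so the pre-move block count is at most $|Y|$, and Eve's $3$ splits again give at most $|Y|+3$. After Adam chooses a connected component the atomic piece is discarded and the invariant is restored. This yields split-width $\leq |Y|+3 = |X|+4$, establishing part~\eqref{item:TA-sw}.

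The only subtle point, and the one I expect to require the most care, is the tight accounting in the ``no reset hole'' subcase: we must exploit the fact that the absence of a reset hole for $x$ lowers the pre-move block count by one precisely to compensate for the extra split Eve uses to create a new hole. Once this telescoping is checked, both bounds follow.
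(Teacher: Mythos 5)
Your proposal is correct and follows essentially the same route as the paper: part~\eqref{item:TPDA-sw} is obtained by combining Lemma~\ref{app:well-timed} with Lemma~\ref{TPDA-bound} to get $4|Y|+2=4|X|+6$, and part~\eqref{item:TA-sw} by specializing Eve's strategy to the stack-free case, where the paper likewise observes that only the first two moves survive and the bound drops to $|Y|+3=|X|+4$. Your explicit telescoping in the ``no reset hole for $x$'' subcase (at most $|Y|$ blocks before the move compensating for the third split) is exactly the accounting needed to reach $|Y|+3$ rather than $|Y|+4$, and it correctly fills in what the paper leaves implicit.
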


\begin{figure}[h!]
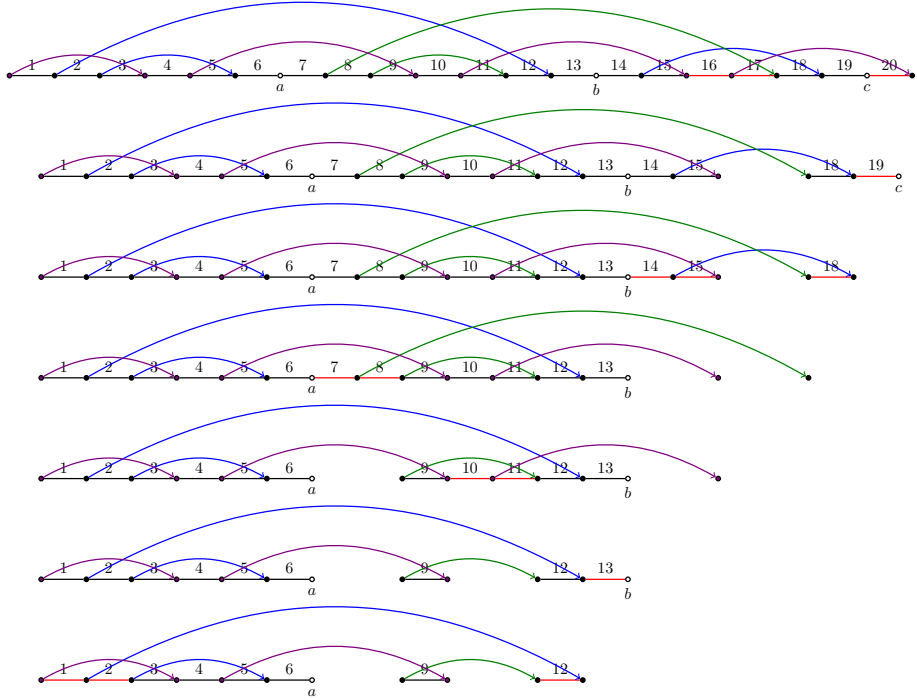

  \begin{quote}
  \includegraphics[scale=1,page=8]{tikz-pics}

  \includegraphics[scale=1,page=9]{tikz-pics}

  \includegraphics[scale=1,page=10]{tikz-pics}

  \includegraphics[scale=1,page=11]{tikz-pics}

  \includegraphics[scale=1,page=12]{tikz-pics}

  \includegraphics[scale=1,page=13]{tikz-pics}

  \includegraphics[scale=1,page=14]{tikz-pics}
  \end{quote}

  \caption{The first 4 steps of the split-game on the $\STCW$ of
  Figure~\ref{ex-ta}.  Note that we do not write the time intervals or
  transition names as they are irrelevant for this game. In the first step, 
  Eve detaches the last timing constraint by cutting the 3 successor edges 16, 17 
  and 20. Adam chooses the non atomic \STCW of the second line. In the second 
  step, Eve detaches the last internal event labelled $c$ by cutting edge 19. 
  The resulting \STCW is on the third line, from which Eve detaches the last 
  timing constraint by cutting edges 14, 15, 18. She continues by cutting 
  edges 7, 8 to detach the green timing constraint, and so on. Notice that 
  we have three clocks $Y=\{x,y,\zeta\}$ and that the maximal number of 
  blocks $|Y|+3=6$ is reached on the last \STCW when Eve cuts edges 1, 2 
  and 12.}
  \label{fig:ta-game}
\end{figure} 
Figure~\ref{fig:ta-game} illustrates our strategy starting from the $\STCW$
depicted in Figure~\ref{ex-ta}, generated by the TA shown in that figure.

\newcommand{\BKMV}{\ensuremath{\mathcal{B}^{K,M}_{\textsf{valid}}}\xspace}

\newcommand{\nxt}{\ensuremath{\mathsf{next}}\xspace}
\newcommand{\prv}{\ensuremath{\mathsf{prev}}\xspace}
\newcommand{\sd}{\ensuremath{\mathsf{sd}}\xspace}
\newcommand{\SD}{\ensuremath{\mathsf{SD}}\xspace}
\newcommand{\ts}{\ensuremath{\mathsf{ts}}\xspace}
\newcommand{\tsm}{\ensuremath{\mathsf{tsm}}\xspace}
\newcommand{\bg}{\ensuremath{\mathsf{ac}}\xspace}
\newcommand{\BG}{\ensuremath{\mathsf{AC}}\xspace}
\newcommand{\Bool}{\ensuremath{\mathbb{B}}\xspace}

\section{Tree automata for Validity}
\label{sec:AKMV}

Fix $K,M\geq2$.
Not all graphs defined by $(K,M)$-\STTs are realizable \TCWs.  Indeed, if $\tau$
is such an \STT, the edge relation $\procrel$ may have cycles or may be
branching, which is not possible in a \TCW. Also, the timing constraints given
by $\matchrel$ need not comply with the $\procrel$ relation: for instance, we
may have a timing constraint $e\matchrel f$ with $f\procrel^+ e$.  Moreover,
some terms may define graphs denoting \TCWs which are not realizable.  So we construct the tree 
automaton \AKMV to check for validity.  We will see at the end of the section
(Corollary~\ref{cor:BKMV}) that we can restrict
$\mathcal{A}^{K,M}_{\textsf{valid}}$ further so that it accepts terms denoting
\emph{simple} \TCWs of split-width $\leq K$.

First we show that, since we have only closed intervals in the timing
constraints, considering integer timestamps is sufficient for realizability.

\begin{lemma}\label{lem:integer}
  Let $W=(V,\procrel,(\matchrel^I)_{I\in\mathcal{I}(M)},\lambda)$ be a \TCW using 
  only closed intervals in its timing constraints. Then, $W$ is realizable iff 
  there exists an integer valued timestamp map satisfying all timing 
  constraints.
\end{lemma}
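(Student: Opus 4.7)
The backward direction is immediate: an integer timestamp map is in particular a real-valued one, so it witnesses realizability. The work lies in the forward direction, where we are given real timestamps $t_1 \le t_2 \le \cdots \le t_n$ satisfying $t_j - t_i \in I$ for every $(i,j)\in{\matchrel^I}$, and we must produce integer timestamps satisfying the same constraints.

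My plan is to reformulate realizability as a system of \emph{difference constraints} and apply the standard fact that such a system with integer bounds has an integer solution whenever it has a real one. Concretely, I build a weighted directed graph $G$ on vertex set $V\cup\{v_0\}$ (with $v_0$ an auxiliary source) whose edges encode the constraints: for each $(i,j)\in{\matchrel^{[a,b]}}$ I add an edge $i\to j$ of weight $b$ and an edge $j\to i$ of weight $-a$; for each successor edge $i\procrel j$ I add $j\to i$ of weight $0$ (encoding $t_i \le t_j$); and from $v_0$ to every vertex I add an edge of weight $0$ (fixing a bounded range). Because all intervals are closed with integer endpoints in $\cI(M)$, every weight is an integer. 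By construction, a map $t\colon V\to\setR$ satisfies all constraints iff $t_j - t_i \le w(i,j)$ for every edge $i\to j$ of $G$ (with $t_{v_0}=0$).

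The key fact I will invoke is the classical result on difference constraint systems: such a system is feasible iff $G$ has no negative cycle, in which case the shortest-path distances $d(v_0,v)$ from the source form a feasible solution. Since all edge weights are integers, these shortest-path distances are integers. The hypothesis that $W$ is realizable supplies the real-valued feasibility, which by the ``feasibility $\Leftrightarrow$ no negative cycle'' equivalence means $G$ has no negative cycle; hence the integer shortest-path assignment $\widehat{t}_i = d(v_0,i)$ exists and satisfies all edge-inequalities, thus all timing constraints $\widehat{t}_j - \widehat{t}_i \in [a,b]$ whenever $i\matchrel^{[a,b]} j$, and all ordering constraints.

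I expect the only mild subtlety to be making sure the reformulation is faithful: each closed interval constraint $t_j - t_i \in [a,b]$ must be split into the two inequalities $t_j - t_i \le b$ and $t_i - t_j \le -a$, which is precisely why closedness of the intervals is essential (open endpoints would yield strict inequalities for which integer solutions need not exist, e.g.\ $t_j - t_i \in (0,1)$). Everything else is routine. Once the difference-constraint encoding is in place, the proof is just a citation of the shortest-path characterization of feasibility for integer-weighted difference systems.
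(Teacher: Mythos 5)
Your proof is correct, but it takes a genuinely different and heavier route than the paper's. The paper's argument is a two-line observation: if $t$ is any real-valued realization, then $\lfloor t \rfloor$ is an integer one, because for reals $a\leq b$ with integral parts $i,j$ one has $j-i-1 < b-a < j-i+1$, so membership of $b-a$ in a closed interval with integer endpoints forces $j-i$ to lie in that interval as well; flooring also preserves monotonicity, so the integer map is obtained directly from the given real one with no further machinery. Your route instead encodes realizability as a system of difference constraints $t_v - t_u \le w(u,v)$ over a weighted graph and invokes the classical equivalence ``feasible $\Leftrightarrow$ no negative cycle,'' together with integrality of shortest-path distances under integer weights. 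This is sound: real feasibility rules out negative cycles (sum the inequalities around a cycle), and the distances $d(v_0,\cdot)$ then satisfy every inequality, hence every constraint $t_j - t_i \in [a,b]$ for $i \matchrel^{[a,b]} j$ and every ordering constraint from $\procrel$. Both arguments hinge on closedness of the intervals, and you identify the right counterexample for open ones. What your approach buys is a reusable characterization of the whole solution set as an integer difference system (the paper notes its conference version used exactly this negative-cycle viewpoint inside the tree automaton before switching to direct timestamp guessing); what it costs is the extra apparatus, which is unnecessary here. One small loose end: with the $0$-weight edges $v_0 \to v$ your shortest-path distances satisfy $d(v_0,v) \le 0$, so the resulting timestamps are non-positive integers, whereas a timestamp map should take non-negative values; you should add a final uniform shift by $-\min_v d(v_0,v)$, which preserves all difference and ordering constraints and lands the map in $\setN$.
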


\begin{proof}
  Consider two non-negative real numbers $a,b\in\mathbb{R}_+$ and let
  $i=\lfloor a\rfloor$ and $j=\lfloor b\rfloor$ be their integral parts. Then, 
  $j-i-1<b-a<j-i+1$. It follows that for all closed intervals $I$ with integer 
  bounds, we have $b-a\in I$ implies $j-i\in I$.
  
  Assume there exists a non-negative real-valued timestamp map $\ts\colon
  V\to\mathbb{R}_+$ satisfying all timing constraints of $W$.  From the remark
  above, we deduce that $\lfloor\ts\rfloor\colon V\to\N$ also realizes all
  timing constraints of $W$.
  The converse direction is clear.  
\end{proof}

Consider a set of colors $P\subseteq\{1,\ldots,K\}$.
For each $i\in P$ we let $i^+=\min\{j\in P\cup\{\infty\}\mid i<j\}$
and $i^-=\max\{j\in P\cup\{0\}\mid j<i\}$.
If $P$ is not clear from the context, then we write $\nxt_P(i)$ and $\prv_P(i)$.

We say that a $(K,M)$-\STT is \emph{monotonic} if for every subterm of the form
$\add{i}{j}{\procrel}\tau'$ (resp.\ $\add{i}{j}{\matchrel I}\tau'$ or
$\rename{i}{j}\tau'$) we have $j=\nxt_P(i)$ (resp.\ $i<j$ or
$\prv_P(i)<j<\nxt_P(i)$), where $P$ is the set of active colors in $\tau'$.

In the following, we prove one of our main results viz., the construction 
of the tree automaton \AKMV for checking the validity of the 
\STTs. 
\begin{theorem}\label{thm:AKMV}
We can build a tree automaton \AKMV with $M^{\cO(K)}$ states such that $\Lang(\AKMV)$ is the set of monotonic $(K,M)$-\STTs $\tau$ such that $\sem{\tau}$ is a realizable \TCW and the endpoints of $\sem{\tau}$ are the only colored points.
\end{theorem}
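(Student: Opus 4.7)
My plan is to construct $\AKMV$ as a bottom-up tree automaton that processes the input $(K,M)$-\STT $\tau$. At each subterm the automaton carries a finite abstraction of the split-\TCW $\sem{\tau}$ it has built so far: for every active color $i\in P$, a residue $r_i\in\{0,\dots,M-1\}$ representing the guessed timestamp of $\chi_\tau(i)$ modulo $M$, together with the block decomposition of $P$ (which colors form left/right/single endpoints of the same block) and the partition of $P$ into timing-connected components of $\procrel\cup\matchrel$. The justification for storing only residues modulo $M$ is Lemma~\ref{lem:integer} plus the fact that every matching interval in a $(K,M)$-\STT is contained in $[0,M-1]$: a constraint $i\matchrel^{[a,b]}j$ then forces $\ts(j)-\ts(i)\in[0,M-1]$, which is uniquely recovered from $(r_j-r_i)\bmod M$. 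Monotonicity of $\tau$ makes numeric color order agree with word order, so the block structure reduces to an $\mathrm{L}/\mathrm{R}/\mathrm{S}$-pattern and the component structure can be packaged into $2^{O(K)}$ bits; combined with $M^{|P|}\leq M^{2K}$ residue assignments this yields $M^{\cO(K)}$ states in total.

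The transitions of $\AKMV$ are defined by induction on $\tau$. At the leaf $(1,a)$ any residue $r_1$ is acceptable. At $\add{1}{2}{x,y}((1,a)\oplus(2,b))$ the automaton guesses $r_1,r_2$ with $(r_2-r_1)\bmod M\in\{x,\dots,y\}$, exactly encoding the single matching constraint. $\sttunion$ combines two states with disjoint color sets and leaves the two sub-terms in distinct timing-components. $\rename{i}{j}$ is a pure relabelling of colors and of the structural part of the state. $\forget{i}$ is enabled precisely when $\chi_\tau(i)$ has ceased to be a block endpoint, i.e.\ both of its incident process edges have been inserted, which is exactly what maintains the invariant ``the colored points are the block endpoints of $\sem{\tau}$''. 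Finally, $\add{i}{j}{\procrel}$ is accepted when $j=\nxt_P(i)$, color $i$ is a right endpoint and color $j$ is a left endpoint; it merges the two corresponding blocks and, if they used to lie in distinct timing-components, fuses those components.

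The main obstacle is implementing $\add{i}{j}{\procrel}$ correctly while using only the residue-level information. If $\chi_\tau(i),\chi_\tau(j)$ currently live in different timing-components, any integer translation of one component is allowed because all internal matching constraints are shift-invariant, so one can always arrange $\ts(\chi_\tau(j))\ge\ts(\chi_\tau(i))$; at the state level this is a nondeterministic shift of all residues of one component by some $s\bmod M$ before fusion. If $\chi_\tau(i)$ and $\chi_\tau(j)$ already share a component, the relative offset between them is fixed by earlier transitions and the automaton must check $\ts(\chi_\tau(j))\ge\ts(\chi_\tau(i))$; this genuinely requires more than the pair of residues, and my plan is to augment each component's slot with a small, compact piece of bounded integer bookkeeping (a canonical reference color together with, for each other color in the same component, its timestamp relative to that reference, stored in a form whose cardinality stays within $M^{\cO(K)}$ by exploiting that matching edges are installed only at leaves and that all subsequent operators are monotone). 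Soundness and completeness are then established by an induction on $\tau$ showing that $\AKMV$ has an accepting run reaching state $(P,r,\text{structure})$ on $\tau$ iff there exists an integer timestamp map realizing every matching constraint of $\sem{\tau}$ with $\ts(\chi_\tau(i))\equiv r_i\pmod M$ for all $i\in P$; at the root, combined with the endpoint invariant and Lemma~\ref{lem:integer}, this coincides with the language claimed in the theorem.
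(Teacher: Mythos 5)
Your high-level plan---process the term bottom-up, guess timestamps modulo $M$ for the active colors, and maintain the invariant that the colored points are exactly the block endpoints---is the same starting point as the paper's construction. But there are two genuine gaps, and the second one is precisely the heart of the theorem. First, your justification for working modulo $M$ rests on the claim that every matching interval of a $(K,M)$-\STT is contained in $[0,M-1]$, so that $\ts(j)-\ts(i)$ is ``uniquely recovered from $(r_j-r_i)\bmod M$.'' This is false: the grammar explicitly allows $y=+\infty$, so constraints of the form $[x,\infty)$ place no upper bound on the elapsed time, and even for a bounded interval the residue difference only determines the true difference if you \emph{additionally} know that the elapsed time is $<M$---which is exactly the extra information your state does not record.

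Second, you correctly identify the hard case ($\add{i}{j}{\procrel}$ when the two points are already timing-connected, and more generally deciding whether a constraint spanning several blocks is satisfied), but your fix---storing, per component, each color's exact timestamp relative to a reference color---cannot work within $M^{\cO(K)}$ states: those relative offsets are genuinely unbounded (a chain of $n$ constraints each forcing a gap near $M-1$ yields offsets of order $nM$), and no amount of ``exploiting that matching edges are installed only at leaves'' bounds them. Moreover, with interval constraints the offsets are not ``fixed'' at all; deciding their joint consistency is a difference-constraint feasibility problem, which is what the conference version of this paper solved via negative-cycle detection with a delicate argument that constants stay bounded. The journal proof avoids all of this with one additional idea you are missing: alongside $\tsm$ it stores, for each pair of \emph{consecutive} active colors, a single bit $\bg$ recording whether the elapsed time between them is \emph{small} ($<M$, hence exactly recoverable from the residues) or \emph{big} ($\geq M$). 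A constraint $i\matchrel^I j$ is then checked as $\BG(i,j)=\true$ and $D(i,j)\in I$, or $\BG(i,j)=\false$ and $I.up=\infty$; the only subtlety left is maintaining these bits coherently when $\sttunion$ shuffles blocks of the two arguments into each other's holes and when $\forget{i}$ merges gaps, and that is where the real work of the proof (Claims~\ref{claim:real-abstr} and~\ref{claim:oplus}) lies. Your proposal also treats separate components as freely translatable until fused by a process edge, but once their blocks are interleaved in the color order, monotonicity of time already couples them; the paper sidesteps this by always maintaining one global non-decreasing guessed timestamp abstraction rather than per-component offsets. Without the small/big abstraction (or an equivalent bounded certificate), your automaton either has unboundedly many states or cannot verify the constraints, so the proof as proposed does not go through.
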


A $(K,M)$-\STT $\tau$ defines a colored graph $\sem{\tau}=(G_\tau,\chi_\tau)$ 
where the graph $G_\tau$ is written $G_\tau=(V,\procrel,(\matchrel^I)_{I\in\mathcal{I}(M)},\lambda)$.
Examples of \STTs and their semantics are given in 
Table~\ref{table:terms-semantics-states}.
The graph $\sem{\tau}$ (or more precisely $G_{\tau}$) is a \TCW if it satisfies several MSO-definable
conditions.  First, $\procrel$ should be the successor relation of a total order
on $V$.  Second, each timing constraint should be compatible with the
total order: ${\matchrel^I}\subseteq{\procrel}^+$.  Also, the \TCW $\sem{\tau}$ 
should be realizable.
These graph properties are MSO definable.  Since a graph
$\sem{\tau}$ has an MSO-interpretation in the tree $\tau$, we deduce that there
is a tree automaton that accepts \emph{all} $(K,M)$-\STTs $\tau$ such that
$\sem{\tau}$ is a realizable \TCW (cf.~\cite{CourcelleBook}).

But this is not exactly what we want/need.  So we provide in the proof below a
direct construction of a tree automaton checking validity.  There are several
reasons for directly constructing the tree automaton $\AKMV$.  First, this
allows to have a clear upper-bound on the number of states of $\AKMV$ (this would be quite technical via MSO).  Second, simplicity of the \TCW and the bound on split-width can be enforced with no additional cost.

\begin{proof} 
  A state of \AKMV will be an abstraction of the graph defined by the \STT read
  so far.  The finite abstraction will keep only the colored points of the
  graph.  We will only accept \emph{monotonic} terms for which the natural order
  on the active colors coincides with the order of the corresponding vertices in
  the final \TCW.
The monotonicity 
ensures that the graph defined by the \STT is in fact a split-\TCW.

\begin{table}
  \noindent
  \begin{tabular}{|c|c|c|}
    \hline
    $\tau_1$ & $\tau_2$ & $\tau_3$
    \\ \hline
    \Tree[.$\add{1}{6}{\matchrel[2,\infty]}$ [.$\sttunion$ [.$(1,a)$ ] [.$(6,b)$ ] ] ]
    &
    \Tree[.$\forget{4}$ [.$\add{3}{4}{\procrel}$ [.$\sttunion$
      [.$\add{2}{3}{\matchrel[1,3]}$ [.$\sttunion$ [.$(2,c)$ ] [.$(3,d)$ ] ] ]
      [.$\add{4}{5}{\procrel}$ [.$\add{4}{5}{\matchrel[2,3]}$ [.$\sttunion$ [.$(4,c)$ ] [.$(5,d)$ ] ] ] ]
    ] ] ]
    &
    \Tree[.$\sttunion$
      [.$\rename{3}{5}$ [.$\add{1}{2}{\procrel}$ [.$\forget{5}$ [.$\add{5}{6}{\procrel}$ [.$\sttunion$ [.$\tau_1$ ] [.$\tau_2$ ] ] ] ] ] ]
      [.$\add{3}{4}{\procrel}$ [.$\add{3}{4}{\matchrel[2,\infty]}$ [.$\sttunion$ [.$(3,a)$ ] [.$(4,b)$ ] ] ] ]
    ]
    \rule{9mm}{0pt}
    \\ \hline
    ~\includegraphics[page=5]{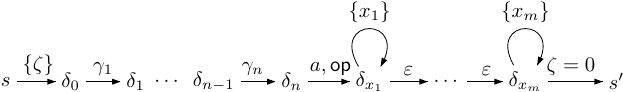}
    &
    \includegraphics[page=7]{gpicture-pics.pdf}
    &
    ~\includegraphics[page=9]{gpicture-pics.pdf}
    \\ \hline 
    \rule{0pt}{8mm}%
    \includegraphics[page=6]{gpicture-pics.pdf}
    &
    \includegraphics[page=8]{gpicture-pics.pdf}
    &
    \includegraphics[page=10]{gpicture-pics.pdf}
    \\ \hline  
  \end{tabular}
  
  \caption{The second line gives the tree representations of three monotonic $(6,4)$-\STTs 
  $\tau_1$, $\tau_2$, $\tau_3$, the third 
  line gives their semantics $\sem{\tau}=(G_\tau,\chi_\tau)$ together with a 
  realization $\ts$, the fourth line gives possible states $q$ of \AKMV after 
  reading the terms. The boolean maps $\sd$ and $\bg$ of a state are 
  represented as follows: $\sd(i)=\true$ iff there is a solid edge from $i$ to 
  $i^{+}$, $\bg(i)=\true$ iff the tag ``ac'' is between $\tsm(i)$ and 
  $\tsm(i^{+})$.}
  \protect\label{table:terms-semantics-states}
\end{table}

Moreover, to ensure realizability of the \TCW defined by a term, we will guess 
timestamps of vertices modulo $M$. We also guess 
while reading a subterm whether the time elapsed between two consecutive 
active colors is \emph{big} ($\geq M$) or \emph{small} ($<M$). Then, the 
automaton has to check that all these guesses are coherent and using these 
values it will check that every timing constraint is satisfied.

Formally, \emph{states of \AKMV} are tuples of the form $q=(P,\sd,\tsm,\bg)$,
where $P\subseteq\{1,\ldots,K\}$ are the colors, $\sd,\bg\colon P\to\Bool$ are two boolean-valued functions and $\tsm\colon P\to[M]=\{0,\ldots,M-1\}$ is the time-stamp value modulo $M$. 
The number of states of the tree automaton thus depends on $P$, 
and hence is $M^{\mathcal{O}(K)}$. 

Examples are given in Table~\ref{table:terms-semantics-states}.
The intuition is as follows.  The boolean map $\sd$ is used to check if there is
a solid/process edge (and hence no hole) immediately after $i$, i.e., $\sd(i)=1$
if $i\procrel i^+$.  Further, the boolean tag $\bg(i)$ is set, when the distance
between $i$ and $i^+$ is \emph{accurate}, i.e., can be computed from $\tsm$
values at $i$ and $i^+$ (by subtracting them modulo $M$).  These are maintained
inductively.  More precisely, when reading bottom-up a $(K,M)$-\STT $\tau$ with
$\sem{\tau}=(V,\procrel,\lambda,(\matchrel^I)_{I\in\mathcal{I}(M)},\chi)$, the automaton \AKMV
will reach a state $q=(P,\sd,\ts,\bg)$ such that
\begin{enumerate}[label=($\mathsf{A}_{\arabic*}$),ref=$\mathsf{A}_{\arabic*}$]
  \item\label{prop:A1} $P=\dom(\chi)$ is the set of \emph{active} (non
  forgotten) colors in $\tau$ and 
  
  $\EP(\sem{\tau})\subseteq\chi(P)$: the
  endpoints of $\sem{\tau}$ are colored.
  
  \item\label{prop:A2} For all $i\in P$, we have $\sd(i)=\true$ iff
  $i^+\neq\infty$ and $\chi(i)\procrel^+ \chi(i^+)$ in $\sem{\tau}$.
  
  \item\label{prop:A3} 
  Let ${\hole}=\{(\chi(i),\chi(i^+))\mid i\in P \wedge i^+\neq\infty \wedge
  \sd(i)=\false\}$.  Then, $(\sem{\tau},\hole)$ is a split-\TCW, i.e.,
  ${<}=({\procrel}\cup{\hole})^+$ is a total order on $V$, timing constraints in
  $\sem{\tau}$ are $<$-compatible ${\matchrel}^I\subseteq{<}$ for all $I$, the
  \emph{direct successor} relation of $<$ is ${\lessdot}={\procrel}\cup{\hole}$
  and ${\procrel}\cap{\hole}=\emptyset$.

  \item\label{prop:A4} 
  There exists a timestamp map $\ts\colon V\to\N$ such that
  \begin{itemize}
    \item all constraints are satisfied: $\ts(v)-\ts(u)\in I$ for all $u\matchrel^Iv$

    \item time is non-decreasing: $\ts(u)\leq\ts(v)$ for all $u\leq v$

    \item $(\tsm,\bg)$ is the modulo $M$ abstraction of $\ts$: for all $i\in P$ we
      have
      \begin{itemize}
      \item $\tsm(i)=\ts(\chi(i))[M]$ and,
        \item $\bg(i)=\true$ iff $i^+\neq\infty$, $\ts(\chi(i^+))-\ts(\chi(i))=(\tsm(i^{+})-\tsm(i))[M]$.
          \end{itemize}
  \end{itemize}
\end{enumerate}
We say that $q$ is a \emph{realizable abstraction} of $\tau$ if it satisfies conditions
(\ref{prop:A1}--\ref{prop:A4}).
The intuition behind~(\ref{prop:A4}) is that the finite state automaton \AKMV
cannot store the timestamp map $\ts$ witnessing realizability.  Instead, it
stores the modulo $M$ abstraction $(\tsm,\bg)$ i.e., at each $i$, we store the
modulo $M$ time-stamp of $i$, and a bit $\bg(i)$ that represents whether the
distance between $i$ and its successor can be computed accurately using the
time-stamp values modulo $M$.  We will next see that \AKMV can check
realizability based on the abstraction $(\tsm,\bg)$ of $\ts$ and can maintain
this abstraction while reading the term bottom-up.

We introduce some notations.
Let $q=(P,\sd,\tsm,\bg)$ be a state and let $i,j\in P$ with $i\leq j$.
We define $d(i,j)=(\tsm(j)-\tsm(i))[M]$ and 
$D(i,j)=\sum_{k\in P\mid i\leq k<j}d(k,k^+)$. 
We also define $\BG(i,j)=\bigwedge_{k\in P\mid i\leq k<j}\bg(k)$.
If the state is not clear from the context, then we write $d_q(i,j)$,
$D_q(i,j)$, $\BG_q(i,j)$.
For instance, with the state $q_3$ corresponding to the term $\tau_3$ of
Table~\ref{table:terms-semantics-states}, we have $\BG(1,5)=\true$, $d(1,5)=1$
and $D(1,5)=5=\ts(5)-\ts(1)$ is the \emph{accurate} value of the time elapsed.
Whereas, $\BG(2,6)=\false$ and $d(2,6)=0$, $D(2,6)=4$ are both \emph{strict
modulo-$M$ under-approximations} of the time elapsed $\ts(6)-\ts(2)=8$.

\begin{claim}\label{claim:real-abstr}
  Let $q$ be a state and $\tau$ be an \STT. Then the following hold:
  \begin{enumerate}
    \item Assume (\ref{prop:A1}--\ref{prop:A3}) are satisfied.  Then, for all
    $i,j\in P$ we have $i<j$ iff $\chi(i)<\chi(j)$: the natural ordering on
    colors coincide with the ordering of colored points in the split-\TCW
    $(\sem{\tau},\hole)$.
  
    \item Assume that $\ts$ is a timestamp map satisfying items 2 and 3 of
    \eqref{prop:A4}. Then, for all $i,j\in P$ such that $i\leq j$, we have
    $d(i,j)=D(i,j)[M]=(\ts(\chi(j))-\ts(\chi(i)))[M]$ and
    $d(i,j)\leq D(i,j)\leq\ts(\chi(j))-\ts(\chi(i))$ 
    ($d$ and $D$ give modulo
    $M$ under-approximations of the actual time elapsed). Moreover, 
    $\BG$ tells whether $D$ gives the \emph{accurate} elapse of time:
    \begin{align*}
      \bg(i)=\true & \Longleftrightarrow d(i,i^+)=\ts(\chi(i^+))-\ts(\chi(i)) 
      \\
      \BG(i,j)=\true & \Longleftrightarrow D(i,j)=\ts(\chi(j))-\ts(\chi(i)) \\
      \BG(i,j)=\false & \implies \ts(\chi(j))-\ts(\chi(i)) \geq M 
    \end{align*}
  \end{enumerate}
\end{claim}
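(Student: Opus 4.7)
\medskip

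Part~(1) can be obtained by a direct one-step deduction from (A2) and the definition of $\hole$ given in (A3), rather than by induction on $\tau$. For any two consecutive colors $i,i^+\in P$, consider the value of $\sd(i)$: if $\sd(i)=\false$ then by the definition of $\hole$ we have $(\chi(i),\chi(i^+))\in\hole\subseteq\lessdot$, while if $\sd(i)=\true$ then (A2) gives $\chi(i)\procrel^+\chi(i^+)$. Either way $\chi(i)<\chi(i^+)$ in the total order from (A3), and iterating along $P$ yields $\chi(i)<\chi(j)$ whenever $i<j$. The converse direction is immediate from the injectivity of $\chi$ together with the totality of $<$.

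Part~(2) is a sequence of short calculations from the definitions. The identity $d(i,j)=(\tsm(j)-\tsm(i))[M]$ is definitional, and substituting $\tsm(k)=\ts(\chi(k))[M]$ (item~3 of (A4)) gives $d(i,j)=(\ts(\chi(j))-\ts(\chi(i)))[M]$. Telescoping yields the congruence $D(i,j)\equiv\ts(\chi(j))-\ts(\chi(i))\pmod M$. Because $\ts$ is non-decreasing (item~2 of (A4)), each consecutive difference $\ts(\chi(k^+))-\ts(\chi(k))$ is a non-negative lift of its residue $d(k,k^+)$ modulo $M$, hence equals $d(k,k^+)+n_k M$ with $n_k\in\mathbb N$. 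Summing these over $k\in[i,j)\cap P$ gives $D(i,j)\leq\ts(\chi(j))-\ts(\chi(i))$, and $d(i,j)\leq D(i,j)$ follows because $d(i,j)<M$ while $d(i,j)\equiv D(i,j)\pmod M$ with $D(i,j)\geq 0$.

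The three statements about $\bg$ and $\BG$ then fall out of the same calculation. The $\bg(i)$ equivalence is literally the definition in item~3 of (A4) restated using $d(i,i^+)$. For the $\BG$ equivalence, the forward direction is a telescoping sum; the backward direction uses the termwise inequality $d(k,k^+)\leq\ts(\chi(k^+))-\ts(\chi(k))$ proved above, so equality of the sum $D(i,j)=\ts(\chi(j))-\ts(\chi(i))$ forces termwise equality and hence $\bg(k)=\true$ for all $k\in[i,j)\cap P$. Finally, if $\BG(i,j)=\false$, pick any witness $k\in[i,j)\cap P$ with $\bg(k)=\false$: then $\ts(\chi(k^+))-\ts(\chi(k))\neq d(k,k^+)$, which together with the modulo-$M$ congruence and the non-negative lift forces $\ts(\chi(k^+))-\ts(\chi(k))\geq d(k,k^+)+M\geq M$; monotonicity of $\ts$ then yields $\ts(\chi(j))-\ts(\chi(i))\geq M$.

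There is no real obstacle here: the claim essentially unpacks the abstraction in (A4). The only subtle point to keep straight is the direction of the inequality $d(k,k^+)\leq\ts(\chi(k^+))-\ts(\chi(k))$, which relies crucially on $\ts$ being non-decreasing so that the actual difference is a non-negative lift of its residue modulo $M$.
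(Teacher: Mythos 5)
Your proposal is correct and follows essentially the same route as the paper's proof: part (1) via $\chi(i)<\chi(i^+)$ for consecutive colors (from (\ref{prop:A2})--(\ref{prop:A3})) plus transitivity and totality of $<$, and part (2) by the same modulo-$M$ telescoping computation, with the non-decreasing property of $\ts$ giving the termwise inequality $d(k,k^+)\leq\ts(\chi(k^+))-\ts(\chi(k))$ that drives all three $\bg$/$\BG$ statements. The only difference is that you spell out a few steps the paper leaves implicit (the case split on $\sd(i)$ and the termwise-equality argument for the backward $\BG$ direction), which is harmless.
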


\begin{proof}
  1.  From (\ref{prop:A2}--\ref{prop:A3}) we immediately get $\chi(i)<\chi(i^+)$
  for all $i\in P$ such that $i^+\neq\infty$.  By transitivity we obtain
  $\chi(i)<\chi(j)$ for all $i,j\in P$ with $i<j$.  Since $<$ is a strict total
  order on $V$, we deduce that, if $\chi(i)<\chi(j)$ for some $i,j\in P$, then
  $j\leq i$ is not possible.
  
  \smallskip\noindent
  2. Let $i,j\in P$ with $i\leq j$. Using items 2 and 3 of \eqref{prop:A4} we get
  $$
  d(i,j) = (\tsm(j)-\tsm(i))[M] = (\ts(\chi(j))[M]-\ts(\chi(i))[M])[M] = 
  (\ts(\chi(j))-\ts(\chi(i)))[M] \,.
  $$
  Applying this equality for every pair $(k,k^+)$ such that $i\leq k<j$ we get 
  $D(i,j)[M]=(\ts(\chi(j))-\ts(\chi(i)))[M]$. Since $\ts$ is non-decreasing 
  (item 2 of \ref{prop:A4}), it follows that 
  $d(i,j)\leq D(i,j)\leq\ts(\chi(j))-\ts(\chi(i))$.
  
  Now, using again \eqref{prop:A4} we obtain $\bg(k)=\true$ iff
  $d(k,k^+)=\ts(\chi(k^+))-\ts(\chi(k))$. Applying this to all $i\leq k<j$ we 
  get $\BG(i,j)=\true$ iff $D(i,j)=\ts(\chi(j))-\ts(\chi(i))$.
  
  Finally, $\BG(i,j)=\false$ implies $\bg(k)=\false$ for some $i\leq k<j$.  Using
  \eqref{prop:A4} we obtain $\ts(\chi(j))-\ts(\chi(i)) \geq
  \ts(\chi(k^+))-\ts(\chi(k)) \geq M$.
\end{proof}

\begin{table}
  \begin{tabular}{|c|p{120mm}|}
    \hline
    $(i,a)$ 
    &
    $\xrightarrow{(i,a)}q=(P,\sd,\tsm,\bg)$ 
    is a transition if
    $P=\{i\}$, $\sd(i)=\false$ and $\bg(i)=\false$.  
    When reading an atomic \STT $\tau=(i,a)$, only $\tsm(i)$ is guessed.  
    \\ \hline
    $\rename{i}{j}$
    &
    $q=(P,\sd,\tsm,\bg)\xrightarrow{\rename{i}{j}}q'=(P',\sd',\tsm',\bg')$ 
    is a transition if $i\in P$ and $i^-<j<i^+$.  
    Then, $q'$ is obtained from $q$ by replacing $i$ by $j$.  
    \\ \hline
    $\forget{i}$
    &
    $q=(P,\sd,\tsm,\bg)\xrightarrow{\forget{i}}q'=(P',\sd',\tsm',\bg')$ 
    is a transition if
    $i^-,i,i^+\in P$ and $\sd(i^-)=\true=\sd(i)$ (endpoints should stay
    colored).  Then, state $q'$ is deterministically given by
    $P'=P\setminus\{i\}$ and $\bg'(i^-)=\BG(i^-,i^+)\wedge(D(i^-,i^+)<M)$, the
    other values of $\sd',\tsm',\bg'$ being inherited from $\sd,\tsm,\bg$.
    \\ \hline
    $\add{i}{j}{\procrel}$
    &
    $q=(P,\sd,\tsm,\bg)\xrightarrow{\add{i}{j}{\procrel}}q'=(P,\sd',\tsm,\bg)$
    is a transition if
    $i,j\in P$, $j=i^+$ and $\sd(i)=\false$ ($\chi(i)$ does not already
    have a $\procrel$ successor).  The update is given by $\sd'(i)=\true$ since
    we have added a $\procrel$-edge between $\chi(i)$ and $\chi(i^+$), and all
    other values are unchanged.
    \\ \hline
    $\add{i}{j}{\matchrel I}$
    &
    $q=(P,\sd,\tsm,\bg)\xrightarrow{\add{i}{j}{\matchrel I}}q$ 
    is a transition if $i,j\in P$, $i<j$ and either ($\BG(i,j)=\true$ and
    $D(i,j)\in I$) or ($\BG(i,j)=\false$ and $I.up=\infty$).
    \\ \hline
    $\oplus$
    &
    $q_1,q_2\xrightarrow{\oplus}q$
    where $q_1=(P_1,\sd_1,\tsm_1,\bg_1)$,
    $q_2=(P_2,\sd_2,\tsm_2,\bg_2)$ and $q=(P,\sd,\tsm,\bg)$
    is a transition if
    the following hold 
    \begin{itemize}
      \item $P_1\cap P_2=\emptyset$: the $\oplus$ operation on \STTs requires that
      the ``active'' colors of the two arguments are disjoint. 
      \item $P=P_1\cup P_2$, $\sd=\sd_1\cup\sd_2$ and $\tsm=\tsm_1\cup\tsm_2$:
      these updates are deterministic.
      \item the $\procrel$-blocks of the two arguments are shuffled according to
      the ordering of the colors.  Formally, we check that it is not possible to
      insert a point from one argument inside a $\procrel$-block of the other
      argument: 
      \newline
      $\forall i\in P_1\qquad  \sd_1(i)\implies \nxt_P(i)=\nxt_{P_1}(i)$
      \newline
      $\forall i\in P_2\qquad \sd_2(i)\implies \nxt_P(i)=\nxt_{P_2}(i)$.
      \item Finally, $\bg$ satisfies $\bg(\max(P))=\false$ and
      \newline
      $\forall i\in P_1\setminus\{\max(P_1)\}\qquad \bg_1(i)\Longleftrightarrow
      \BG_q(i,\nxt_{P_1}(i)) \wedge D_q(i,\nxt_{P_1}(i)) < M$
      \newline
      $\forall i\in P_2\setminus\{\max(P_2)\}\qquad \bg_2(i)\Longleftrightarrow 
      \BG_q(i,\nxt_{P_2}(i)) \wedge D_q(i,\nxt_{P_2}(i)) < M$.
    \end{itemize}
    \\ \hline
    &
    Notice that these conditions have several consequences. 
    \begin{itemize}
      \item For all $i\in P_1$, if $\nxt_{P}(i)=\nxt_{P_1}(i)$ then $\bg(i)=\bg_1(i)$. 
      \newline
      For all $i\in P_2$, if $\nxt_{P}(i)=\nxt_{P_2}(i)$ then $\bg(i)=\bg_2(i)$.
      \item For all $i\in P_1$, if $\nxt_{P}(i)<\nxt_{P_1}(i)=j<\infty$ (some points
      $\nxt_P(i),\ldots,\prv_P(j)$ of the second argument have been inserted in
      the hole $(i,j)$ of the first argument) 
      then $\bg_1(i)=\true$ implies $\bg(i)=\bg(\prv_P(j))=\true$ and
      $\BG_{q_2}(\nxt_P(i),\prv_P(j))=\true$.
      \newline
      For all $i\in P_2$, if $\nxt_{P}(i)<\nxt_{P_2}(i)=j<\infty$ then 
      $\bg_2(i)=\true$ implies $\bg(i)=\bg(\prv_P(j))=\true$, 
      $\BG_{q_1}(\nxt_P(i),\prv_P(j))=\true$.
    \end{itemize}
    \\ \hline
  \end{tabular}
  
  \caption{Transitions of \AKMV.}
  \protect\label{tab:AKMV}
\end{table}

The \emph{transitions of \AKMV} are defined in Table~\ref{tab:AKMV}. 

\begin{lemma}\label{lem:AKMV-tcw}
  Let $\tau$ be a $(K,M)$-\STT and assume that $\AKMV$ has a run on $\tau$
  reaching state $q$.  Then, $\tau$ is monotonic and $q$ is a realizable
  abstraction of $\tau$.
\end{lemma}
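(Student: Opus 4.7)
The plan is to prove the lemma by structural induction on $\tau$, with one inductive case for each rule in Table \ref{tab:AKMV}. For the base case $\tau=(i,a)$, I would observe that $\sem{\tau}$ is a single colored vertex $\chi(i)$, that monotonicity is vacuous, and that setting $\ts(\chi(i))=\tsm(i)$ witnesses $(\mathsf{A}_4)$ since $i^+=\infty$ forces $\bg(i)=\false$ in the transition rule; conditions $(\mathsf{A}_1)$--$(\mathsf{A}_3)$ are immediate.

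For the four unary rules, monotonicity of $\tau$ falls out of the syntactic preconditions ($j=i^+$ for $\add{i}{j}{\procrel}$, $i<j$ for $\add{i}{j}{\matchrel I}$, and $i^-<j<i^+$ for $\rename{i}{j}$). Since $\rename$, $\add{i}{j}{\procrel}$, and $\add{i}{j}{\matchrel I}$ leave the vertex set untouched, I plan to reuse the inductive timestamp map $\ts$; the updates of $P,\sd,\tsm,\bg$ as prescribed by the table are then easily checked against $(\mathsf{A}_1)$--$(\mathsf{A}_3)$. The two places that require genuine work are: verifying the newly added $\matchrel^I$ edge for $\add{i}{j}{\matchrel I}$, which I would handle by case analysis on $\BG(i,j)$ using Claim \ref{claim:real-abstr}---if $\BG(i,j)=\true$ then $D(i,j)$ is the exact elapse and lies in $I$ by the transition guard, and if $\BG(i,j)=\false$ then the true elapse is at least $M$ and belongs to $I$ because $I.up=\infty$---and verifying for $\forget{i}$ that the updated value $\bg'(i^-)=\BG(i^-,i^+)\wedge(D(i^-,i^+)<M)$ precisely records whether $D(i^-,i^+)$ equals the actual elapse across the merged pair, which is again an immediate application of Claim \ref{claim:real-abstr}.

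The main obstacle is the $\oplus$ case, where I must fuse the timestamp maps $\ts_1$ on $V_1$ and $\ts_2$ on $V_2$ supplied by induction into a single $\ts$ on the disjoint union witnessing $(\mathsf{A}_4)$ for the combined state $q$. My plan is to process active colors in increasing order: within a maximal segment of colors drawn from a single $P_\ell$ and joined by $\sd_\ell$-edges, I would translate the inductive $\ts_\ell$ by a constant that is $0$ modulo $M$ so that its leftmost point matches $\tsm$; across a hole in the combined state I would insert a non-negative jump that is a multiple of $M$, chosen to exceed $M$ whenever the combined $\bg$ predicts an inaccurate elapse. The bi-implications displayed in the $\oplus$ row of Table \ref{tab:AKMV}, together with the consequences listed just below it, are exactly what I need so that when points of one argument are inserted inside a solid block of the other, the accurate elapse predicted by $\bg_\ell$ on one side is consistent with the $\BG$ sum of the interleaved chunk from the other side. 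Once $\ts$ is built this way, the non-decreasing property and the preservation of every original timing constraint from the two subterms are automatic, since solid segments retain their exact elapses and holes have only been replaced by larger non-negative jumps; monotonicity of the combined $\tau$ follows from the disjointness of $P_1$ and $P_2$ and the ordering of colors in $P=P_1\cup P_2$.
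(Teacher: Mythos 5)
Your overall strategy coincides with the paper's: induct on the term, maintain the invariants (\ref{prop:A1}--\ref{prop:A4}), reuse the inductive timestamp map for the unary operations, invoke Claim~\ref{claim:real-abstr} for $\forget{i}$ and $\add{i}{j}{\matchrel I}$, and in the $\oplus$ case stitch $\ts_1$ and $\ts_2$ together by translating solid blocks by multiples of $M$ and choosing each new hole's jump according to $\bg$. Two slips in your $\oplus$ description: the jump across a new hole $(i,i^{+})$ is not ``a multiple of $M$'' but $d_q(i,i^{+})$ or $d_q(i,i^{+})+M$ (it must realize the change of $\tsm$ across the hole, so it is determined only modulo $M$); and points of one argument are only ever inserted into \emph{holes} of the other, never into solid blocks---the shuffle condition in the $\oplus$ transition forbids the latter.

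The genuine gap is your final step: that preservation of the original timing constraints is ``automatic, since solid segments retain their exact elapses and holes have only been replaced by larger non-negative jumps.'' The second half of that premise is false. Take $M=10$ and a hole of $q_1$ between colors $i$ and $j=\nxt_{P_1}(i)$ with $\tsm_1(i)=0$, $\tsm_1(j)=3$ and $\bg_1(i)=\false$: the inductive map $\ts_1$ may realize this hole with elapse $23$, while after inserting a block of $\tau_2$ into it the new elapse need only be congruent to $3$ modulo $10$ and at least $M$, e.g.\ $13<23$. So a constraint $u\matchrel^I v$ spanning that hole does not survive by any monotonicity of jumps. What actually makes it survive is the dichotomy isolated in Claim~\ref{claim:oplus}, proved by induction along $P_1$ from the $\oplus$ transition conditions relating $\bg_1$ to $\BG_q$ and $D_q$: if $\BG_{q_1}(i,j)=\true$ then the new elapse from $\chi(i)$ to $\chi(j)$ equals $\ts_1(\chi(j))-\ts_1(\chi(i))$ exactly, and if $\BG_{q_1}(i,j)=\false$ then both the old and the new elapses are at least $M$, which suffices because every interval of a $(K,M)$-\STT has lower bound $<M$ and upper bound $<M$ or $\infty$. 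Without this lemma (or an equivalent) the $\oplus$ case is not closed; the rest of your outline is sound.
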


\begin{proof}
  The conditions on the transitions for $\rename{i}{j}$, $\add{i}{j}{\procrel}$ 
  and $\add{i}{j}{\matchrel I}$ directly ensure that the term is monotonic.
  We show that (\ref{prop:A1}--\ref{prop:A4}) are maintained by transitions of \AKMV.
\begin{itemize}
  \item Atomic \STTs: Consider a transition $\xrightarrow{(i,a)}q$ of \AKMV.
  
  It is clear that $q$ is a realizable abstraction of the atomic \STT $\tau=(i,a)$.
  
  \item $\rename{i}{j}$:
  Consider a transition $q\xrightarrow{\rename{i}{j}}q'$ of \AKMV.
  
  Assume that $q$ is a realizable abstraction of some $(K,M)$-\STT $\tau$ and 
  let $\tau'=\rename{i}{j} \tau$. It is easy to check that $q'$ is a 
  realizable abstraction of $\tau'$.

  \item $\forget{i}$:
  Consider a transition $q\xrightarrow{\forget{i}}q'$ of \AKMV.

  Assume that $q$ is a realizable abstraction of some $(K,M)$-\STT $\tau$ and 
  let $\tau'=\forget{i} \tau$. It is easy to check that $q'$ is a 
  realizable abstraction of $\tau'$. In particular, the correctness of the 
  update $\bg'(i^-)$ follows from Claim~\ref{claim:real-abstr}.

  \item $\add{i}{j}{\procrel}$:
  Consider a transition $q\xrightarrow{\add{i}{j}{\procrel}}q'$ of \AKMV. 
  
  Assume that $q$ is a realizable abstraction of some $(K,M)$-\STT $\tau$ and 
  let $\tau'=\add{i}{j}{\procrel} \tau$. It is easy to check that $q'$ is a 
  realizable abstraction of $\tau'$.
  
  \item $\add{i}{j}{\matchrel I}$: 
  Consider a transition $q\xrightarrow{\add{i}{j}{\matchrel I}}q$ of \AKMV.

  Assume that $q$ is a realizable abstraction of some $(K,M)$-\STT $\tau$ with 
  the timestamp map $\ts$. It is easy to check that $q'$ is a 
  realizable abstraction of $\tau'=\add{i}{j}{\matchrel I} \tau$ with the same 
  timestamp map $\ts'=\ts$. We only have to check the properties for the new 
  timing constraint. 
  First, condition $i<j$ and Claim~\ref{claim:real-abstr} ensures that
  the new timing constraint is compatible with the linear order ${<}'={<}$.

  Second, by Claim~\ref{claim:real-abstr}, if $\BG(i,j)=\false$ then the time 
  elapsed between $\chi(i)$ and $\chi(j)$ is big and
  the timing constraint $I$ is satisfied iff $I.up=\infty$.  On the other hand, 
  if $\BG(i,j)=\true$ then using again Claim~\ref{claim:real-abstr} we deduce 
  that $D(i,j)$ is the actual time elapsed between $\chi(i)$ and $\chi(j)$ and 
  the timing constraint $I$ is satisfied iff $D(i,j)\in I$.
  
  \item $\oplus$:
  Consider a transition $q_1,q_2\xrightarrow{\oplus}q$ of \AKMV.

    Assume that $q_1$ and $q_2$ are realizable abstractions of some 
    $(K,M)$-\STTs $\tau_1$ and $\tau_2$ with timestamp maps $\ts_1$ and $\ts_2$ 
    respectively. Let $\tau=\tau_1\oplus\tau_2$. We show that $q$ is a 
    realizable abstraction of $\tau$. Notice that $\sem{\tau}$ is the disjoint 
    union of $\sem{\tau_1}$ and $\sem{\tau_2}$.
    
    \medskip\noindent\eqref{prop:A1} %
    We have $\dom(\chi)=\dom(\chi_1)\cup\dom(\chi_2)=P_1\cup P_2=P$.
    
    Moreover, 
    $\EP(\sem{\tau})=\EP(\sem{\tau_1})\cup\EP(\sem{\tau_2})
    \subseteq\chi(P_1)\cup\chi(P_2)=\chi(P)$.
    
    \medskip\noindent\eqref{prop:A2} %
    Let $i\in P_1$ be such that $j=\nxt_P(i)=\nxt_{P_1}(i)$.  We have
    $\sd(i)=\true$ iff $\sd_1(i)=\true$ iff $j\neq\infty$ and
    $\chi_1(i)\procrel^+ \chi_1(j)$ in $\sem{\tau_1}$ iff $j\neq\infty$ and
    $\chi(i)\procrel^+ \chi(j)$ in $\sem{\tau}$.
    
    Let $i\in P_1$ be such that $j=\nxt_P(i)<\nxt_{P_1}(i)$.  We have
    $\sd(i)=\sd_1(i)=\false$, $j\neq\infty$ and there are no $\procrel$-path
    from $\chi(i)=\chi_1(i)$ to $\chi(j)=\chi_2(j)$ in $\sem{\tau}$ since
    $\sem{\tau_1}$ and $\sem{\tau_2}$ are disjoint.
    
    We argue similarly for $i\in P_2$ which concludes the proof of 
    Condition~\eqref{prop:A2}.
    
    \medskip\noindent\eqref{prop:A3} %
    Let ${\hole}=\{(\chi(i),\chi(j))\mid i\in P \wedge j=\nxt_P(i)\neq\infty
    \wedge \sd(i)=\false\}$.  Let ${<}=({\procrel}\cup{\hole})^+$.  Using
    \eqref{prop:A2} and the definition of $<$, it is easy to see that for all
    $i,j\in P$, if $i<j$ then $\chi(i)<\chi(j)$.  We deduce that
    ${<_1}\cup{<_2}\subseteq{<}$.
    
    Let $u\matchrel^I v$ be a timing constraint in $\sem{\tau}$.  Either it is
    in $\sem{\tau_1}$ and it is compatible with $<_1$, hence also with $<$.  Or
    it is in $\sem{\tau_2}$ and it is compatible with $<_2$ and with $<$.
    
    Next, we show that blocks of $\sem{\tau_1}$ and $\sem{\tau_2}$ do not 
    overlap in $(\sem{\tau},\hole)$. Let $i_1,j_1$ be the colors of the left and right 
    endpoints of some block $\chi(i_1)\procrel^*\chi(j_1)$ in $\sem{\tau_1}$.
    For all $k\in P_1$ such that $i_1\leq k<j_1$ we have
    $\chi(i_1)\procrel^*\chi(k)\procrel^+\chi(\nxt_{P_1}(k))\procrel^*\chi(j_1)$.
    Applying \eqref{prop:A2} we get $\sd_1(k)=\true$.  Now, using the definition
    of the transition for $\oplus$, we get $\nxt_P(k)=\nxt_{P_1}(k)$.  We deduce
    by induction that for all $\ell\in P_2$, if $i_1<\ell$ then $j_1<\ell$.
    By symmetry, the same holds for a block of $\sem{\tau_2}$ and a color of
    $P_1$. 
    
    Let $i_2,j_2$ be the colors of the left and right endpoints of some
    block $\chi(i_2)\procrel^*\chi(j_2)$ in $\sem{\tau_2}$.
    Either $i_1<i_2$ and we get $j_1<i_2$ hence $\chi(j_1)<\chi(i_2)$.  
    Or $i_2<i_1$ and we get $j_2<i_1$ hence $\chi(j_2)<\chi(i_1)$.  
    We deduce that the blocks of $\sem{\tau_1}$ and $\sem{\tau_2}$ are shuffled
    in $(\sem{\tau},\hole)$ according to the order of the colors of their endpoints.
    
    Therefore, $<$ is a total order on $V$ and $(\sem{\tau},\hole)$ is a 
    split-\TCW.
        
    \medskip\noindent\eqref{prop:A4} %
    We construct the timestamp map $\ts$ for $\tau$ inductively on 
    $V=V_1\uplus V_2$ following the successor relation 
    ${\lessdot}={\procrel}\cup{\hole}$. If 
    $v=\min(V)$ is the first point of the split-\TCW, we let 
    $$
    \ts(v)=\begin{cases}
      \ts_1(v) & \text{if } v\in V_1 \\
      \ts_2(v) & \text{if } v\in V_2 \,.
    \end{cases}
    $$
    Next, if $\ts(u)$ is defined and $u\procrel v$ then we let
    $$
    \ts(v)=\begin{cases}
      \ts(u) + \ts_1(v)-\ts_1(u) & \text{if } u\in V_1 \\
      \ts(u) + \ts_2(v)-\ts_2(u) & \text{if } u\in V_2 \,.
    \end{cases}
    $$
    Finally, if $\ts(u)$ is defined and $u\hole v$ then, with $i,j\in P$ 
    being the colors of $u$ and $v$ ($\chi(i)=u$ and $\chi(j)=v$), we let
    $$
    \ts(v)=\begin{cases}
      \ts(u) + d_q(i,j) & \text{if } \bg(i)=\true \\
      \ts(u) + d_q(i,j)+M & \text{if } \bg(i)=\false \,.
    \end{cases}
    $$
    With this definition, the following hold
    \begin{itemize}
      \item Time is clearly non-decreasing: $\ts(u)\leq\ts(v)$ for all $u\leq v$
      
      \item $(\tsm,\bg)$ is the modulo $M$ abstraction of $\ts$.  The proof is
      by induction. 
      
      First, if $i=\min(P)$ then $v=\min(V)=\chi(i)$ and $i\in
      P_1$ iff $v\in V_1$.  Using the definitions of $\tsm$ and $\ts$, we
      deduce easily that $\tsm(i)=\ts(v)[M]$.  
      
      Next, let $i\in P$ with $j=\nxt_P(i)<\infty$.  Let $u=\chi(i)$,
      $v=\chi(j)$ and assume that $\tsm(i)=\ts(u)[M]$.
      
      If $u\procrel^+v$ and $u\in V_1$ then $v\in V_1$, $i,j\in P_1$, 
      $\sd_1(i)=\true$ and 
      \begin{align*}
        \ts(v)[M] &= (\ts(u)[M] + \ts_1(v)[M]-\ts_1(u)[M])[M] \\
        &= (\tsm(i)+\tsm_1(j)-\tsm_1(i))[M]
        = \tsm(j) \,.
      \end{align*}
      Moreover, $\bg(i)=\bg_1(i)$ and $\ts(v)-\ts(u)=\ts_1(v)-\ts_1(u)$. We 
      deduce that $\bg(i)=\true$ iff $\ts(v)-\ts(u)< M$.
      The proof is similar if $u\procrel^+v$ and $u\in V_2$.
      
      Now, if $u\not\procrel^+v$ then $u\hole v$ (endpoints are always 
      colored). We deduce that
      $$
      \ts(v)[M] = (\ts(u)[M] + d_q(i,j))[M] = (\tsm(i)+\tsm(j)-\tsm(i))[M] =
      \tsm(j) \,.
      $$
      Moreover, it is clear that $\ts(v)-\ts(u)< M$ iff $\bg(i)=\true$.
            
      \item Constraints are satisfied. Let $u\matchrel^Iv$ be a timing 
      constraint in $\sem{\tau}=\sem{\tau_1}\uplus\sem{\tau_2}$. Wlog we 
      assume that $u,v\in V_1$. We know that $\ts_1(v)-\ts_1(u)\in I$.
      
      If $u\procrel^+v$ then we get $\ts(v)-\ts(u)=\ts_1(v)-\ts_1(u)$ from the
      definition of $\ts$ above.  Hence, $\ts(v)-\ts(u)\in I$.
      
      Now assume there are holes between $u$ and $v$ in $(\sem{\tau},\hole)$.
      Let $u'$ be the right endpoint of the block of $u$ and $v'$ be the left 
      endpoint of the block of $v$. Since endpoints are colored we find 
      $i,j\in P$ such that $u'=\chi(i)$ and $v'=\chi(j)$. We have $i,j\in 
      P_1$, $i<j$, $u\procrel^* u'$ and $v'\procrel^* v$. We deduce from the 
      definition of $\ts$ that $\ts(u')-\ts(u)=\ts_1(u')-\ts_1(u)$ and
      $\ts(v)-\ts(v')=\ts_1(v)-\ts_1(v')$. Now, using Claim~\ref{claim:oplus} 
      below we obtain:
      \begin{itemize}
        \item Either $\BG_{q_1}(i,j)=\false$ and $\ts(v')-\ts(u')\geq M$. From 
        Claim~\ref{claim:real-abstr} we also have $\ts_1(v')-\ts_1(u')\geq 
        M$. We deduce that $I.up=\infty$ and $\ts(v)-\ts(u)\in I$.
        
        \item Or $\BG_{q_1}(i,j)=\true$ and 
        $\ts(v')-\ts(u')=\ts_1(v')-\ts_1(u')$. Therefore, 
        \begin{align*}
          \ts(v)-\ts(u) & = \ts(v)-\ts(v') + \ts(v')-\ts(u') + \ts(u')-\ts(u) \\
          & = \ts_1(v)-\ts_1(v') + \ts_1(v')-\ts_1(u') + \ts_1(u')-\ts_1(u) \\
          & = \ts_1(v)-\ts_1(u)\in I \tag*{\qEd}
        \end{align*}
      \end{itemize}
    \end{itemize}
  \end{itemize}
  \def\popQED{}
\end{proof}
      
\begin{claim}\label{claim:oplus}
  Let $i,j\in P_1$ with $i\leq j$ and let $u=\chi(i)$ and $v=\chi(j)$.
  \begin{enumerate}
    \item If $\BG_{q_1}(i,j)=\false$ then $\ts(v)-\ts(u)\geq M$.
    
    \item If $\BG_{q_1}(i,j)=\true$ then $\ts(v)-\ts(u)=\ts_1(v)-\ts_1(u)$.
  \end{enumerate}
\end{claim}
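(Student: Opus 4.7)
The plan is to reduce both parts of the claim to the ``consecutive'' case $\ell = \nxt_{P_1}(k)$, showing there that (a) if $\bg_1(k)=\false$ then $\ts(\chi(\ell))-\ts(\chi(k))\geq M$, and (b) if $\bg_1(k)=\true$ then $\ts(\chi(\ell))-\ts(\chi(k))=\ts_1(\chi(\ell))-\ts_1(\chi(k))$. Summing over the $P_1$-consecutive pairs in $[i,j]$ then yields part~2 telescopically; and since $\ts$ is non-decreasing, a single $P_1$-consecutive step with $\bg_1=\false$ already forces $\ts(v)-\ts(u)\geq M$, giving part~1.

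For the consecutive case I would first dispatch the subcase when no $P_2$-color is inserted strictly between $k$ and $\ell$ in $P$. Here the $\oplus$-transition directly yields $\bg(k)=\bg_1(k)$ and $d_q(k,\ell)=d_{q_1}(k,\ell)$. When $\sd_1(k)=\true$, the block of $\chi(k)$ is in $V_1$ and the recipe for $\ts$ inherits from $\ts_1$; when $\sd_1(k)=\false$, the hole $\chi(k)\hole\chi(\ell)$ is already present in $\sem{\tau_1}$, so both recipes agree (giving $d_q(k,\ell)$, or $d_q(k,\ell)+M$ uniformly). In either subcase, (a) and (b) reduce to the corresponding statements of Claim~\ref{claim:real-abstr} applied to the realizable abstraction $q_1$.

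The interesting subcase is when points $m_1<\cdots<m_t$ of $P_2$ are inserted, i.e.\ $\nxt_P(k)=m_1$, $\nxt_P(m_l)=m_{l+1}$, $\nxt_P(m_t)=\ell$. The transition rule $\sd_1(k)\Rightarrow \nxt_P(k)=\nxt_{P_1}(k)$ forces $\sd_1(k)=\false$, so $\chi_1(k)\hole\chi_1(\ell)$ in $\sem{\tau_1}$, and the $m_l$ are $P_2$-consecutive. For (b), the $\oplus$-consequence $\bg_1(k)=\true \Rightarrow \bg(k)=\bg(m_t)=\true \wedge \BG_{q_2}(m_1,m_t)=\true$ from Table~\ref{tab:AKMV} gives $\bg_2(m_l)=\bg(m_l)=\true$ for $1\leq l<t$, whence every $P$-consecutive step in $[k,\ell]$ contributes exactly $d_q(\cdot,\cdot^+)$ to $\ts$: in the $\hole$ case via the recipe, and in the same-block case within some $V_i$ via Claim~\ref{claim:real-abstr} for $q_i$. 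Telescoping yields $\ts(\chi(\ell))-\ts(\chi(k))=D_q(k,\ell)$; since $\bg_1(k)=\true$ also forces $D_q(k,\ell)<M$, this coincides with $d_q(k,\ell)=d_{q_1}(k,\ell)=\ts_1(\chi(\ell))-\ts_1(\chi(k))$, the last equality coming from the $\ts_1$-recipe on the inherited hole with $\bg_1(k)=\true$. For (a), the stepwise lower bound $\ts(\chi(p^+))-\ts(\chi(p))\geq d_q(p,p^+)$ at each $P$-consecutive pair (immediate from the recipe or from Claim~\ref{claim:real-abstr} on the appropriate $q_i$) telescopes to $\ts(\chi(\ell))-\ts(\chi(k))\geq D_q(k,\ell)$. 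Now $\bg_1(k)=\false$ means either $D_q(k,\ell)\geq M$ (and we are done) or $\BG_q(k,\ell)=\false$; in the latter, some $p\in\{k,m_1,\ldots,m_t\}$ has $\bg(p)=\false$, and either the $\hole$-recipe adds an extra $+M$ to the $(p,p^+)$-step, or $p\procrel^+p^+$ lies in some $V_i$ with $\bg_i(p)=\false$ (by the $\oplus$-consequence), whereupon Claim~\ref{claim:real-abstr} for $q_i$ gives $\ts_i(\chi(p^+))-\ts_i(\chi(p))\geq M$, transferred verbatim to $\ts$ by the recipe.

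The main obstacle is the bookkeeping around the $\oplus$-transition's $\bg$/$\BG$ propagation conditions in Table~\ref{tab:AKMV}: matching them against the four structural possibilities for a $P$-consecutive step (same-block or hole, in $V_1$ or $V_2$), and being careful that $\bg_1(k)$'s information strictly pertains to $D_q(k,\nxt_{P_1}(k))$, which is precisely what makes the telescoping go through along $P$-consecutive (rather than $P_1$-consecutive) pairs.
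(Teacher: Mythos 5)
Your proof is correct and follows essentially the same route as the paper: reduce to $P_1$-consecutive pairs and telescope (the paper phrases this as an induction on the number of $P_1$-points between $i$ and $j$), discharging each single step via the $\oplus$-transition conditions and Claim~\ref{claim:real-abstr}. The only real difference is that the paper handles the single step in one stroke by applying Claim~\ref{claim:real-abstr} to the combined state $q$ and the map $\ts$ (legitimate because the preceding bullets of the $\oplus$ case already establish that $\ts$ is non-decreasing and that $(\tsm,\bg)$ is its modulo-$M$ abstraction), whereas you re-derive the needed bounds structurally from the recipe for $\ts$ and the child states $q_1,q_2$ by casing on the inserted $P_2$-points --- more laborious, but equally valid.
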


\begin{proof}
  The proof is by induction on the number of points in $P_1$ between $i$
  and $j$.  The result is clear if $i=j$.  So assume that
  $k=\nxt_{P_1}(i)\leq j$ and let $w=\chi(k)$.  By induction, the claim
  holds for the pair $(k,j)$.
  \begin{enumerate}
    \item If $\BG_{q_1}(i,j)=\false$ then either $\bg_1(i)=\false$ or 
    $\BG_{q_1}(k,j)=\false$. 
    
    In the first case, by definition of the transition for $\oplus$, we
    have either $\BG_q(i,k)=\false$ or $D_q(i,k)\geq M$.  In both cases,
    we get $\ts(w)-\ts(u)\geq M$ by Claim~\ref{claim:real-abstr}.
    
    In the second case,  we get $\ts(v)-\ts(w)\geq M$ by induction.
    
    Since $\ts$ is non-decreasing, we obtain $\ts(v)-\ts(u)\geq M$.
    
    \item If $\BG_{q_1}(i,j)=\true$ then $\bg_1(i)=\true$ and 
    $\BG_{q_1}(k,j)=\true$. 
    
    By induction, we obtain $\ts(v)-\ts(w)=\ts_1(v)-\ts_1(w)$.
    
    From the definition of the transition for $\oplus$, since 
    $\bg_1(i)=\true$, we get $\BG_q(i,k)=\true$ and $D_q(i,k)<M$. 
    Using Claim~\ref{claim:real-abstr} we deduce that 
    $\ts(w)-\ts(u)=D_q(i,k)$. Now, $D_q(i,k)<M$ implies 
    $D_q(i,k)=d_q(i,k)=d_{q_1}(i,k)$. Using again 
    Claim~\ref{claim:real-abstr} we get 
    $d_{q_1}(i,k)=\ts_1(w)-\ts_1(u)$. We conclude that             
    $\ts(w)-\ts(u)=\ts_1(w)-\ts_1(u)$.
    
    Combining the two equalities, we obtain 
    $\ts(v)-\ts(u)=\ts_1(v)-\ts_1(u)$ as desired.
    \qedhere
  \end{enumerate}
\end{proof}

\paragraph{Accepting condition} %
The accepting states of \AKMV should correspond to abstractions of \TCWs.  Hence
the accepting states are of the form $(\{i\},\sd,\tsm,\bg)$ (reached while
reading an atomic term $(i,a)$) or $(\{i,j\},\sd,\tsm,\bg)$ with
$i,j\in\{1,\ldots,K\}$, $i<j$, $\sd(i)=\true$ and $\sd(j)=\false=\bg(j)$.

We show the correctness of the construction.

\medskip\noindent$(\subseteq)$ 
Let $\tau$ be an \STT accepted by \AKMV. There is an accepting run of \AKMV
reading $\tau$ and reaching state $q$ at the root of $\tau$.  By 
Lemma~\ref{lem:AKMV-tcw}, the term $\tau$ is monotonic and the state $q$ is a
realizable abstraction of $\tau$, hence $(\sem{\tau},\hole)$ is a split-\TCW.
But since $q$ is accepting, we have ${\hole}=\emptyset$.  Hence $\sem{\tau}$ is
a \TCW. Moreover, from \eqref{prop:A4} we deduce that $\sem{\tau}$ is 
realizable and the endpoints of $\sem{\tau}$ are the only colored points by 
\eqref{prop:A1} and the acceptance condition.

\medskip\noindent$(\supseteq)$ 
Let $\tau$ be a monotonic $(K,M)$-\STTs such that $\sem{\tau}=(G,\chi)$ is a realizable
\TCW and the endpoints of $\sem{\tau}$ are the only colored points.  
Let $\ts\colon V\to\N$ be a timestamp map satisfying all the timing constraints
in $\tau$.  We construct a run of $\AKMV$ on $\tau$ by resolving the
non-deterministic choices as explained below.  Notice that the transitions for
$\rename{i}{j}$, $\forget{i}$, $\add{i}{j}{\procrel}$ and 
$\add{i}{j}{\matchrel I}$ are deterministic.
We will obtain an accepting run $\rho$ of \AKMV on $\tau$ such that for every
subterm $\tau'$, the state $\rho(\tau')$ satisfies \eqref{prop:A4} with
timestamp map $\ts$, or more precisely, with the restriction of $\ts$ to the
vertices in $\sem{\tau'}$.

\begin{itemize}
  \item A leaf $(i,a)$ of the term $\tau$ corresponds to some vertex $v\in V$.
  The transition taken at this leaf will set $\tsm(i)=\ts(v)[M]$ so that
  \eqref{prop:A4} holds with $\ts$.

  \item We can check that the conditions enabling transitions at
  $\rename{i}{j}$, $\forget{i}$ or $\add{i}{j}{\procrel}$ nodes are satisfied
  since $\tau$ is monotonic and $\sem{\tau}$ is a \TCW whose endpoints are
  colored.

  \item For a subterm $\tau'=\add{i}{j}{\matchrel I}\tau''$ the condition $i<j$
  is satisfied since $\tau$ is monotonic and the condition envolving $\BG$ and
  $D$ is satisfied by Claim~\ref{claim:real-abstr} since $\ts$ satisfies all
  timing constraints of $\tau$ and \eqref{prop:A4} holds with $\ts$ at $\tau''$.

  \item Consider a subterm $\tau'=\tau_1\oplus\tau_2$.  Let
  $\rho(\tau_1)=q_1=(P_1,\sd_1,\tsm_1,\bg_1)$ and
  $\rho(\tau_2)=q_2=(P_2,\sd_2,\tsm_2,\bg_2)$.  The active colors of $\tau_1$
  and $\tau_2$ are disjoint, hence $P_1\cap P_2=\emptyset$ by \eqref{prop:A1}.
  Define $q'=(P',\sd',\tsm',\bg')$ by $P'=P_1\cup P_2$, $\sd'=\sd_1\cup\sd_2$,
  $\tsm'=\tsm_1\cup\tsm_2$ and for all $i\in P'$, $\bg'(i)=\true$ iff 
  $i^{+}\neq\infty$ and $\ts(\chi'(i^{+}))-\ts(\chi'(i))<M$.
  
  The condition for $q_1,q_2\xrightarrow{\oplus}q'$ on shuffling the
  $\procrel$-blocks holds since $\sem{\tau}$ is a \TCW. Now, we look at the
  condition on $\bg'$.  Let $i\in P_1\setminus\{\max(P_1)\}$ and
  $j=\nxt_{P_1}(i)$.  We have $\bg_1(i)=\true$ iff
  $\ts(\chi_1(j))-\ts(\chi_1(i))<M$ since \eqref{prop:A4} holds with $\ts$
  at $\tau_1$.  The latter holds iff for all $k\in P'$ with $i\leq k<j$ we have
  $\ts(\chi'(k^{+}))-\ts(\chi'(k))< M$ (i.e., $\BG_{q'}(i,j)=\true$ by the
  above definition of $\bg'$) and $D_{q'}(i,j)< M$ (again, by the definition
  of $\bg'$ we have that $\bg'(k)=\true$ implies
  $d_{q'}(k,k^{+})=\ts(\chi'(k^{+}))-\ts(\chi'(k))$ and $\BG'(i,j)=\true$ implies
  $D_{q'}(i,j)=\ts(\chi'(j))-\ts(\chi'(i))$).
  \qedhere
\end{itemize}
\end{proof}

To ensure that we accept only terms denoting simple \TCWs of split-width at most $K$, we adapt our above construction 
to $\mathcal{A}^{2K,M}_{\textsf{valid}}$. We will refer this restriction 
to $\mathcal{A}^{2K,M}_{\textsf{valid}}$ as \BKMV. 

\begin{corollary}\label{cor:BKMV}
  We can build a tree automaton \BKMV with $M^{\cO(K)}$ states such that $\tau\in \Lang(\BKMV)$ iff $\sem \tau \in \KMTCW$ is realizable.
\end{corollary}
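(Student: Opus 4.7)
The plan is to take $\BKMV := \mathcal{A}^{2K,M}_{\textsf{valid}}$, i.e., invoke Theorem~\ref{thm:AKMV} with $2K$ colors instead of $K$, and argue that this very automaton already does the job thanks to the restricted syntax and invariant \eqref{prop:A1} of \AKMV. The state count is then immediate: $M^{\cO(2K)} = M^{\cO(K)}$.

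For the forward direction ($\tau \in \Lang(\BKMV) \Rightarrow \sem\tau \in \KMTCW$ realizable), Theorem~\ref{thm:AKMV} already gives realizability of $\sem\tau$ as a \TCW, so I only need to check simplicity and the split-width bound. Simplicity is automatic from the restricted grammar: timing edges can be introduced only inside atomic \STTs of the form $\add{1}{2}{x,y}((1,a)\sttunion(2,b))$, which contribute exactly one timing edge between two fresh vertices; the other constructors $\add{i}{j}{\procrel}$, $\forget{i}$, $\rename{i}{j}$, $\sttunion$ never create a $\matchrel$-edge. Hence each vertex of $\sem\tau$ satisfies $\emph{indegree}+\emph{outdegree}\leq 1$. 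For split-width, invariant \eqref{prop:A1} forces every endpoint of every block of the current split-\TCW to be colored; with only $2K$ available colors, and each block contributing at most two endpoints (its left and right), every intermediate split-\TCW encountered by the run has width $\leq K$. The root is no exception, so $\sem\tau$ itself has split-width $\leq K$.

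For the converse, given a realizable $W \in \KMTCW$, I would apply Lemma~\ref{lem:split-tcw-stt} to produce a $2K$-\STT $\tau$ with $G_\tau = W$. This term lives in the restricted grammar because its atoms are exactly of the two prescribed shapes (single vertex or a matched pair), while the inductive step adds only $\procrel$-edges, $\sttunion$, $\forget{i}$ and $\rename{i}{j}$. By also composing with appropriate $\rename{i}{j}$ operations, I can arrange that at each subterm the natural order of active colors coincides with the linear order on the corresponding vertices of the split-\TCW, making $\tau$ \emph{monotonic}. Applying Theorem~\ref{thm:AKMV} to the realizing timestamp map then yields an accepting run of $\mathcal{A}^{2K,M}_{\textsf{valid}} = \BKMV$ on $\tau$, concluding the proof.

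The main (minor) obstacle I anticipate is the monotonicity adjustment for the converse: the generic construction in Lemma~\ref{lem:split-tcw-stt} does not by itself guarantee that colors are assigned consistently with the linear order. Resolving this amounts to always choosing the renaming function $f$ at each $\sttunion$ node (and the initial color at each leaf) so that smaller colors go to earlier vertices; since at every point at most $2K$ endpoints are active and $2K$ colors are available, such a monotonic assignment always exists and can be computed locally from the endpoint sets.
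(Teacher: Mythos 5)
Your converse direction is essentially the paper's (build a $(2K,M)$-\STT from a width-$K$ split-tree via Lemma~\ref{lem:split-tcw-stt} and adjust colors for monotonicity), but the forward direction has a genuine gap: taking $\BKMV$ to be $\mathcal{A}^{2K,M}_{\textsf{valid}}$ \emph{unrestricted} does not yield the split-width bound $K$. Invariant \eqref{prop:A1} only guarantees that all block endpoints are colored, and a block need not contribute two \emph{distinct} endpoints: a singleton block consists of one vertex that is simultaneously its left and its right endpoint, and so consumes only one color. Such blocks are unavoidable --- every leaf $(i,a)$ and every atomic subterm $\add{1}{2}{x,y}((1,a)\sttunion(2,b))$ creates them. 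Hence $2K$ colors bound the number of blocks of each intermediate split-\TCW by $2K$, not by $K$, and $\mathcal{A}^{2K,M}_{\textsf{valid}}$ accepts terms whose encoded split-decompositions have width up to $2K$; the \TCW at the root may then have split-width strictly greater than $K$ and lie outside $\KMTCW$, breaking the ``only if'' direction. The paper closes exactly this gap by \emph{restricting} $\mathcal{A}^{2K,M}_{\textsf{valid}}$ to states $q'=(P',\sd',\tsm',\bg')$ in which the number of blocks, recoverable from the state as $|\{i\in P'\mid \sd'(i)=\false\}|$, is at most $K$. You need this restriction (or an equivalent block-counting device); it cannot be read off from \eqref{prop:A1} alone.

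A secondary point: your simplicity argument relies on the restricted grammar of $(2K,M)$-\STTs being enforced on the input trees. The transition of \AKMV for $\add{i}{j}{\matchrel I}$ in Table~\ref{tab:AKMV} does not itself verify that the subterm below is of the form $(i,a)\sttunion(j,b)$, which is precisely why the paper additionally restricts the $\add{i}{j}{\matchrel I}$ transitions to such subterms when defining \BKMV. If you regard the automaton as running only on grammar-conforming terms your argument goes through, but it is safer (and needed for the state-level construction to be self-contained) to impose this restriction explicitly as the paper does.
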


\begin{proof}
  We will construct $\BKMV$ as a restriction of $\mathcal{A}=\mathcal{A}^{2K,M}_{\textsf{valid}}$.
  
  Let $\tau$ be a term accepted by $\mathcal{A}$.  Notice that, for every
  subterm $\tau'$ of $\tau$, the number of blocks in $\sem{\tau'}$ can be
  computed from the state $q'=(P',\sd',\tsm',\bg')$ that labels $\tau'$ in this
  accepting run: this number of blocks is $|\{i\in P'\mid\sd(i)=\false\}|$.  So
  we restrict $\mathcal{A}$ to the set of states $q'$ such that the number of
  blocks of $q'$ is at most $K$.  The automaton \BKMV is defined by further
  restricting $\mathcal{A}$ so that adding a timing constraint
  $\add{i}{j}{\matchrel I}$ is allowed only on subterms of the form
  $(i,a)\oplus(j,b)$ with $i<j$.
  
  Then, an accepted term $\tau\in\Lang(\BKMV)$ is realizable since it is accepted by $\cA$ (since $\BKMV$ is a restriction of $\cA$) and it describes a split-decomposition of $\sem{\tau}=(\stcw,\chi)$ of width at most $K$.  Further, since timing   constraints are added only on terms of the form $(i,a)\oplus(j,b)$, the \TCW   $\sem \tau$ must be simple.  We obtain $\sem{\tau}\in\KMTCW$ and is realizable.
  
  Conversely, let $\stcw$ be a realizable simple \TCW in $\KMTCW$.  There is a split-tree
  $T$ of width at most $K$ for $\stcw$.  By Lemma~\ref{lem:split-tcw-stt}, there is a
  $(2K,M)$-\STT $\tau$ with $\sem{\tau}=(\stcw,\chi)$. We may
  assume that $\tau$ is monotonic and only the endpoints of $\stcw$ are colored
  by $\chi$. By Theorem~\ref{thm:AKMV}, we get $\tau\in\Lang(\mathcal{A})$.
  Moreover, for every subterm $\tau'$ of $\tau$, the graph $\sem{\tau'}$ has at
  most $K$ blocks.  Also, the timing constraints in $\tau$ are only added by 
  subterms of the form $\add{i}{j}{\matchrel I}((i,a)\oplus(j,b))$ with $i<j$.
  Therefore, $\tau\in\Lang(\BKMV)$.
\end{proof}


\newcommand{\src}{\ensuremath{\mathsf{src}}\xspace}
\newcommand{\tgt}{\ensuremath{\mathsf{tgt}}\xspace}
\newcommand{\clr}{\ensuremath{\mathsf{clr}}\xspace}
\newcommand{\rg}{\ensuremath{\mathsf{rgc}}\xspace}
\newcommand{\pup}{\ensuremath{\mathsf{pp}}\xspace}


\section{Tree automata for timed systems}
\label{sec:TA-sys}

The goal of this section is to build a tree automaton which accepts the \STTs denoting (realizable) \STCWs accepted by a TPDA. Again, to prove the existence of a tree automaton, it suffices to show the MSO-definability of the existence of a (realizable) run of \Sys on a simple \TCW, and appeal to Courcelle's theorem~\cite{CourcelleBook}.  However, as explained in the previous section, this may not give an optimal complexity.  Hence, we choose to directly construct the tree automaton \AKMS, where $M$ is one more than the maximal constant used in 
the TPDA, and $K$ is the split width. Our input is therefore the system, in this case a TPDA $\Sys$, whose size $|\Sys|$ includes the number of states, transitions and the maximal constant used in $\Sys$.

Let us explain first how the automaton would work for an \emph{untimed} system with no stack. At a leaf of the \STT of the form $(i,a)$, the tree automaton guesses a transition $\delta$ from \Sys which could be executed reading action $a$.  It keeps the transition in its state, paired with color $i$.  After reading a
subterm $\tau$, the tree automaton stores in its state, for each active color $k$ of $\tau$, the pair of source and target states of the transition $\delta_k$ guessed at the corresponding leaf, denoted $\src(k)$ and $\tgt(k)$, respectively.  This information can be easily updated at nodes labelled $\oplus$ or $\forget{i}$ or $\rename{i}{j}$.  When, reading a node labelled $\add{i}{j}\procrel$, the tree automaton checks that  $\tgt(i)$ equals $\src(j)$.  This ensures that the transitions guessed at leaves form a run when taken in the total order induced by the word denoted by the final term.  At the root, we check that at most two colors remain active: $i$ and $j$ for the leftmost (resp.\ rightmost) endpoint of the word.  Then, the tree automaton accepts if $\src(i)$ is an initial state of \Sys and $\tgt(j)$ is a final state of \Sys.


The situation is a bit more complicated for timed systems.  First, we are
interested in the \emph{simple} \TCW semantics in which each event is blown-up
in several micro-events.  Following this idea, each transition $\delta=(s,\gamma,a,\op,R,s')$ of the timed system \Sys is blown-up in micro-transitions as explained in Section~\ref{sec:tpda-sem}, assuming that $\gamma=\gamma_1\wedge\cdots\wedge\gamma_n$ has $n$ conjuncts of the form $x\in I$, and that $R=\{x_1,x_2,\ldots,x_m\}$:
\begin{center}
  \includegraphics[width=85mm,page=1]{gpicture-pics}
\end{center}
Notice that the new states of the micro-transitions, e.g., $\delta_i$ or
$\delta_x$, uniquely identify the transition $\delta$ and the guard $\gamma_i$
or the clock $x$ to be reset. Notice also that the reset-loop at the end allows an arbitrary number (possibly zero) resets of clock $x_1$ (depending on
how many timing constraints for clock $x_1$ will originate from this reset),
followed by an arbitrary number (possibly zero) resets of clock $x_2$, etc. Further, all transitions except, the \emph{middle} one are $\epsilon$-transitions; and in what follows, we will call this transition (labelled $a,\op$) as the \emph{middle micro-transition} of this sequence or of $\delta$.

The difficulty now is to make sure that, when a guard of the form $x\in I$ is
checked in some transition, then the source point of the timing constraint in
the simple \TCW indeed corresponds to the latest transition resetting clock $x$.
To check this property, the tree automaton \AKMS stores,
\begin{itemize}
  \item a map $\clr$ from colors in $P$ to sets of clocks.  For each color $k$
  of the left endpoint of a block, $\clr(k)$ is the set of clocks reset by the
  transition whose middle micro-transition (the one corresponding to $a,\op$
  micro-transition) occurs in the block.

  \item a map $\rg$ from pairs of colors in $P$ to sets of clocks.  For each
  pair $(i,j)$, $i<j$, of colors of left endpoints of \emph{distinct} blocks,
  the set $\rg(i,j)$ contains the set of clocks that are reset in block $i$ and
  checked in $j$.
\end{itemize}

Finally, the last property to be checked is that the push-pop edges are
well-nested.  To this end, the tree automaton \AKMS stores the set $\pup$ of
pairs of colors $(i,j)$, $i<j$, of left endpoints of \emph{distinct} blocks such
that there is at least one push-pop edge from block $i$ to block $j$.

Formally, a state of \AKMS is a tuple $(q,\src,\tgt,\clr,\rg,\pup)$ where $q=(P,\ldots)$ is a state of \BKMV (as seen in Corollary~\ref{cor:BKMV})
the maps $\src,\tgt$ assign to each color $k\in P$, respectively, the source and target states of the micro-transition guessed at the leaf corresponding to
color $k$, and the maps $\clr$, $\rg$ and the set $\pup$ are as described above.  
The number of states of \AKMS is calculated as follows:
\begin{itemize}
\item $q$ depends on $P$; the number of possible subsets $P$ is $2^{\mathcal{O}(K)}$.
\item The size of $\src,\tgt$ is $|\Sys|^K$, 
\item The size of $\clr$ is $[2^{|X|}]^K$, while the size of $\rg$ is $[2^{|X|}]^{K^2}$, and the size of $\pup$ is $2^{K^2}$. 
\end{itemize}
The number of states of \AKMS is hence bounded by 
$|\Sys|^{\mathcal{O}(K)}\times 2^{\mathcal{O}(K^2(|X|+1))}$. 

\begin{table}[h!]
  \begin{tabular}{|c|p{120mm}|}
    \hline
    $(i,a)$ & $\xrightarrow{(i,a)} (q,\src,\tgt,\clr,\rg,\pup)$ is a transition of \AKMS
    if $\xrightarrow{(i,a)} q$ is a transition of $\BKMV$ and
    there is a transition $\delta=(s,\gamma,a,\nop,R,s')$ of \Sys (which is
    guessed by the tree automaton \AKMS) such that:
    \begin{itemize}
    \item 
      the \emph{middle} micro-transition of $\delta$ (assuming $\gamma$ has $n$ conjuncts)
      is from $\delta_n=\src(i)$ to $\delta_{x_1}=\tgt(i)$, 
      \item $\clr(i)=R$ since the set of clocks reset by $\delta$ is $R$, and 
      \item $\rg$ is nowhere defined and $\pup=\emptyset$ since we have a single block.
    \end{itemize}
    Notice that if $a=\varepsilon$, \AKMS may guess the special initial dummy
    transition $\delta=(s_{dummy},\true, \varepsilon,\nop, X,s_0)$ where $s_0$
    is an initial state of $\Sys$, to simulate the reset of all clocks when the 
    run starts.

    \\ \hline
    $\rename{i}{j}$
    &
    $(q,\src,\tgt,\clr,\rg,\pup)\xrightarrow{\rename{i}{j}}(q',\src',\tgt',\clr',\rg',\pup')$ is a
    transition of \AKMS if $q\xrightarrow{\rename{i}{j}}q'$ is a transition of
    \BKMV and $\src',\tgt'$, $\clr'$, $\rg'$, $\pup'$ are obtained from $\src,\tgt$, $\clr$, $\rg$, $\pup$, resp., by replacing $i$ by $j$.

    \\ \hline
    $\forget{i}$
    & 
    $(q,\src,\tgt,\clr,\rg,\pup)\xrightarrow{\forget{i}}(q',\src',\tgt',\clr',\rg',\pup')$ is a
    transition of \AKMS if $q\xrightarrow{\forget{i}}q'$ is a transition of
    \BKMV (in particular, $i$ is not an endpoint) and $\src',\tgt'$ are obtained from
    $\src,\tgt$ by forgetting the entry of color $i$.  Since $i$ is not a left
    endpoint, $\clr,\rg$ and $\pup$ are not affected: $\clr'=\clr$, $\rg'=\rg$ and $\pup'=\pup$.
    \\ \hline
    {\small $\add{i}{j}{\matchrel I}((i,\varepsilon)\oplus (j,\varepsilon))$}
    & 
    $\xrightarrow{\add{i}{j}{\matchrel I}((i,\varepsilon)\oplus
    (j,\varepsilon))} (q,\src,\tgt,\clr,\rg,\pup)$ is a transition of \AKMS if
    $\xrightarrow{\add{i}{j}{\matchrel I}((i,\varepsilon)\oplus
    (j,\varepsilon))} q$ is a transition of \BKMV and either
    \begin{itemize}
      \item there exists a clock constraint $x\in I$ induced by two transitions $\delta^1=(s^1,\gamma^1,a^1,\op^1,R^1,s'^1)$ and $\delta^2=(s^2,\gamma^2,a^2,\op^2,R^2,s'^2)$ of \Sys, i.e., $x\in R^1$ and some conjunct of $\gamma^2$, say the $k$-th, is $x\in I$ (the tree automaton \AKMS guesses this) such that:
      \begin{itemize}
      \item 
        the reset micro-transition for clock $x$ of $\delta^1$ satisfies $\delta^1_{x}=\src(i)=\tgt(i)$,
\item 
        the micro-transition checking the $k$-th conjunct of $\gamma^2$ is from $\src(j)=\delta^2_{k-1}$ to $\tgt(j)=\delta^2_k$
        \item $\clr(i)=\clr(j)=\emptyset$, $\rg(i,j)=\{x\}$ and $\pup=\emptyset$.
      \end{itemize}
      \item or $I=[0,0]$ and the tree automaton \AKMS guesses that it encodes
      the $\zeta$ clock constraint of some transition
      $\delta=(s,\gamma,a,\op,R,s')$ of \Sys.  Then, $\src(i)=s,
      \tgt(i)=\delta_0$, $\src(j)=\delta_{x},\tgt(j)=s')$, where $\delta_0$ and
      $\delta_x$ are the first and last states of the micro-transitions from $\delta$,
      $\clr(i)=\clr(j)=\emptyset$, $\rg(i,j)=\emptyset$ and $\pup=\emptyset$.
     \end{itemize}
    \\ \hline
\end{tabular}
  \caption{Transitions of \AKMS -- part 1}
  \protect\label{tab:AKMSa}
\end{table}

\begin{table}
  \begin{tabular}{|c|p{120mm}|}
    \hline
{\small  $\add{i}{j}{\matchrel I}((i,a^1)\oplus(j,a^2))$}
  & 
  $\xrightarrow{\add{i}{j}{\matchrel I}((i,a^1)\oplus(j,a^2))} (q,\src,\tgt,\clr,\rg,\pup)$
  is a transition of \AKMS if $\xrightarrow{\add{i}{j}{\matchrel
  I}((i,a^1)\oplus(j,a^2))}q $ is a transition of \BKMV and there are two
  matching push-pop transitions of \Sys:
  $\delta^1=(s^1,\gamma^1,a^1,\push_b,R^1,s'^1)$ and
  $\delta^2=(s^2,\gamma^2,a^2,\pop_b^{I},R^2,s'^2)$ (these transitions are
  guessed by \AKMS) such that 
  the \emph{middle} micro-transition of $\delta^1$ is from $\delta^1_n=\src(i)$
  to $\delta^1_{x_1}=\tgt(i)$,
  the \emph{middle} micro-transition of $\delta^2$ is from $\delta^2_m=\src(j)$
  to $\delta^2_{y_1}=\tgt(j)$,
  $\clr(i)=R^1$, $\clr(j)=R^2$, $\rg(i,j)=\emptyset$ and $\pup=\{(i,j)\}$.
  \\ \hline

    $\add{i}{j}{\procrel}$
    &
    $(q,\src,\tgt,\clr,\rg,\pup)\xrightarrow{\add{i}{j}{\procrel}}(q',\src',\tgt',\clr',\rg',\pup')$ is
    a transition of \AKMS if $q\xrightarrow{\add{i}{j}{\procrel}}q'$ is a transition of \BKMV (thus, $i$ is a right endpoint and $j=i^{+}$ is a left 
    endpoint) and
    \begin{itemize}
      \item Either $\tgt(i)=\src(j)$, or there is an $\varepsilon$-path of
      micro-transitions
      $\tgt(i)\xrightarrow{\varepsilon}\delta_{x_k}\xrightarrow{\varepsilon}\cdots
      \xrightarrow{\varepsilon}\delta_{x_\ell}\xrightarrow{\varepsilon}\src(j)$.
      Indeed, adding $\procrel$ between $i$ and $j$ means that these points are
      consecutive in the final \TCW, hence it should be possible to concatenate
      the micro-transition taken at $i$ and $j$.

      \item $\src'=\src, \tgt'=\tgt$.

      \item Let $k$ be the left end-point of $\mathsf{Block}(i)$. 
    We merge $\mathsf{Block}(k)$ and $\mathsf{Block}(j)$
      hence we set $\clr'(k)=\clr(k)\cup \clr(j)$ and $\clr'(\ell)=\clr(\ell)$ if
      $\ell\notin\{k,j\}$ is another left endpoint.  Also, if
      $\ell,\ell'\notin\{k,j\}$ are other left endpoints, we set
      $\rg'(\ell,\ell')=\rg(\ell,\ell')$ if $\ell<\ell'$, $\rg'(\ell,k)=\rg(\ell,k)\cup
      \rg(\ell,j)$ if $\ell<k$, and $\rg'(k,\ell)=\rg(k,\ell)\cup \rg(j,\ell)$ if
      $j<\ell$.  Finally, $\pup'$ is the set of pairs $(\ell,\ell')$ of left
      endpoints of distinct blocks in $q'$ s.t.\ either $(\ell,\ell')\in \pup$, or
      $\ell=k$ and $(j,\ell')\in \pup$, or $\ell<k=\ell'$ and $(\ell,j)\in \pup$.
    \end{itemize}
    \\ \hline
    $\oplus$
    &
    $(q_1,\src_1,\tgt_1,\clr_1,\rg_1,\pup_1),(q_2,\src_2,\tgt_2,\clr_2,\rg_2,\pup_2)\xrightarrow{\oplus}(q,\src,\tgt,\clr,\rg,\pup)$
    is a transition of \AKMS if $q_1,q_2\xrightarrow{\oplus}q$ is a transition
    of \BKMV and
    \begin{itemize}
      \item The other components  are inherited:
      $\src=\src_1\cup\src_2$, $\tgt=\tgt_1\cup \tgt_2$, $\clr=\clr_1\cup \clr_2$, $\rg=\rg_1\cup \rg_2$ and
      $\pup=\pup_1\cup \pup_2$.

      \item For all $i<k<j$ left endpoints in $q$, we have $\clr(k)\cap
      \rg(i,j)=\emptyset$.  This is the crucial condition which ensures that a
      timing constraint always refers to the last transition resetting the clock
      being checked.  If some clock $x\in \rg(i,j)$ is reset in $\mathsf{Block}(i)$ and
      checked in $\mathsf{Block}(j)$, it is not possible to insert  $\mathsf{Block}(k)$ between
      blocks of $i$ and $j$ if clock $x$ is reset by a transition in $\mathsf{Block}(k)$.

      \item For all $(i,j)\in \pup$ and $(k,\ell)\in \pup$, if $i<k<j$ then $\ell\leq j$.
      This ensures that the push-pop edges are well-nested.
    \end{itemize}
    \\ \hline
  \end{tabular}
  \caption{Transitions of \AKMS -- part 2}
  \protect\label{tab:AKMSb}
\end{table}

The transitions of \AKMS will check the existence of transitions of $\BKMV$ to check for validity and in addition will check the existence of an abstract run of the system $\Sys$. These are described in detail in Tables~\ref{tab:AKMSa}, \ref{tab:AKMSb}. With this, we finally have the following acceptance condition. A state $(q,\src,\tgt,\clr,\rg,\pup)$ is accepting if 
\begin{itemize}
  \item $q$ is an accepting state of \BKMV, hence it consists of a single block 
  with left endpoint $i$ and right endpoint $j$ (possibly $i=j$),
  \item $\src(i)$ is an initial state of \Sys, and $\tgt(j)$ is a final state of \Sys. 
\end{itemize}

If the underlying system is a timed automaton, it is sufficient to store the maps $\clr,\rg$ in the state, since there are no stack operations.  From the above construction we obtain the following theorem:

\begin{restatable}{theorem}{propAKMS}\label{prop:AKMS}
  Let \Sys be a TPDA of size $|\Sys|$ 
  with set of clocks $X$ and using constants less than $M$.  Then, we
  can build a tree automaton \AKMS of size $|\Sys|^{\mathcal{O}(K)}.2^{\mathcal{O}(K^2(|X|+1))}$ such that
  $\Lang(\AKMS) = \{\tau\in\Lang(\BKMV)\mid \sem{\tau}\in \STCW(\Sys)\}$.
\end{restatable}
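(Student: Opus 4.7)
The construction is already specified in Tables~\ref{tab:AKMSa}--\ref{tab:AKMSb}, so the proof plan is essentially to verify (a) the size bound and (b) language correctness in both directions, by induction on the structure of the \STT. The overall strategy mirrors the proof of Theorem~\ref{thm:AKMV}: define a semantic invariant on states, show that the transition rules preserve it, and use it both for soundness at the root and as a guide for constructing an accepting run in the completeness direction.

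For the size bound, I simply multiply the state space of \BKMV (which is $M^{\mathcal{O}(K)}$ by Corollary~\ref{cor:BKMV}) with the extra components: $\src,\tgt\colon P\to \text{states of }\Sys$ contribute $|\Sys|^{\mathcal{O}(K)}$; $\clr\colon P\to 2^X$ contributes $2^{\mathcal{O}(K|X|)}$; $\rg\colon P^2\to 2^X$ contributes $2^{\mathcal{O}(K^2|X|)}$; and $\pup\subseteq P^2$ contributes $2^{\mathcal{O}(K^2)}$. Together this gives the claimed $|\Sys|^{\mathcal{O}(K)}\cdot 2^{\mathcal{O}(K^2(|X|+1))}$.

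For soundness, I would formulate an invariant saying that whenever the run of \AKMS reaches state $(q,\src,\tgt,\clr,\rg,\pup)$ after reading $\tau$, there is (i) a choice, at every leaf of $\tau$, of a micro-transition of \Sys whose source and target are $\src$ and $\tgt$ of that color; (ii) these micro-transitions, taken in the linear order induced by $\sem{\tau}$, form a valid abstract run of \Sys on each block of the split-\STCW; (iii) the maps $\clr,\rg,\pup$ correctly record, respectively, the reset set carried by the middle micro-transition of each block, the clocks reset in one block and checked in another, and the push-pop edges between distinct blocks. This invariant is then checked locally at each rule, exploiting the fact that \BKMV already guarantees $\sem{\tau}$ is a realizable split-\STCW in $\KMTCW$. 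The clauses for $\add{i}{j}\procrel$ and $\oplus$ enforce that blocks are glued consistently; the conjunction $\add{i}{j}{\matchrel I}((i,a^1)\oplus(j,a^2))$ enforces that every $\matchrel^I$-edge is either a clock constraint matching a reset micro-transition with the right guard (or a $\zeta$-edge) or a push/pop pair. Acceptance then guarantees a run from an initial to a final state of \Sys on the single remaining block, so $\sem{\tau}\in \STCW(\Sys)$.

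For completeness, given $\tau\in\Lang(\BKMV)$ and $\sem{\tau}\in \STCW(\Sys)$, pick a witnessing abstract run of \Sys and annotate each leaf of $\tau$ with the micro-transition of \Sys that it represents. I build the run of \AKMS bottom-up by reading off $\src,\tgt,\clr,\rg,\pup$ from this annotation; most rule-side conditions are satisfied by construction, and \BKMV already accepts $\tau$. The hard part, and the main obstacle, is verifying the two global side conditions at $\oplus$-nodes: the disjointness condition $\clr(k)\cap \rg(i,j)=\emptyset$ for all $i<k<j$, and the well-nestedness condition on $\pup$. For the former, I would argue by contradiction: if $x\in \clr(k)\cap \rg(i,j)$, then block $k$ contains a reset of $x$ that lies strictly between the reset in block $i$ witnessing $x\in \rg(i,j)$ and the later check in block $j$, violating condition~\eqref{item:T2} on well-timed \STCWs (each guard is matched with the closest reset block on its left), which the witnessing run of \Sys satisfies by Lemma~\ref{app:well-timed}. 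For the latter, the $\pup$-well-nestedness is simply a restatement of condition~\eqref{item:T1} that holds for runs of \Sys. Combined with the accepting condition on source/target states, this produces an accepting run of \AKMS on $\tau$, completing the equality $\Lang(\AKMS)=\{\tau\in\Lang(\BKMV)\mid \sem{\tau}\in \STCW(\Sys)\}$.
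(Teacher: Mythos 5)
Your proposal is correct and follows essentially the same route as the paper's proof: soundness by showing that the components $\src,\tgt,\clr,\rg,\pup$ faithfully record the guessed micro-transitions, the reset/check matching, and the push-pop nesting (so that an accepting state yields an abstract run of $\Sys$), and completeness by annotating the leaves of $\tau$ with the transitions of a witnessing run and reading off the states bottom-up. Your treatment of the $\oplus$ side conditions (deriving $\clr(k)\cap\rg(i,j)=\emptyset$ and the $\pup$ well-nestedness from properties \eqref{item:T1}--\eqref{item:T2} via Lemma~\ref{app:well-timed}) is in fact spelled out more explicitly than in the paper's rather terse converse direction, but it is the same argument.
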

\begin{proof}
  Let $\stt$ be a \kSTT and $\stcw=\sem{\stt}$. We will show that $\stt$ is accepted by \AKMS iff $\stcw\in \Lang(\BKMV)$ and $\stcw\in\STCW(\Sys)$.

  Assume that \AKMS has an accepting run on $\stt$.  Each move of \AKMS first checks if there is a move on \BKMV, hence we obtain an accepting run of $\BKMV$ on $\stt$. Thus, $\stcw\in\Lang(\BKMV)$ and so by Corollary~\ref{cor:BKMV} of Theorem~\ref{thm:AKMV}, $\stcw$ is a realizable \STCW. It remains to check that $\stcw$ is generated or accepted by $\Sys$. 
  That is, we need to show that there exists an abstract accepting run of \Sys on $\stcw$ which starts with an initial state, ends in a final state and satisfies the three conditions stated in Section~\ref{sec:tpda-sem}.

  \begin{enumerate}
    \item We first define the sequence of transitions.  Each vertex $v$ of \stcw labeled by letter $a$ is introduced as a node colored $i$ in some atomic term $(i,a)$.     We let $\delta(v)$ be the transition guessed by \AKMS when reading this atomic term. 
      We start by reading $\varepsilon$ and guessing the special initial transition that resets all clocks and goes to an initial state $s_0$. Subsequently,  for each $u\procrel v$ in \stcw, we will have an $\add{i}{j}{\procrel}$ occurring in $\stt$ such that $\mathsf{source}(\delta(u))=\src(i)$, $\mathsf{target}(\delta(v))=\tgt(j)$ and either $\tgt(i)=\src(j)$ or there is a (possibly empty) sequence of $\varepsilon$ micro-transitions from $\tgt(i)$ to $\src(j)$. Note that the existence of such a sequence of micro-transitions can indeed be recovered from the structure of this sequence, since the information about the transition is fully contained in the micro-transition sequence (including the set of clocks resets and checked etc). As a result, we have constructed a sequence of transitions $(\delta(v))_v$ which forms a path in \Sys reading \stcw and starting from the initial state $s_0$.  By the acceptance condition, if $v$ is the maximal vertex of \stcw then  $\mathsf{target}(\delta(v))$ is a final state.

\item Now, we need to check that the sequence of push-pop operations is well-nested. This is achieved by using $\pup$ which, as mentioned earlier, stores the pairs $(i,j)$ of left endpoints of distinct blocks such that there is at least one push-pop edge from block $i$ to block $j$. Note that every such push-pop edge gets into the set $\pup$ by the last atomic rule in Table~\ref{tab:AKMSa}, where it is checked that the transitions that were guessed at $i$ and $j$ were indeed, matching push-pop transitions. This information is propagated correctly when we add a process edge by the last condition in $\add{i}{j}{\procrel}$ rule. That is, if the adding of an edge changes the left end-point, then the elements of $\pup$ are updated to refer to the new left end-point. This ensures that the information is not forgotten since (left) end-points are never forgotten. Now, observe that the only operation that may allow the push-pop edges to cross is the general combine. But the last condition, in the definition of transition here, uses the set $\pup$ information to make sure that well-nesting is not violated. More precisely, if $(i,j), (k,\ell)$ is in $\pup_1\cup \pup_2$ and $i<k<j$, we check that $\ell\leq j$ and hence the edges are well-nested.

\item Finally, we check that the clock constraints are properly matched (i.e., when a guard is checked, the source of the timing constraint is indeed the latest transition resetting this clock. To do this, we make use of the maps $\clr$ and $\rg$. The map $\clr$ maintains for each left-endpoint of a block, the set of clocks reset by transitions whose middle micro-transitions occur in that block. This is introduced at the atomic transition rule and maintained during all transitions and propagated correctly when left-endpoints change, which happens only when blocks are merged by $\add{i}{j}{\procrel}$ rule (see last condition in this rule).  The set $\rg(i,j)$ refers to the set of clocks that are reset in $\mathsf{Block}(i)$ and checked at $\mathsf{Block}(j)$. Again, this set is populated at an atomic matching rule  of the form  $\add{i}{j}{\matchrel I}((i,\varepsilon)\oplus (j,\varepsilon))$ as defined. And as before, it is maintained throughout and propagated when end-points of blocks change, i.e., during the adding of a process edge.

Now, for any guard $x\in I$ that is checked, say matching vertex $u$ to $v$, we
claim that (i) $x$ is reset at $u$ and checked at $v$ wrt $x\in I$ and (ii)
between $u$ and $v$, $x$ is never reset.  (i) follows from the definition of the
atomic rule for adding a matching edge relation.  And for (ii), we observe that
once the matching relation is added between $u,v$, then $x\in \rg(i,j)$ where
$i,j$ are respectively the colors of the left endpoints of the blocks containing
$u$ and $v$.  If the general combine is not used, then we cannot add a reset in
between, and hence (ii) holds.  When the general combine is used between state
$q_1$ containing $i,j$ and another state $q_2$, then we check whether
$\clr(k)\cap\rg(i,j)=\emptyset$ for each color $k$ of left endpoint of a block
in $q_2$ with $i<k<j$.  This ensures that any block $k$ inserted between $i$ and
$j$ ($x\in \rg(i,j)$ means that there is a matching edge for clock $x$ between
blocks of $i, j$) cannot contain a transition which resets the clock $x$ ($x$ is
stored in $\clr(k)$).  Thus, the last reset-point is preserved and our checks
are indeed correct.
   \end{enumerate}
  Thus, we obtain that $\stcw$ is indeed generated by $\Sys$, i.e., $\stcw\in   \STCW(\Sys)$.

  In the reverse direction, if $\stcw\in \STCW(\Sys)\cap \Lang(\BKMV)$, then there is a sequence of transitions which lead to the accepting state on reading $\stcw$ in $\Sys$. By guessing each of these transitions correctly at every point (at the level of the atomic transitions), we can easily generate the run of our automaton \AKMS which is also consequently accepting. 
\end{proof}
As a corollary we obtain,
\begin{corollary}
\label{thm:AKMS}
Given a timed (pushdown) automaton $\Sys$, $\Lang(\Sys)\neq \emptyset$ iff $\Lang(\AKMS) \neq \emptyset$.
\end{corollary}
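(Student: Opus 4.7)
The plan is to unpack the definitions and chain together the previously established results: Theorem~\ref{thm:sw-tpda} (bound on split-width for TPDA behaviours), Corollary~\ref{cor:BKMV} (correctness of $\BKMV$ relative to realizable simple \TCWs of bounded split-width), and Theorem~\ref{prop:AKMS} (characterization of $\Lang(\AKMS)$). The corollary is essentially a semantic restatement: nonemptiness of the timed language reduces to nonemptiness of the constructed tree automaton because $\STT$s accepted by \AKMS correspond exactly to realizable \STCWs generated by \Sys of split-width at most $K$, and by the split-width bound this captures \emph{all} behaviours of \Sys.

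For the forward direction, I would start from $w \in \Lang(\Sys)$. By definition of $\Lang(\Sys)$ in Section~\ref{sec:tpda-sem}, there exists $\stcw\in\STCW(\Sys)$ of which $w$ is a realization; in particular $\stcw$ is realizable. Theorem~\ref{thm:sw-tpda} guarantees that $\stcw$ has split-width at most $K=4|X|+6$ (or $|X|+4$ for a timed automaton), and by construction all timing intervals in $\stcw$ use constants below $M$, so $\stcw\in\KMTCW$. Applying Corollary~\ref{cor:BKMV}, there exists a monotonic $(2K,M)$-\STT $\tau$ with $\sem{\tau}=(\stcw,\chi)$ and $\tau\in\Lang(\BKMV)$. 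Since $\sem{\tau}=\stcw\in\STCW(\Sys)$, Theorem~\ref{prop:AKMS} yields $\tau\in\Lang(\AKMS)$, so $\Lang(\AKMS)\neq\emptyset$.

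For the converse, suppose $\tau\in\Lang(\AKMS)$. By Theorem~\ref{prop:AKMS}, $\tau\in\Lang(\BKMV)$ and $\sem{\tau}\in\STCW(\Sys)$. Corollary~\ref{cor:BKMV} then says $\sem{\tau}\in\KMTCW$ is a realizable simple \TCW, so by the definition of realizability (Section~\ref{tcwords}) there exists a timed word $w$ which is a realization of $\sem{\tau}$. Since $\sem{\tau}\in\STCW(\Sys)$, this $w$ lies in $\Lang(\Sys)=\{w\mid w\text{ is a realization of some }\stcw\in\STCW(\Sys)\}$, establishing $\Lang(\Sys)\neq\emptyset$.

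There is no real technical obstacle here: the heavy lifting has been done in Theorem~\ref{thm:sw-tpda}, Corollary~\ref{cor:BKMV}, and Theorem~\ref{prop:AKMS}. The only minor care needed is to verify that the constant $M$ used to instantiate \BKMV (and hence \AKMS) is chosen at least one larger than the maximal constant appearing in \Sys, so that every \STCW in $\STCW(\Sys)$ indeed lies in $\KMTCW$ with the intervals it carries. Once this choice is fixed, the two implications are pure bookkeeping on the definitions.
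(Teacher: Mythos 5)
Your proposal is correct and follows essentially the same route as the paper: both directions chain the split-width bound of Theorem~\ref{thm:sw-tpda}, the existence and correctness of an accepted \STT via Corollary~\ref{cor:BKMV}, and the characterization of $\Lang(\AKMS)$ in Theorem~\ref{prop:AKMS}. If anything, your explicit appeal to Corollary~\ref{cor:BKMV} (rather than the paper's slightly informal citation of Theorem~\ref{thm:AKMV} together with Eve's strategy) and your remark on choosing $M$ strictly larger than the maximal constant make the bookkeeping a little tidier than the paper's own sketch.
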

\begin{proof}
If $\Lang(\Sys) \neq \emptyset$, then there is a realizable \STCW $\stcw$ accepted by $\Sys$. 
By Theorem~\ref{thm:sw-tpda}, we know that its split-width is bounded by a constant $K=4|X|+6$. Further, from the proof of Theorem \ref{thm:sw-tpda}, we obtain that Eve has a winning strategy on \stcw with atmost $K-1$ holes. Eve's strategy is such that  it generates a $K$-\STT $\tau$  such that  $\stcw=\sem{\stt}$. Since $\stcw=\sem \stt$ is realizable, by Theorem \ref{thm:AKMV}, $\tau \in \Lang(\BKMV)$. Thus, $\stcw$ in $\STCW(\Sys)$ and $\tau \in \Lang(\BKMV)$. Theorem~\ref{prop:AKMS} above then gives  $\stcw \in \Lang(\AKMS)$.  The converse argument is similar.
\end{proof}

As a consequence of all the above results, we obtain the decidability and 
complexity of emptiness for timed (pushdown) automata. These results were first 
proved in \cite{AD94,lics12}.

\begin{theorem}\label{thm:final-sys}
  Checking emptiness for TPDA $\Sys$ is decidable in \textsc{ExpTime}, while
  timed automata are decidable in \textsc{PSpace}.
\end{theorem}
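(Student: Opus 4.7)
The plan is to put together the three pillars established in the preceding sections. Given a timed (pushdown) system \Sys with set of clocks $X$ and constants less than $M$, Theorem~\ref{thm:sw-tpda} provides a linear bound $K$ on the split-width of $\STCW(\Sys)$: $K=|X|+4$ for a timed automaton and $K=4|X|+6$ for a TPDA. Corollary~\ref{thm:AKMS} then reduces emptiness of $\Lang(\Sys)$ to emptiness of the tree automaton \AKMS, whose size is bounded by $|\Sys|^{\mathcal{O}(K)}\cdot 2^{\mathcal{O}(K^2(|X|+1))}$ by Theorem~\ref{prop:AKMS}. Since emptiness of a tree automaton can be decided in polynomial time in its size, the overall algorithm runs in time polynomial in this bound.

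For TPDA, substituting $K=\mathcal{O}(|X|)$ and noting that $M\leq|\Sys|$ yields a state space of size $|\Sys|^{\mathcal{O}(|X|)}\cdot 2^{\mathcal{O}(|X|^3)}$, which is singly exponential in the input. A polynomial-time emptiness test on this tree automaton therefore gives the claimed \textsc{ExpTime} upper bound, matching the lower bound of~\cite{lics12}.

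For timed automata the same argument would only yield \textsc{ExpTime}, so I would sharpen it by exploiting the shape of Eve's strategy in the proof of Theorem~\ref{thm:sw-tpda}. With no stack, Eve only ever uses the first two sub-strategies (``remove an internal point'' and ``remove a clock constraint''), each of which detaches a piece of bounded size from the right end of the word. Consequently the split-tree produced by this strategy is \emph{word-like}: at every Adam (binary) node, one of the two children is an atomic subtree of constant size (a single vertex or a single matching pair). Because of this, the term read by \AKMS can be linearized as a word over a finite alphabet of bounded atomic subterms, and \AKMS is equivalent to a nondeterministic \emph{word} automaton with the same state space. Non-emptiness of an NFA reduces to reachability and is in \textsc{NLogSpace} in the number of states; since the state space has size $2^{\mathcal{O}(|X|^3+|X|\log|\Sys|)}$, guessing successive states on the fly (transitions being checkable in polynomial time from the description of \Sys, without materializing \AKMS) gives the claimed \textsc{PSpace} algorithm.

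The main obstacle is the TA sharpening. One must formalize the word-like structure of the split-trees (that Eve's strategy yields a ``combed'' tree with an atomic subtree at every binary node), and then describe precisely a nondeterministic word-automaton simulation of \AKMS that never enumerates its exponentially many states but only stores one at a time. The TPDA case, in contrast, is essentially immediate from the preceding theorems by a direct complexity computation.
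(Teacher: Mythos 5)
Your proposal is correct and follows essentially the same route as the paper: ExpTime for TPDA by combining the split-width bound (Theorem~\ref{thm:sw-tpda}), the reduction of Corollary~\ref{thm:AKMS}, the size bound of Theorem~\ref{prop:AKMS}, and polynomial-time tree-automaton emptiness; and PSpace for timed automata by observing that Eve's strategy yields word-like split-trees with an atomic child at every binary node, so a nondeterministic word automaton simulated on the fly suffices. The only difference is that you spell out the on-the-fly NLogSpace simulation in more detail than the paper does, which is a welcome clarification rather than a deviation.
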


\begin{proof}
  We first discuss the case of TPDA. First note that all \STCWs in the semantics
  of a TPDA have a split-width bounded by some constant $K$.  By Corollary
  \ref{thm:AKMS}, checking non-emptiness of $\Sys$ boils down to checking
  non-emptiness of tree automata \AKMS. But while checking non-emptiness of
  \AKMS, the construction also appeals to \BKMV and checks that its
  non-emptiness too.  From the fact that checking emptiness of tree automata is
  in \textsc{PTIME} and given the sizes of the constructed tree automata \BKMV,
  \AKMS (both exponential in the size of the input TPDA $\Sys$), we obtain an
  \textsc{ExpTime} procedure to check emptiness of $\Sys$.

  If the underlying system $\Sys$ was a timed automaton, then as seen in Figure
  \ref{fig:ta-game}, Eve's strategy gives rise to a word-like decomposition of
  the \STCW. Hence, the split-trees are actually word-like binary trees, where
  one child is always an atomic node.  Thus, we can use word automata (to guess
  the atomic node) instead of tree automata to check emptiness.  Using the
  \textsc{NLOGSPACE} complexity of emptiness for word automata, and using the
  exponential sizes of \BKMV, \AKMS, we obtain a \textsc{PSPACE} procedure to
  check emptiness of $\Sys$.
\end{proof}


\section{Dense time multi-stack pushdown systems}
\label{sec:mpda}
 As another application of our technique, we now consider the model of dense-timed multi-stack pushdown automata (\MPDA), which have several stacks.  The reachability problem for untimed multi-stack pushdown automata (MPDA) is already undecidable, but several restrictions have been studied on (untimed) MPDA, like bounded rounds~\cite{latin10}, bounded phase~\cite{TMP07}, bounded scope~\cite{TNP14}, and so on to regain decidability.  

In this section, we consider $\MPDA$ with the restriction of ``bounded rounds''. To the best of our knowledge, this timed model has not been investigated until now. Our goal is to illustrate how our technique can easily be applied here with a minimal overhead (in difficulty and complexity).  

Formally, a \MPDA is a tuple $\Sys=(S, \Sigma, \Gamma, X, s_0, F, \Delta)$
similar to a TPDA defined in Section~\ref{sec:tpda-sem}. The only difference is the stack operation $\op$ which now specifies which stack is being operated on. That is,
\begin{enumerate}
  \item $\nop$ does not change the contents of any stack (same as before),
  \item $\push^i_c$ where $c\in\Gamma$ is a push operation that adds $c$ on top of stack $i$, with age 0.
  \item $\pop^i_{c \in I}$ where $c\in\Gamma$ 
  and $I \in \mathcal{I}$   is a pop operation that removes the top most symbol of stack $i$ provided it
  is a $c$ with age in the interval $I$.
\end{enumerate}
A sequence $\sigma=\op_1\cdots\op_m$ of operations is a \emph{round} if it can
be decomposed in $\sigma=\sigma_1\cdots\sigma_n$ where each factor $\sigma_i$ is
a possibly empty sequence of operations of the form $\nop$, $\push^i_c$, 
$\pop^i_{c\in I}$.

Let us fix an integer bound $k$ on the number of rounds.  The semantics of the
\MPDA in terms of \STCWs is exactly the same as for TPDA, except that the
sequence of stack operations along any run is restricted to (at most) $k$
rounds.  Thus, any run of \MPDA can be broken into a finite number of contexts,
such that in each context only a single stack is used.  As before, the sequence of
push-pop operations of any stack must be well-nested.

\subsection*{Simple TC-word semantics for \MPDA}
\label{stc-tpda}

We define the semantics for  \MPDA in terms of simple \TCWs. 
Let $n$ denote the number of stacks.  
A simple \TCW $\stcw=(V,\procrel,(\matchrel^I)_{I\in\mathcal{I}},\lambda)$
is said to be generated or accepted by a \MPDA $\Sys$ if there is an accepting
abstract run $\rho=(s_0,\gamma_1,a_1,\op_1,R_1,s_1)$
$(s_1,\gamma_2,a_2,\op_2,R_2,s_2)\cdots$ $(s_{m-1},\gamma_m,a_m,\op_m,R_m,s_m)$
of $\Sys$ such that $s_m\in F$ is a final state.

Let $\stcw=(V,\procrel,(\matchrel^I)_{I\in\mathcal{I}},\lambda)$
 be a simple \TCW. Recall that
${<}={\procrel}^+$ is the transitive closure of the successor relation.  We say
that \stcw is $k$-round \emph{well timed} with respect to a set of clocks $X$
and stacks $1\leq s\leq n$ if for each $I\in \cI$, the $\matchrel^I$ relation for timing constraints
can be partitioned as ${\matchrel^I}=\biguplus_{1\leq s\leq
n}{\matchrel^{s\in I}}\uplus\biguplus_{x\in X}{\matchrel^{x\in I}}$ where
\begin{enumerate}[label=$(\mathsf{T}'_{\arabic*})$,ref=$\mathsf{T}'_{\arabic*}$]
  \item\label{item:Tp1}
  for each $1\leq s\leq n$,
  the relation $\matchrel^s=\bigcup_{I\in \cI}\matchrel^{s\in I}$ corresponds to the matching push-pop events of stack $s$, 
  hence it is well-nested: for all $i\matchrel^s j$ and $i'\matchrel^s j'$, if 
  $i<i'<j$ then $i'<j'<j$, see Figure~\ref{fig:well-timed}.
  
  \noindent
  Moreover, $\stcw$ consists of at most $k$ rounds, i.e., we have
  $V=V_1\uplus\cdots\uplus V_k$ with $V_i\times V_j\subseteq{<}$ for all $1\leq
  i<j\leq k$.  And each $V_\ell$ is a round, i.e.,
  $V_\ell=V_\ell^1\uplus\cdots\uplus V_\ell^n$ with $V_\ell^s\times
  V_\ell^t\subseteq{<}$ for $1\leq s<t\leq n$ and push pop events of $V_\ell^s$
  are all on stack $s$ (for all $i\matchrel^t j$ with $t\neq s$ we have 
  $i,j\notin V_\ell^s$).
  
  \item\label{item:Tp2}
  An $x$-reset block is a maximal consecutive sequence $i_1\lessdot\cdots\lessdot
  i_n$ of positions in the domain of the relation $\matchrel^x=\bigcup_{I\in \cI}\matchrel^{x\in I}$. 
  For each $x\in X$, the relation $\matchrel^x$ corresponds to the timing
  constraints for clock $x$ and  is well-nested: for all $i\matchrel^x j$ and $i'\matchrel^x j'$, if
  $i<i'$ are in the same $x$-reset block, then $i<i'<j'<j$. 
  Each guard should be matched with the closest 
  reset block on its left: for all $i\matchrel^x j$ and $i'\matchrel^x j'$, if  
  $i<i'$ are not in the same $x$-reset block then $j<i'$.
\end{enumerate}

It is easy to check that the simple \TCWs defined by a $k$-\MPDA are well-timed, 
i.e., satisfy the properties above.

We denote by $\STCW(\Sys)$ the set of simple \TCWs generated by $\Sys$.  
The language of $\mathcal{L}(\Sys)$ is the set of \emph{realizable} simple \TCWs in 
$\STCW(\Sys)$. Given a bound $k$ on the number of rounds, we denote by 
$\STCW(\Sys,k)$ the set of simple \TCWs generated by runs of \Sys using at most 
$k$ rounds. We let $\mathcal{L}(\Sys,k)$ be the corresponding language.
Given a \MPDA $\Sys$, we show that all simple \TCWs in $\STCW(\Sys,k)$ have
bounded split-width.  Actually, we will prove a slightly more general result.
We first identify some properties satisfied by all simple \TCWs generated by a
\MPDA, then we show that all simple \TCWs satisfying these properties have
bounded split-width.
Considering $Y=X \cup \{\zeta\}$, 

\begin{restatable}{lemma}{lemMPDAbound}\label{MPDA-bound}
  A $k$-round well-timed simple \TCW using $n>1$ stacks has split-width at most
  $K=\max(kn+2(kn-1)|Y|,(k+2)(2|Y|+1))$.
\end{restatable}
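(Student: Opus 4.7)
The plan is to extend the Eve strategy from Lemma~\ref{TPDA-bound} (for single-stack TPDA) to handle both multiple stacks and bounded rounds. The two terms in the bound correspond to two different playing regimes, one of which will always give a winning strategy within the claimed width.

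First, I would exploit the context structure of a $k$-round well-timed TCW. By condition~\eqref{item:Tp1}, the TCW decomposes into at most $kn$ consecutive contexts, and within each context only a single stack is operated on, with push-pop edges well-nested in that stack. Eve's first-phase strategy is to detach the TCW at all $kn-1$ context boundaries up-front. Cutting a boundary naively would create a problem for cross-context clock constraints: a reset event that lies in an earlier context may be the source of an $x$-edge whose target lies in a later context. So at every boundary Eve must also detach, for each clock $x\in Y$, the left factor of the $x$-reset block whose resets have timing constraints crossing the boundary (exactly as is done for a push event in the proof of Lemma~\ref{TPDA-bound}, see Figure~\ref{fig:push-pop-split}). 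Each boundary then costs at most $2|Y|$ auxiliary splits (at most two per clock, and often fewer when a reset hole is already present), giving the first bound $kn+2(kn-1)|Y|$ as the maximal number of simultaneously alive blocks after phase one.

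In the second phase, each resulting context-block is a \emph{single-stack} well-timed split-TCW enriched with some already-detached reset blocks, so Eve can finish by running the Lemma~\ref{TPDA-bound} strategy on each context independently. Since Adam must choose exactly one connected component at each of his moves, the two phases never need to be accounted for additively for a single play: once Adam has chosen a context, the pre-existing reset fragments in the other contexts disappear from the active position. An adaptation of Claim~\ref{lem:tpda}'s invariant (\ref{item:I1}--\ref{item:I2}), now allowing starting reset blocks to come from context boundaries rather than only from splitting at pushes, carries the argument through.

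The second bound $(k+2)(2|Y|+1)$ arises from an alternative regime in which Eve prefers to play the pushdown strategy directly and to split \emph{across} rounds only when cheaper. Since the push-pop edges of distinct stacks within one round are mutually non-crossing, the Lemma~\ref{TPDA-bound} invariant yields $2|Y|+1$ blocks per round, and recursive splitting across round boundaries adds a factor of at most $k+2$ (accounting for the $k$ rounds plus the two ``sides'' of the current boundary, exactly as $4|Y|+2=2(2|Y|+1)$ in the TPDA case). Eve chooses whichever of the two regimes is smaller on the current input, yielding the $\max$ in the stated bound.

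The main obstacle will be the bookkeeping: verifying carefully, as in the proof of Claim~\ref{lem:tpda}, that the context-and-reset invariant is preserved by each of Eve's moves, and in particular that detaching reset-block fragments at context boundaries creates connected components that still satisfy the invariant so the induction can proceed. The combinatorics of counting blocks across rounds, contexts, and per-clock reset fragments is the delicate part; the underlying strategy is a natural composition of the single-context TPDA strategy with a coarser round/context-level split.
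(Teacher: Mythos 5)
Your first phase matches the paper's strategy exactly: cut after each of the $kn$ contexts $V_i^p$ guaranteed by \eqref{item:Tp1}, and for every clock $x\in Y$ and every context other than the first, detach with at most two extra splits the left factor of the relevant $x$-reset block whose constraints cross into that context. This correctly yields the first term $kn+2(kn-1)|Y|$. The problem is everything after that.

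The connected components produced by this phase are \emph{not} single contexts: push--pop edges of stack $q$ join the $k$ contexts $V_1^q,\ldots,V_k^q$ across rounds, so Adam's choice is a per-stack split-\TCW $\stcw^q$ consisting of $k$ context fragments \emph{together with} up to $k|Y|$ hanging reset blocks (the detached fragments whose timing constraints are checked inside stack $q$'s contexts do not ``disappear from the active position''---they remain attached to $\stcw^q$ by $\matchrel$ edges). This component already has up to $k(2|Y|+1)$ blocks, and running the Lemma~\ref{TPDA-bound} game on it adds up to $2(2|Y|+1)$ further splits; the second term $(k+2)(2|Y|+1)$ is precisely this \emph{additive} accounting within the second phase of the one strategy, and the $\max$ in the statement is the maximum over the two phases. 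Your claim that ``the two phases never need to be accounted for additively'' is therefore exactly backwards, and your alternative derivation of the second term does not work: ``Eve chooses whichever of the two regimes is smaller'' would yield a $\min$, not a $\max$; and the premise that push--pop edges of distinct stacks are mutually non-crossing within the round structure is false (a stack-$1$ edge from $V_1^1$ to $V_2^1$ and a stack-$2$ edge from $V_1^2$ to $V_2^2$ cross, which is why the stacks must be separated before the single-stack invariant of Claim~\ref{lem:tpda} can be invoked at all). As written, the proposal establishes the first term but gives no valid argument for the second.
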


Notice that if we have only one stack ($n=1$) hence also one round ($k=1$) the 
bound $4|Y|+2$ on split-width was established in Claim~\ref{lem:tpda}.

\begin{proof}
  Again, the idea is to play the split-game between \emph{Adam} and \emph{Eve}.
  \emph{Eve} should have a strategy to disconnect the word without introducing
  more than $K$ blocks.  The strategy of \emph{Eve} is as follows: Given
  the $k$-round word $w$, \emph{Eve} will break it into $n$ split-\TCWs.  The
  first split-\TCW only has stack 1 edges, and the second has stack edges
  corresponding to stack 2, etc.\ up to the last split-\TCW for stack $n$.  Each
  split-\TCW will have at most $k|Y|$ holes and can now be dealt with as we did
  in the case of TPDA.   The only thing to calculate is the number of cuts required in isolating each  split-\TCWs, which we do in the rest of the proof below.

  \subparagraph{\bf{Obtaining the $n$ split-\TCWs containing only stack $p$
  edges}} Since we are dealing with $k$-round \TCWs, we know that the stack
  operations follow a nice order : stacks 1, \dots, $n$ are operated in order
  $k$ times.  More precisely, by \eqref{item:Tp1}, the set $V$ of points of the
  simple \TCW \stcw can be partitioned into $V=V_1^1\uplus\cdots\uplus V_1^n
  \uplus\cdots\uplus V_k^1\uplus\cdots\uplus V_k^n$ such that
  $V_1^1<\cdots<V_1^n<\cdots<V_k^1<\cdots<V_k^n$ and $V_1^p\cup\cdots\cup V_k^p$
  contains all and only stack operations from stack $p$.  \emph{Eve}'s strategy
  is to separate all these sets, i.e., cut just after each $V_i^p$.  This
  results in $kn$ blocks.  This will not disconnect the word if there are edges
  with clock timing constraints across blocks, i.e., from some $V_i^p$ to some
  other block $V_j^q$ on the right with $p\neq q$.
  
  Fix some clock $x\in Y$ and assume that there are timing constraints which are
  checked in block $V_j^q$ and are reset in some block on the left.  All these
  crossing over timing constraints for clock $x$ come from some block $V_i^p$
  containing the last reset block of clock $x$ on the left of $V_j^q$.  With at
  most two cuts, we detach the consecutive sequence of resets $R_j^q(x)$ which
  are checked in $V_j^q$.  Doing this for every clock $x\in Y$ and every block
  $V_j^q$ except $V_1^1$, we use at most $2(kn-1)|Y|$ cuts.  So now, we have
  $kn+2(kn-1)|Y|$ blocks.  

  The split-\TCW $\stcw^q$ for stack $q$ consists of the reset blocks of the
  form $R_j^q(x)$ ($1\leq j\leq k$, $x\in Y$) together with the split blocks
  $W_j^q=V_j^q\setminus\bigcup R_i^p(y)$ where the union ranges over $i\geq j$,
  $p\neq q$ and $y\in Y$.  The split-\TCW $\stcw^q$ is disconnected from the
  other split-\TCWs $\stcw^p$ with $p\neq q$.  Moreover, we can check that, by
  removing the reset blocks $R_i^p(y)$ with $p\neq q$, we have introduced
  at most $k'|Y|$ holes in $W_1^q,\cdots,W_k^q$, where $k'=k$ if $q<n$ and 
  $k'=k-1$ if $q=n$. So the number of blocks of $\stcw^q$ is at most 
  $k|Y|+k+k'|Y|\leq k(2|Y|+1)$.
  
  Notice that if $k=1$ and $n>1$ then we have at most $2|Y|+1$ blocks in 
  $\stcw^q$ with $q<n$ and $|Y|+1$ blocks in $\stcw^n$.

  We apply now the TPDA game on each split-\TCW $\stcw^q$ using at most $4|Y|+2$
  extra cuts.  We obtain a split bound of $K=\max(kn+2(kn-1)|Y|,(k+2)(2|Y|+1))$.
\end{proof}

Having established a bound on the split-width for \MPDA restricted to $k$
rounds, we now discuss the construction of tree automata \BKMV and \AKMS 
 when the underlying system is a \MPDA. As a first step, we
keep track of the current round (context) number in the finite control.  This makes sure that the tree automaton only accepts runs using at most $k$-rounds. The validity check (and hence, \BKMV)  is as before.   
The only change pertains to the automaton that checks correctness of the underlying run, namely, $\AKMS$, as we need to handle $n$ stacks (and $k$ rounds) instead of the single stack.  

\subsection*{Tree Automata Construction for MultiStack and Complexity}
\label{app:mpdatree}

Given a \MPDA $\Sys=(S, \Sigma, \Gamma, X, s_0, F, \Delta)$, we first
construct a \MPDA $\Sys'$ that only accepts runs using at most $k$-rounds.
The idea behind constructing $\Sys'$ is to easily keep track of the $k$-rounds
by remembering in the finite control of $\Sys'$, the current round number and
context number.  The initial states of $\Sys'$ is $(s_0,1,1)$.  Here 1,1
signifies that we are in round 1, and context 1 in which operations on stack 1
are allowed without changing context.  The states of $\Sys'$ are
$\{(s,i,j) \mid 1 \leq i \leq k, 1 \leq j \leq n, s \in L\}$.

Assuming that $(s, \gamma, a, \op, R, s') \in \Delta$ is a transition of 
\Sys, then the transitions $\Delta'$ of $\Sys'$ are as follows:
\begin{enumerate}
  \item $((s,i,j), \gamma, a, \op, R, (s',i,j))\in\Delta'$ if 
  $\op$ is one of $\nop$, $\push^j_c$ or $\pop^j_{c\in I}$,

  \item $((s,i,j), \gamma, a, \op, R, (s',i,h))\in\Delta'$ if
  $j<h$ and $\op$ is one of $\push^h_c$ or $\pop^h_{c\in I}$,

  \item $((s,i,j), \gamma, a, \op, R, (s',i+1,h))\in\Delta'$ if
  $h<j$ and $\op$ is one of $\push^h_c$ or $\pop^h_{c\in I}$.

\end{enumerate}
The final states of $\Sys'$ are of the form $\{(s,i,j) \mid s\in F\}$.  It
can be shown easily that accepting runs of $\Sys'$ correspond to accepting 
$k$-round bounded runs of \Sys.

Now, given the \MPDA $\Sys'$, the tree automaton \BKMV that checks
for validity and realizability are exactly as in Theorem \ref{thm:AKMV}.

The only change pertains to the automaton that checks correctness of the
underlying run.  The tree automaton for the underlying system, \AKMS stores the set $E^s$ of pairs $(i,j)$
of left endpoints of \emph{distinct} blocks such that there is at least one
push-pop edge pertaining to stack $s$ from $\mathsf{Block}(i)$ to $\mathsf{Block})(j)$.   The other change is in 
the transitions of \AKMS : the transitions   $\add{i}{j}{\matchrel I}((i,a^1)\oplus(j,a^2))$
 in Table \ref{tab:AKMSb} will now differ, since this checks the correctness 
 of well-nesting of stack edges when we have only one stack.  This transition will be replaced with 
  $\add{i}{j}{\matchrel I,s}((i,a^1)\oplus(j,a^2))$, which also talks about the particular stack $s$. 
Thus, 
instead of one set $E$ as in the case of TPDA, we require $n$ distinct sets to
ensure well-nesting for each stack edge.  Secondly, we need to ensure that the
$k$-round property is satisfied.  Rather than doing this at the tree automaton
level, we have done it at the \MPDA level itself, by checking this in $\Sys'$.
This blows up the number of locations by $nk$, the number of stacks and
rounds.  Thus, the number of states of the tree automaton \AKMS that checks
correctness when the underlying system is a $k$-round \MPDA is hence 
$\leq (nk|\Sys|)^{\cO(K)}\cdot 2^{\cO(nK^2(|X|+1))}$, where $K=(4nk+4)(|X|+2)$.

\begin{restatable}{proposition}{propAKMSMP}\label{prop:AKMSMP}
  Let \Sys be a $k$-round multistack timed automaton of size $|\Sys|$ (constants
  encoded in unary) with $n$ stacks and set of clocks $X$.  Then, we can build a
  tree automaton \AKMS of size $(nk|\Sys|)^{\cO(K)}\cdot 2^{\cO(nK^2(|X|+1))}$
  such that $\Lang(\AKMS) {=} \{\tau{\in}\Lang(\AKMV) {\mid}
  \sem{\tau}\in\STCW(\Sys)\}$.
\end{restatable}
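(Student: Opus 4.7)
The plan is to adapt the construction of \AKMS from Theorem~\ref{prop:AKMS} to the multistack setting by essentially two orthogonal modifications: first, enforcing the round-bound $k$ at the level of the system $\Sys$ rather than at the level of the tree automaton, and second, refining the push-pop bookkeeping component $\pup$ to one such component per stack. For the first, I would define a product system $\Sys'$ whose control locations are $S\times\{1,\dots,k\}\times\{1,\dots,n\}$, so that $(s,i,j)$ records that we are in round $i$ and currently operating on stack $j$. A \nop transition leaves $(i,j)$ unchanged; a transition with operation on stack $h$ leaves $i$ unchanged if $h\geq j$ (incrementing $j$ to $h$), and increments $i$ (and resets the context) if $h<j$. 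Any accepting run of $\Sys'$ corresponds to a $k$-round run of $\Sys$ and conversely, so it suffices to construct the tree automaton for $\Sys'$. This accounts for the $(nk|\Sys|)$ factor in the bound.

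Next, I would lift the single push-pop set $\pup$ of the single-stack construction to a tuple $(E^1,\dots,E^n)$ of sets of pairs of colors of left endpoints of distinct blocks, where $(i,j)\in E^s$ means that there is at least one push-pop edge for stack $s$ from $\mathsf{Block}(i)$ to $\mathsf{Block}(j)$. The atomic rule for matching push-pops $\add{i}{j}{\matchrel I}((i,a^1)\oplus(j,a^2))$ must be refined into a family of rules $\add{i}{j}{\matchrel I,s}((i,a^1)\oplus(j,a^2))$, one for each stack $s$, guessing matching push/pop transitions $\delta^1,\delta^2$ of $\Sys'$ that both operate on stack $s$; this transition populates $E^s=\{(i,j)\}$ and leaves $E^t=\emptyset$ for $t\neq s$. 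The well-nesting check in the $\oplus$ rule is performed independently per stack: for all $(i,j)\in E^s$ and $(k,\ell)\in E^s$, if $i<k<j$ then $\ell\leq j$. The remaining bookkeeping ($\src$, $\tgt$, $\clr$, $\rg$, and the reset/guard matching) is inherited verbatim from the TPDA construction, since clocks are global across stacks and their treatment is oblivious to the stack operation performed.

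Correctness is then essentially the same as in the proof of Theorem~\ref{prop:AKMS}: a run of \AKMS on $\tau\in\Lang(\BKMV)$ yields, via the transitions guessed at the leaves, an abstract run of $\Sys'$; the per-stack well-nesting check on $E^s$ ensures that push-pop edges are well-nested on each stack separately (which is precisely \eqref{item:Tp1} for \MPDA); and the $\clr$/$\rg$ invariant, unchanged from the single-stack case, ensures that each clock guard refers to the closest prior reset. Conversely, any \STCW in $\STCW(\Sys)$ using at most $k$ rounds comes with an abstract run of $\Sys'$, and guessing the correct micro-transition at each leaf yields an accepting run of \AKMS on any split-decomposition of the \STCW.

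For the complexity, the state space is the product of a $\BKMV$ state (which has $M^{\cO(K)}$ states), the control locations of $\Sys'$ (of size $nk|\Sys|$) stored in $\src$ and $\tgt$ (each a map from $[K]$ to locations, contributing $(nk|\Sys|)^{\cO(K)}$), the $\clr$ and $\rg$ maps (contributing $2^{\cO(K^2|X|)}$), and the tuple $(E^1,\dots,E^n)$ of push-pop sets (contributing $2^{\cO(nK^2)}$). Multiplying these gives the stated bound $(nk|\Sys|)^{\cO(K)}\cdot 2^{\cO(nK^2(|X|+1))}$. The main obstacle I anticipate is purely notational: verifying carefully that, since blocks may now interleave operations of several stacks along a single process edge (once contexts are contracted inside $\Sys'$), the update of $E^s$ under $\add{i}{j}{\procrel}$ (merging blocks) still respects per-stack well-nesting; this follows, as in the single-stack case, from the observation that merging only shrinks the interval of left endpoints between the two endpoints of any stack edge.
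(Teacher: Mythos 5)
Your proposal is correct and follows essentially the same route as the paper: the round bound is enforced by moving to a product system $\Sys'$ with locations $S\times[k]\times[n]$ tracking the current round and context, and the tree automaton is the TPDA construction with the single push-pop set $\pup$ replaced by per-stack sets $E^1,\dots,E^n$ (populated by stack-annotated atomic rules and checked for well-nesting independently at $\oplus$), yielding the stated size bound. The paper's own argument adds nothing beyond what you describe.
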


Using arguments similar to  Theorem \ref{thm:final-sys}, we obtain
\begin{theorem}
\label{thm:final}
  Checking emptiness for $k$-round \MPDA is decidable in  \textsc{ExpTime}.
\end{theorem}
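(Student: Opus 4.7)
The plan is to follow the same template as Theorem~\ref{thm:final-sys}: reduce emptiness of the $k$-round \MPDA $\Sys$ to emptiness of a tree automaton of exponential size, then invoke the \textsc{PTime} emptiness algorithm for tree automata.

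First, I would argue the reduction. Given $\Sys$ with $n$ stacks, clocks $X$, and round bound $k$, Lemma~\ref{MPDA-bound} (applied to $Y=X\cup\{\zeta\}$) gives a uniform split-width bound $K=\max(kn+2(kn-1)|Y|,\,(k+2)(2|Y|+1))$, which is polynomial in $|X|$, $n$, and $k$. So every realizable \STCW in $\STCW(\Sys,k)$ lies in $\KMTCW$, where $M$ is one more than the maximal constant appearing in $\Sys$. By Corollary~\ref{cor:BKMV} and Proposition~\ref{prop:AKMSMP}, the intersection $\Lang(\BKMV)\cap\{\tau\mid\sem{\tau}\in\STCW(\Sys)\}=\Lang(\AKMS)$ is non-empty iff there is a monotonic $(2K,M)$-\STT whose semantics is a realizable simple \TCW in $\STCW(\Sys,k)$. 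Together with Eve's strategy from the proof of Lemma~\ref{MPDA-bound} (which, as in Lemma~\ref{lem:split-tcw-stt}, yields such an \STT from any realizable behaviour), this gives $\Lang(\Sys,k)\neq\emptyset \iff \Lang(\AKMS)\neq\emptyset$, exactly mirroring Corollary~\ref{thm:AKMS}.

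Second, I would bound the size of \AKMS. By Proposition~\ref{prop:AKMSMP} the number of states is $(nk|\Sys|)^{\cO(K)}\cdot 2^{\cO(nK^2(|X|+1))}$. Since $K$ is polynomial in $|X|$, $n$, $k$ and $M\leq|\Sys|$ (with constants encoded in unary, or at worst exponentially under binary encoding, which still keeps the overall count within a single exponential because the exponent depends on $\log M$ only through the state count of \BKMV), the size of \AKMS is at most singly exponential in the size of the input $(\Sys,k)$.

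Finally, I would invoke the fact that non-emptiness of a (bottom-up) finite tree automaton can be decided in polynomial time in the number of states and transitions. Applying this to \AKMS, whose size is singly exponential in $|\Sys|$, $n$, $k$ and $|X|$, yields an \textsc{ExpTime} procedure for checking emptiness of $k$-round \MPDA. There is no real obstacle here since all the heavy lifting has been done: the split-width bound from Lemma~\ref{MPDA-bound} controls $K$, and the two tree-automaton constructions of Sections~\ref{sec:AKMV} and~\ref{sec:TA-sys} (adapted to multistack by tracking round/context in the finite control as discussed just before Proposition~\ref{prop:AKMSMP}) give the exponential state bound directly. The only point that deserves a word of care is ensuring that the product construction combining \BKMV with the system-automaton does not push the complexity beyond \textsc{ExpTime}; but since both factors are already singly exponential, so is their product.
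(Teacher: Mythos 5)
Your proposal is correct and follows essentially the same route as the paper, which itself only says ``using arguments similar to Theorem~\ref{thm:final-sys}'': bound the split-width via Lemma~\ref{MPDA-bound}, build the singly-exponential tree automaton \AKMS via Corollary~\ref{cor:BKMV} and Proposition~\ref{prop:AKMSMP}, and apply the \textsc{PTime} emptiness check for tree automata. You have merely spelled out the details the paper leaves implicit, and your accounting of where the constant $M$ and the polynomial bound on $K$ enter the state count is accurate.
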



\section{Conclusion}
\label{sec:disc}

The main contribution of this paper is a technique for analyzing timed systems
via tree automata.  This is a new approach, which is quite different from all
existing approaches, most of which go via the region/zone based techniques for
timed automata.  The hardest part of our approach, i.e., checking realizability
is oblivious of the underlying timed system and hence we believe that this
technique can be applied uniformly to a variety of timed systems.  While, we
have made a few simplifying assumptions to best illustrate this method, our
technique can easily be adapted to remove many of these.  For instance, diagonal
constraints of the form $x-y\in I$ can be handled easily by adding matching
edges.  For a constraint $x-y\in [1,5]$, we add an edge between the last reset
of $x$ and last reset of $y$ with $[1,5]$ interval.  Thus, we can check the
diagonal constraint at the time of last reset.

As future work, we would also like to extend our results to other restrictions
for $\MPDA$ such as bounded scope and phase.  This would only require us to
prove a bound on the split-width and modify the system automaton appropriately
to handle the abstract behaviors generated by such systems.  Our techniques
could also be applied to the more general model \cite{hscc15} of recursive
hybrid automata.  Another interesting future work is to use our technique to go
beyond reachability and show results on model checking for timed systems.  While
model-checking against untimed specifications is easy to obtain with our
approach, the challenge is to extend it to timed specifications.  We also see a
strong potential to investigate the emptiness problem for classes of alternating
timed automata and hybrid automata.  At the very least, we would obtain new
width-based under-approximations which may lead to new decidability results.


\bibliographystyle{plain}
\bibliography{ref}

\end{document}